\documentclass[letterpaper,11pt]{article}
\usepackage{fullpage}
\usepackage[ansinew]{inputenc}
\usepackage{verbatim}
\usepackage{graphicx}
\usepackage{tabularx}
\usepackage{array}
\usepackage{delarray}
\usepackage{dcolumn}
\usepackage{float}
\usepackage{amsmath}
\usepackage{psfrag}
\usepackage{booktabs}
\usepackage{multirow,hhline}
\usepackage{amstext}
\usepackage{pstricks}

\usepackage{pst-plot}
\usepackage{amssymb}
\usepackage{fancyhdr}
\usepackage{hyperref}
\usepackage{setspace}
\usepackage{amsthm}
\usepackage{dsfont}
\usepackage{adjustbox}
\RequirePackage[round,authoryear,longnamesfirst]{natbib}
\bibliographystyle{chicago}
\usepackage{makeidx}
\usepackage{tikz}
\usetikzlibrary{arrows,calc}
\usepackage{mdframed}
\usepackage{lipsum} 
\usepackage{chapterbib}
\usepackage{url}
\usepackage{lscape}
\hypersetup{citecolor=true,colorlinks=true,allcolors=blue}
%

\makeindex
\onehalfspacing

\numberwithin{equation}{section}
\newtheorem{lem}{Lemma}[section]
\newtheorem{thm}{Theorem}[section]
\newtheorem{cor}{Corollary}[section]

\newtheorem{ass}{Assumption}
\newtheorem{prop}{Proposition}[section]

\renewcommand{\citep}[1]{\citeauthor{#1}, \citeyear{#1}}

\newcommand{\indep}{\perp\!\!\!\perp}

\newcommand{\convP}{\stackrel{p}{\longrightarrow}}
\newcommand{\convD}{\rightsquigarrow}

\newcommand{\eps}{\varepsilon}
\renewcommand{\epsilon}{\varepsilon}
\DeclareMathOperator*{\argmax}{arg\,max}
\DeclareMathOperator*{\argmin}{arg\,min}
\makeatletter
\newcommand*{\rom}[1]{\expandafter\@slowromancap\romannumeral #1@}
\makeatother

\title{Quasi-Bayesian Inference for Production Frontiers\thanks{This paper was previously circulated under the title ``Simulation-based Estimation and Inference of Production Frontiers". First draft: September, 2017. We thank the editor, Jianqing Fan, an associate editor and a referee for their helpful and insightful comments. We are grateful to Dennis Kristensen, Ulrich M\"{u}ller, Andriy Norets, Peter Phillips, and Valentin Zelenyuk as well as participants of 2017 HU-HUE-SMU Tripartite Conference on Econometrics and 2018 China Meeting of the Econometrics Society for valuable suggestions.  Liu acknowledges the financial support from the National Natural Science Fundation of China (No.72003171). Zhang acknowledges
		the financial support from Singapore Ministry of Education Tier 2 grant under grant MOE2018-T2-2-169 and the Lee Kong Chian fellowship.}\\ \vspace{2mm}
}
\author{ Xiaobin Liu\thanks{School of Economics, Academy of Financial Research, and Institute for Fiscal Big-Data \& Policy of Zhejiang University.\ E-mail~address: liuxiaobin@zju.edu.cn.}  \and Thomas Tao Yang\thanks{%
		Australian National University.\ E-mail~address: tao.yang@anu.edu.au.} \and Yichong Zhang\thanks{%
		The corresponding author. Singapore Management University.\ E-mail~address: yczhang@smu.edu.sg.} }

\begin{document}
	\maketitle
	\begin{abstract}
		This paper proposes to estimate and infer the production frontier by combining multiple first-stage extreme quantile estimates via the quasi-Bayesian method. We show the asymptotic properties of the proposed estimator and the validity of the inference procedure. The finite sample performance of our method is illustrated through simulations and an empirical application.
		
		\medskip
		\textbf{Keywords: Approximate Bayesian Computation, Extreme Value Theory, Fixed-k Asymptotics}

	\end{abstract}
	\newpage
	\section{Introduction}
	\label{sec:intro}
	The concept of production frontier (or data envelope) arises naturally in and applies to many fields such as manufacturing, health care, transportation, education, banking, public services, and portfolio management. \cite{G04} provide a comprehensive survey on the topic. However, the estimation and inference of the production frontier are complicated by the fact that the parameter of interest is on the boundary.

	In this article, we combine multiple extreme quantile estimates and construct a point estimate and confidence interval for the production frontier via the quasi-Bayesian method. We treat the first-stage extreme quantile estimates and their joint asymptotic distribution as observations and the corresponding likelihood, respectively. Then, we put a prior on the production frontier, draw from the posterior distribution by Markov Chain Monte Carlo (MCMC) method, and construct the point estimator and confidence interval.

	The quasi-Bayesian inference is first considered by \cite{BY69} and \cite{IH13}.  Recently, \cite{CH03}, \cite{M13}, \cite{JPW15}, \cite{Y15}, \cite{FN15}, and \cite{CCOT16} apply the method in the context of M-estimations, misspecified MLE, Maximum-score type estimations, threshold regressions, GMM, and partially identified models, respectively. \cite{CGHK15} justify the use of kernel regression instead of the MCMC method to make inference in the GMM framework. We differ from the previous literature by applying the method to the \textit{first-stage estimates} rather than the original observations. We treat a finite number of (properly scaled) first-stage estimates as new observations and conduct quasi-Bayesian estimation and inference. We mainly treat our method as an estimator-combination device. First, it is robust to certain amount of outliers as it combines extreme quantiles, rather than using the sample maximum of feasible outputs. Second, it can simultaneously produce point estimates and confidence intervals. Since extreme quantile estimators are not asymptotically normal, the standard bootstrap inference does not control size. The quasi-Bayesian approach provides an asymptotically valid alternative. Third, our method can automatically correct the downward bias between the extreme quantiles and the production frontier. It has good finite-sample performance even in samples with small and moderate sizes, as illustrated in our simulation study.

	There is a vast literature on the estimation and inference of production frontiers. \cite{Deprins84} first introduce the free-disposal hull (FDH) estimator. Its asymptotic properties have been studied by \cite{PSW00}, \cite{DFS10}, and \cite{DSW17}. Assuming convexity of the production frontier, \cite{KPS98} consider the data envelopment analysis (DEA) estimator. The asymptotic properties of DEA estimator have been investigated by \cite{KPS98}, \cite{G99},  \cite{J04}, \cite{JP06}, \cite{KSW08}, \cite{PJS10}, and \cite{KSW15}. However, neither the FDH nor DEA estimator is robust to any outliers. In addition, the inference of the FDH estimator requires estimating the normalizing rate, while a valid inference for the DEA estimator is still lacking, to the best of our knowledge. Recognizing those drawbacks, \cite{CFS02} and \cite{ADT05} suggest estimating an expected frontier, which does not envelope the data. \cite{DFS10}, \cite{DFS12}, and \cite{DGG14} propose to first estimate intermediate quantiles, and then extrapolate them to the boundary. Recently, \cite{JMR14} consider nonparametric estimations of data boundary by adaptive kernel smoothing and obtain the optimal rate of convergence. \cite{DNP16} study the global fit of boundary by constrained polynomial splines and obtain the asymptotic rate of global convergence. Although we only consider the point-wise estimation as in \cite{JMR14}, we complement both \cite{JMR14} and \cite{DNP16} by establishing the distributional theory and valid inference procedure for our frontier estimator. Overall, \cite{DLN17} provide an excellent and up-to-date literature review on the estimation and inference of the production frontier. 
	
	\cite{BHPW04}, \cite{CF11}, and \cite{Z16} study the inference of extreme quantiles in the contexts of percentiles, linear quantile regressions, and quantile treatment effects, respectively. Recently, \cite{MW16} study the inference of extreme quantiles by what they refer to as fixed-$k$ asymptotics. Our approach takes inspiration from their idea of treating the first-stage estimates as new observations. \cite{W19} further study the estimation of tail properties for censored or truncated data. We differ from the above papers by estimating the data boundary and adopting the quasi-Bayesian inference. \cite{WW16} study the optimal way to combine intermediate quantile estimates in the linear tail quantile regression. Since intermediate quantile estimates are asymptotically normal, the linear combination is optimal. Then, \cite{WW16} derive the optimal weights. On the contrary, we aim to combine extreme quantile estimates, which are not asymptotically normal. The optimal combination may be nonlinear. We propose to use the quasi-Bayesian method to combine these estimates.

	The rest of the article is organized as follows. Section \ref{sec:setup} sets up the model. Section \ref{sec:qr} establishes the asymptotic properties of extreme quantile estimators. Section \ref{sec:infer} investigates the asymptotic properties of our quasi-Bayesian method. Section \ref{sec:sim} examines the inference procedure on the simulated data. Section \ref{sec:app} applies the approach to an empirical application. We conclude with Section \ref{sec:concl}. All proofs are collected in the Appendix.
	
	Throughout this article, capital letters,
	such as $A$, $X$, and $Y$, denote random elements while their corresponding
	lower cases denote realizations. $C$ denotes an arbitrary positive constant that may not be the same in different contexts. For a sequence of random variables $%
	\{U_{n}\}_{n=1}^{\infty }$ and a random variable $U$, $U_{n}\rightsquigarrow
	U$ indicates weak convergence in the sense of \cite{VW96}. Convergence in probability is denoted as $U_n \convP U$.
	

	\section{Setup}
	\label{sec:setup}
	Following the definition in \cite{DNP16}, we suppose that the $n$ pairs of observations $(X_i,Y_i)$ are independently drawn from a joint density function $f(x,y)$. We can interpret $X_i \in \Re_+^p$ and $Y_i \in \Re_+$ as vectors of production factors (inputs) and a scalar output, respectively. The support $\mathbb{T}$ of the joint density $g(\cdot,\cdot)$ is assumed to be of the form 
	$$\mathbb{T} = \{(x,y)|y \leq \psi(x)\} \supset \{(x,y)|g(x,y)>0\} \quad \text{and} \quad \{(x,y)|y >\psi(x)\} \subset \{(x,y)|g(x,y)=0\},$$
	where $\psi(\cdot)$ corresponds to the locus of the curve above which the density $f$ is zero. Intuitively, we can view $\mathbb{T}$ as technology that
	$$\mathbb{T} = \{(x,y) \in \Re_+^{d_x} \times \Re_+ | x \text{ can produce }y  \}.$$
	
	Researchers observe a random sample of $\{X_i,Y_i\}_{i=1}^n$ such that for each $i = 1,\cdots,n$, $(X_i,Y_i)\in \mathbb{T}$. The parameter of interest is $\psi(x)$, the maximal achievable output for a given level of inputs, i.e.,
	$$\psi(x) = \sup\{y | (x,y) \in \mathbb{T}  \}.$$

	\begin{ass}
		$\{Y_i,X_i\}_{i=1}^n$ is i.i.d. $p_0 = \mathbb{P}(X_i\leq x) >0$, where $\leq$ inside the probability operator is pointwise. 
		\label{ass:iid}
	\end{ass}
	
	
	In addition, we follow the literature and assume the free disposability.

	\begin{ass}
		If $(x,y) \in \mathbb{T}$, then $(x',y') \in \mathbb{T}$ for any $(x',y')$ such that $x'\geq x$ (component-wise) and $y' \leq y$.
		\label{ass:FD}
	\end{ass}

	Let $F(y/x) = \mathbb{P}(Y \leq y|X \leq x)$ be the ``non-standard conditional distribution" in the production frontiers literature. Then under Assumption \ref{ass:FD}, \cite{CFS02} propose that 
	\begin{equation}
	\psi(x) = \sup\{ y \geq 0 |F(y/x)<1\}.
	\label{eq:psi1}
	\end{equation}
	
	Following \cite{ADT05} and \cite{DFS10}, we estimate the production frontier at $x$ by $\hat{q}_n(\hat{\tau}_n)$, where
	\begin{equation}
	\begin{aligned}
	\hat{q}_n(\tau) = \argmin_q \sum_{i=1}^n\rho_{\tau}(Y_i - q)\mathds{1}\{X_i \leq x \},
	\label{eq:qhat}
	\end{aligned}
	\end{equation}
	$\rho_\tau(u) = (\tau-\mathds{1}\{u \leq 0\})u$ is \citeauthor{KB78}'s (\citeyear{KB78}) check function, and $\hat{\tau}_n$ is some random sequence that is smaller than but converges to 1. Later, following \cite{DFS10}, we define $\hat{\tau}_n$ that depends on $\hat{p} := \frac{1}{n}\sum_{i=1}^n 1\{X_i \leq x\}$, which is a consistent estimator of $p_0= \mathbb{P}(X_i \leq x)$. The deterministic counterpart of $\hat{\tau}_n$ is denoted as $\tau_n$. We further denote $q(\tau_n) = F^{-1}(\tau_n/x)$ where $F^{-1}(\tau_n/x) = \inf\{y: F(y/x)\geq \tau_n\}$. We omit the dependence of $q(\tau_n)$ and $\hat{q}_n(\hat{\tau}_n)$ on $x$ for brevity as we focus on the point estimation throughout the paper. Based on this notation, $\psi(x)$, the production frontier at $x$, is just $q(1)$. Let $\hat{\tau}_n = 1 - \frac{k}{n \hat{p}}$ for some $k \in (0,\infty)$.
	


	\begin{ass}
		Suppose $k$ is not an integer.
		\label{ass:unique}
	\end{ass}
	The population counterpart of $\hat{\tau}_n$ is $\tau_n=1-\frac{k}{np_0}$. In the literature, $\tau_n$ is referred to as the extreme quantile index by \cite{CH05} and \cite{DFS10}, and as fixed-k asymptotics by \cite{MW16}. For comparison, the quantile index $\tau_n'$ is intermediate if
	\begin{equation}
	n(1-\tau_n') = k_n \rightarrow \infty \quad \text{and} \quad k_n/n \rightarrow 0.
	\label{eq:inter}
	\end{equation}
	Compared with \eqref{eq:inter}, $np_0(1-\tau_n) = k$, which does not diverge to infinite as the sample size increases. However, since $k$ can be greater than 1, we still use interior data points, rather than the maximum of the feasible outputs, for estimation and inference. Therefore, our method is robust to $\lceil k-1 \rceil$\footnote{$\lceil u \rceil$ denotes the smallest integer that is greater than or equal to $u$.} largest outliers, although it is indeed less robust than the existing inference based on the intermediate quantile estimations. The second part of Assumption \ref{ass:unique} is to guarantee that the limiting objective function of our minimization problem in \eqref{eq:qhat} has a unique minimizer. This assumption is mild because we have the freedom to choose $k$ and the integers are sparse on the real line.

	\section{Asymptotic Properties}
	\label{sec:qr}
	Before stating the regularity condition for our asymptotic results, we first introduce some definitions. We say the cumulative distribution function (CDF) $F$ \textit{belongs to the domain of attraction of type \rom{3} generalized extreme value (EV) distributions} if as $z \rightarrow  0$ and any $v>0$,
	\begin{align*}
	\frac{1-F(z_1-vz)}{1-F(z_1-z)} \rightarrow v^{-1/\xi},  	
	\end{align*}
	where $z_1=\sup\{z|F(z)<1\}$ and $\xi<0$ is the EV index.
	
	\begin{ass}
		The conditional CDF of $Y_i$ given $X_i \leq x$ belongs to the domain of attraction of type \rom{3} generalized EV distributions with the EV index $\xi_0 < 0$.
		\label{ass:ev}
	\end{ass}
	Assumption \ref{ass:ev} states that $1-F(y/x)$ decays polynomially (up to some slowing varying function, e.g., $\log(\cdot)$) as $y$ approaching $q(1)$ or equivalently, $F(y/x)$ has a Pareto-type upper tail. This assumption is common in the literature on the inference of extreme quantiles and production frontiers, e.g., \cite{CF11}, \cite{DFS10}, \cite{DFS12}, \cite{DGG14}, \cite{PSW00}, \cite{Z16}. 
	
	%
	\begin{ass}
		Let $k_0>0$ and $m>1$ be two constants. Then, $\lceil mk_0 \rceil > \lceil k_0 \rceil$, where for a non-integer $k$, $\lceil k \rceil$  is the unique integer that satisfies $k \leq \lceil k \rceil \leq k +1$.
		\label{ass:m}
	\end{ass}
	Later, we will propose a random normalization factor ($\hat{\alpha}_n$) for our first-stage extreme quantile estimates. Assumption \ref{ass:m} guarantees that the normalizing factor is well-defined. This condition is innocuous as researchers have the freedom to choose $k_0$ and $m$. We discuss the choice of $k_0$, $m$, and other tuning parameters in practice in Section \ref{sec:tuning}. 
	
	Now we are ready to describe the limiting distribution of our extreme quantile estimators. For a generic $k$ that satisfies Assumption \ref{ass:unique}, let 
	$$Z_\infty(k) = -(\sum_{i=1}^{\lceil k \rceil} \mathcal{E}_i)^{-\xi_0}, \quad Z_\infty^c(k) = Z_\infty(k) + \eta(k), \quad \text{and} \quad \tilde{Z}_\infty(k) = Z_\infty(k)/(Z_\infty(k_0) - Z_\infty(mk_0)),$$ 
	where $\{\mathcal{E}_i\}_{i \geq 1}$ is a sequence of i.i.d. standard exponential random variables and $\eta(\cdot) = (\cdot)^{-\xi_0}$.
	\begin{thm}
		Let $\hat{\alpha}_n = (\hat{q}_n(1-\frac{k_0}{n\hat{p}}) - \hat{q}_n(1-\frac{mk_0}{n\hat{p}}))^{-1}$, $\hat{\tau}_{nl} = 1-\frac{k_l}{n\hat{p}}$ for $l = 1,\cdots,L$. If Assumptions \ref{ass:iid}, \ref{ass:FD}, and \ref{ass:ev} hold, and Assumption \ref{ass:unique} holds for $k=k_0, mk_0, k_1, \cdots, k_L$, then
		\begin{equation}
		\label{eq:joint}
		\begin{aligned}
		\begin{pmatrix}
		\hat{\alpha}_n(\hat{q}_n(\hat{\tau}_{n1}) - q(1)) \\
		\vdots  \\
		\hat{\alpha}_n(\hat{q}_n(\hat{\tau}_{nL}) - q(1))\\
		\end{pmatrix} \convD \begin{pmatrix}
		\tilde{Z}_\infty(k_1) \\
		\vdots \\
		\tilde{Z}_\infty(k_L)\\
		\end{pmatrix}.
		\end{aligned}
		\end{equation}
		\label{thm:feasible}
	\end{thm}
	
	Several comments are in order. First, Theorem \ref{thm:feasible} establishes the joint asymptotic distribution of $(\hat{q}_n(\hat{\tau}_{n1}), \cdots,\hat{q}_n(\hat{\tau}_{nL}))$ which extends the univariate result established in \citet[Theorem 2.2]{DFS10}. Second, we follow \cite{BHPW04} and \cite{CF11} and use a feasible convergence rate $\hat{\alpha}_n$ that is valid without any additional assumption on the tail distribution of the feasible output. Third, \eqref{eq:joint} and the fact that $\hat{\alpha}_n \rightarrow \infty$ imply that $(\hat{q}_n(\hat{\tau}_{n1}), \cdots,\hat{q}_n(\hat{\tau}_{nL}))$ are all consistent estimates for the production frontier $q(1)$. The remaining question is how to combine these $L$ estimates to construct a valid point estimate and confidence interval for $q(1)$. In the next section, we achieve this goal by the quasi-Bayesian method.

	\section{Inference}
	\label{sec:infer}
	As pointed out by \cite{BF81} and \cite{ZK99}, the standard bootstrap inference for the extreme quantile estimators is inconsistent. Instead, we combine $L$ extreme quantile estimators via a second-stage quasi-Bayesian method to infer the production frontier. Such method is the optimal way to combine these $L$ estimates, as shown in Theorem \ref{thm:optimal} below. It is also possible to just use one extreme quantile estimator and its asymptotic distribution to make inference, which is a special case of our proposed method when $L=1$.  However, this may lose information. 
	
	Denote $\tilde{Z}_n(k_l) = \hat{\alpha}_n (\hat{q}_n(\hat{\tau}_{nl}) - q(1))$ for $\hat{\tau}_{nl} = 1-k_l/(n\hat{p})$, $l=1,\cdots,L$. Then, Theorem \ref{thm:feasible} shows
	\begin{equation*}
	\begin{aligned}
	\begin{pmatrix}
	\tilde{Z}_n(k_1) \\
	\vdots  \\
	\tilde{Z}_n(k_L)\\
	\end{pmatrix} \convD \begin{pmatrix}
	\tilde{Z}_\infty(k_1) \\
	\vdots \\
	\tilde{Z}_\infty(k_L)\\
	\end{pmatrix}.
	\end{aligned}
	\end{equation*}
	We view $(\tilde{Z}_n(k_{1}),\cdots,\tilde{Z}_n(k_{L}))$ as new observations, whose joint density is parameterized by $q(1)$ and converges to the joint PDF of $(\tilde{Z}_\infty(k_1),\cdots,\tilde{Z}_\infty(k_L))$, which is denoted as $f(\cdot;\xi_0)$. Although we cannot calculate the exact finite sample likelihood of $(\tilde{Z}_n(k_{1}),\cdots,\tilde{Z}_n(k_{L}))$, we can approximate it by $f(\cdot;\xi_0)$. Then, by putting a prior on $q(1)$, we can write down the posterior distribution and conduct quasi-Bayesian inference.\footnote{We call the method ``quasi-Bayesian" because we do not use the true finite-sample likelihood.}

	
	The quasi-Bayesian estimator $\hat{q}^{BE}$ of $q(1)$ minimizes the average risk, i.e.,
	\begin{align}
	& \hat{q}^{BE} \notag \\
	= & \argmin_q \int_{\Omega} \ell_n(q-\overline{q})f(\hat{\alpha}_n(\hat{q}_n(\hat{\tau}_{n1}) -\overline{q}),\cdots,\hat{\alpha}_n(\hat{q}_n(\hat{\tau}_{nL}) -\overline{q});\xi )\pi(\overline{q}) \phi(\frac{\xi - \hat{\xi}}{\hat{\sigma}})1\{\xi \in \Gamma\}d\overline{q}d\xi,
	\label{eq:q1}
	\end{align}
	where $\ell_n(u) = \ell(\hat{\alpha}_n u)$ is a loss function, $\pi(\cdot)$ is the prior of $q(1)$, $\Omega$ is the support of $\pi(\cdot)$ that has $q(1)$ as its interior point, $\phi(\cdot)$ is the standard normal PDF, $\hat{\sigma}$ is some (potentially random) bandwidth, and $\Gamma$ is an interval that contains $\xi_0$ as an interior point. 
	
	In \eqref{eq:q1}, we set the prior for $\xi$ as a normal distribution that has mean $\hat{\xi}$ and standard error $\hat{\sigma}$, and is truncated by support $\Gamma$. As the standard error decreases to zero, the effect of this prior will vanish asymptotically. We use this prior to capture the finite sample uncertainty (randomness) of $\hat{\xi}$. In practice, we compute $\hat{\xi}$ by the default Pickands-type method, using function \textbf{dfs\_pick} in the R package \textbf{npbr}. We refer readers to \cite{DLN17} for more details about \textbf{npbr}. The asymptotic normality of Pickands-type estimator has already been established in the literature (e.g., \cite{DD89}) under some extra conditions. This motivates us the use the Gaussian kernel. In addition, \cite{CH00} and \cite{DMZ13} have already established the validity of bootstrap inference under intermediate quantile index asymptotics, which is the same asymptotic scheme that the Pickands estimator is based on. Therefore, it is natural to construct $\hat{\sigma}$ based on the bootstrap standard error of $\hat{\xi}$. The support restriction $\Gamma$ is imposed to further regularize the finite sample behaviour of $\xi$. However, its effect is asymptotically negligible. Although we motive the prior from the asymptotic normality of $\hat{\xi}$, we require only that $\hat{\xi}$ is consistent and $\hat{\sigma} = o_p(1)$ when deriving all the results in this section. Theoretically speaking, it is also valid to just plug in the consistent estimate $\hat{\xi}$ without using the Gaussian prior. We recommend using the prior as it can improve the finite-sample coverage. We provide more details about the estimation of $\hat{\xi}$ and $\hat{\sigma}$ in Section \ref{sec:sim}. 
	
	It is also possible to consider the finite-sample maximum likelihood estimator, i.e., 
	\begin{align*}
	\hat{q}^{MLE} = \argmax_q f(\hat{\alpha}_n(\hat{q}_n(\hat{\tau}_{n1}) -q),\cdots,\hat{\alpha}_n(\hat{q}_n(\hat{\tau}_{nL}) -q);\hat{\xi} ),
	\end{align*}
	which corresponds to the mode of the posterior distribution with uninformative priors, i.e., $\hat{\sigma} = 0$, $\Gamma = \Re$, and $\pi(\cdot) = 1$. We prefer the Bayesian estimator to the MLE for three reasons: (1) the Bayesian estimation does not require optimization, (2) it is natural to use prior of $\xi$ to capture the randomness of the estimator $\hat{\xi}$, and (3) the Bayesian estimator can produce point estimate and confidence intervals simultaneously.

	Let $v = \hat{\alpha}_n(\overline{q} - q(1))$, $z = \hat{\alpha}_n(q - q(1))$, and $\hat{Z}_n^{BE} = \hat{\alpha}_n(\hat{q}^{BE} - q(1))$. Then
	$$\hat{Z}_n^{BE} = \theta_n^{BE}(\tilde{Z}_n(k_1),\cdots,\tilde{Z}_n(k_L);\hat{\xi}),$$
	where
	\begin{equation}
	\theta_n^{BE}(z_{1},\cdots,z_{L};\bar{\xi}) = \argmin_z Q_n(z,z_{1},\cdots,z_{L};\bar{\xi}),
	\label{eq:gamman}
	\end{equation}	
	\begin{align}
	\label{eq:Qn}
	& Q_n(z,z_{1},\cdots,z_{L};\bar{\xi}) \notag \\
	= & \hat{\sigma}^{-1}\int \int_{\Omega_n} \ell(z-v)f(z_1-v,\cdots,z_L-v;\xi)\pi(q(1) + v/\hat{\alpha}_n)\phi(\frac{\xi - \bar{\xi}}{\hat{\sigma}})1\{ \xi  \in \Gamma\}dvd\xi,
	\end{align}
	and $\Omega_n = \hat{\alpha}_n(\Omega-q(1))$. As $n \rightarrow\infty$, it is expected that the RHS of \eqref{eq:Qn} converges (up to some constant) to
	\begin{equation}
	\begin{aligned}
	Q_\infty(z,z_{1},\cdots,z_{L})=\int_\Re \ell(z-v)f(z_1 - v,\cdots,z_L-v;\xi_0)dv.
	\label{eq:Z1}
	\end{aligned}
	\end{equation}
	Further denote $Z_\infty^{BE} = \theta_\infty^{BE}(\tilde{Z}_\infty(k_1),\cdots,\tilde{Z}_\infty(k_L))$,
	\begin{equation}
	\theta_\infty^{BE}(z_{1},\cdots,z_{L}) = \argmin_z Q_\infty(z,z_{1},\cdots,z_{L}),
	\label{eq:gamma}
	\end{equation}	
	and
	\begin{equation}
	\begin{aligned}
	\tilde{\theta}^{BE}_t(z_1,\cdots,z_L;\bar{\xi}) = \argmin_\gamma \int_{K_t} \ell(\gamma - v)f(z_1-v,\cdots,z_L-v;\bar{\xi}) dv,
	\label{eq:thetatilde}
	\end{aligned}
	\end{equation}
	where $K_t = [-t,t]$  for $t \geq 1$.

	\begin{ass}
		\begin{enumerate}
			\item $\ell(u)$ is convex and $\ell(u) \leq C |u|^{d_1}$ for some constants $C$ and $d_1>0$.
			\item $(\lceil k_0 \rceil,\lceil mk_0 \rceil,\lceil k_1 \rceil,\cdots,\lceil k_L \rceil)$ are distinct from each other. 
			\item $\hat{\xi} \convP \xi_0$ and $\hat{\sigma} = o_p(1)$. 
			\item Let $\Gamma$ be some compact subset of $(-\infty, 0)$ such that $\xi_0$ is in the interior of $\Gamma$, and $\mathcal{N}_0$ be some open neighborhood of $\xi_0$. Then $f(z_1,\cdots,z_L;\xi)$ is continuous in $\xi$ at $\xi_0$, for any fixed $M>0$, 
			\begin{align*}
			\sup_{(z_1,\cdots,z_L) \in [-M,M]^L,\xi \in \Gamma}f(z_1-v,\cdots,z_L-v;\xi) \leq H_{1M}(v)
			\end{align*}
			and 
			\begin{align*}
			\sup_{(z_1,\cdots,z_L) \in [-M,M]^L,(\xi_1,\xi_2) \in \mathcal
				{N}_0}|f(z_1-v,\cdots,z_L-v;\xi_1) - f(z_1-v,\cdots,z_L-v;\xi_2)| \leq H_{2M}(v)|\xi_1-\xi_2|,
			\end{align*}
			where for any fixed $z$, 
			\begin{align*}
			\int |\ell(z-v)|(H_{1M}(v) + H_{2M}(v))dv < \infty. 
			\end{align*}
			\item For some constant $d_2>0$, $\sup_{\bar{\xi} \in \mathcal{N}_0}|\theta_n^{BE}(z_1,\cdots,z_L;\bar{\xi})| \leq C\sum_{l=1}^L|z_l^{d_2}|$, $\sup_{\bar{\xi} \in \mathcal{N}_0}|\tilde{\theta}_t^{BE}(z_1,\cdots,z_L;\bar{\xi})| \leq C\sum_{l=1}^L|z_l^{d_2}|$, and 
			\begin{align*}
			\sup_{v \in [-t,t]}f(z_1-v,\cdots,z_L - v,\xi_0) \leq H_{3t}(z_1,\cdots,z_L)
			\end{align*}
			such that, for any $t \geq 0$ 
			\begin{align*}
			\int_{\Re^L} (\sum_{l=1}^L|z_l^{d_2}|+t)^{d_1} H_{3t}(z_1,\cdots,z_L) dz_1\cdots dz_L < \infty. 
			\end{align*}
			\item $\pi(\cdot)$ is bounded and continuous at $q(1)$.
			\item $Q_\infty(z,\tilde{Z}_\infty(k_1),\cdots,\tilde{Z}_\infty(k_L))$ is finite over a non-empty open set $\mathcal{Z}_0$ and uniquely minimized at some random variable $Z_\infty^{BE}$ w.p.1.
		\end{enumerate}
		\label{ass:rho}
	\end{ass}

	Several comments are in order. First, Assumption \ref{ass:rho}.1 is common in quasi-Bayesian estimations, e.g., \cite{CH03} and \cite{CH04}. Both $l_1$ and $l_2$ loss functions satisfy this assumption. Second, Assumption \ref{ass:rho}.2 ensures the limiting likelihood is well-defined. Third, the consistency requirement for $\hat{\xi}$ is mild.  The bandwidth $\hat{\sigma}$ will converge to zero, which is the standard requirement for the kernel type estimation. Fourth, Assumptions \ref{ass:rho}.4 and \ref{ass:rho}.5 can be verified directly because it is possible to write down $f(z_1,\cdots,z_L;\xi)$ analytically. We provide one example in Proposition \ref{prop:PDF}. In that example, $f(\cdot;\xi)$ depends on the gamma density function, which only takes values on the positive half of the real line and has an exponential tail at $+\infty$. Fifth, unlike the standard quasi-Bayesian estimation, here we only deal with a finite sample with $L$ observations. Following the example after Theorem \ref{thm:simmain}, if $L=1$ and $\pi(\cdot)=1$, then  $$\theta_n^{BE}(z;\xi) = z - c(\xi)$$
	in which
	the $c(\xi) $'s under $l_1$ and $l_2$ loss functions are just the median and mean of the random variable with density 
	$$\frac{\int f(w;\xi+u \hat{\sigma})\phi(u)1\{u \in \Gamma_n \}du}{\int \phi(u)1\{u \in \Gamma_n \}du},$$ 
	respectively. The same comment applies to $\tilde{\theta}_t^{BE}(z;\xi)$ with density 
	\begin{align*}
	\frac{f(u;\xi)1\{u \in z-t,z+t\}}{\int_{z-t}^{z+t}f(u;\xi)du}. 
	\end{align*}
	In these cases, Assumption \ref{ass:rho}.5 holds. Sixth, Assumptions \ref{ass:rho}.1, \ref{ass:rho}.4, and \ref{ass:rho}.5 induce various integrability conditions which are necessary for applying the dominated convergence theorem. Last, Assumption \ref{ass:rho}.7 implies the limiting objective function has a unique minimizer, which is necessary for applying the argmin theorem in \cite{VW96}. This type of assumption is common in the literature of quasi-Bayesian estimations, e.g., \cite{CH03} and \cite{CH04}.
	\begin{thm}
		If Assumptions \ref{ass:iid}, \ref{ass:FD}, \ref{ass:ev}, \ref{ass:m}, and \ref{ass:rho} hold, and Assumption \ref{ass:unique} holds for $k=k_0, mk_0, k_1, \cdots, k_L$, then $\hat{Z}_n^{BE} \convD Z_\infty^{BE}.$
		\label{thm:simmain}
	\end{thm}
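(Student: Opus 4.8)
The plan is to write both $\hat{Z}_n^{BE}$ and $Z_\infty^{BE}$ as argmins of the convex criteria $Q_n$ and $Q_\infty$ in \eqref{eq:gamman}--\eqref{eq:Z1}, and then to invoke an argmin continuous-mapping argument in the spirit of \cite{VW96}. Three ingredients will be assembled: (i) joint weak convergence of the arguments $(\tilde{Z}_n(k_1),\ldots,\tilde{Z}_n(k_L),\hat{\xi},\hat{p}) \convD (\tilde{Z}_\infty(k_1),\ldots,\tilde{Z}_\infty(k_L),\xi_0,p_0)$; (ii) convergence of the objective functions, i.e.\ $Q_n(z,z_1^{(n)},\ldots,z_L^{(n)};\xi^{(n)},p^{(n)}) \to Q_\infty(z,z_1,\ldots,z_L;\xi_0,p_0)$ uniformly for $z$ in compact sets whenever $(z_1^{(n)},\ldots,z_L^{(n)},\xi^{(n)},p^{(n)}) \to (z_1,\ldots,z_L,\xi_0,p_0)$; and (iii) a unique, well-separated minimizer of the limit criterion together with enough coercivity to keep the finite-$n$ minimizers from escaping to infinity.

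For (i), Corollary \ref{cor:feasible} gives $(\tilde{Z}_n(k_1),\ldots,\tilde{Z}_n(k_L)) \convD (\tilde{Z}_\infty(k_1),\ldots,\tilde{Z}_\infty(k_L))$, and Assumption \ref{ass:rho}.3 gives $(\hat{\xi},\hat{p})\convP(\xi_0,p_0)$; since the latter limit is deterministic, Slutsky's theorem yields the joint convergence. I would then pass, via the almost-sure representation theorem, to a probability space on which the convergence holds a.s., so it suffices to prove the deterministic statement that $\theta_n^{BE}(z_1^{(n)},\ldots,z_L^{(n)};\xi^{(n)},p^{(n)}) \to \theta_\infty^{BE}(z_1,\ldots,z_L;\xi_0,p_0)$ along any such convergent sequence (continuity of $\theta_n^{BE}$, $\theta_\infty^{BE}$ in their arguments, Assumption \ref{ass:rho}.4, is what lets me match the random sequences to this deterministic claim). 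For (ii): in $Q_n$, the domain $U_n=\hat{\alpha}_n(U-q(1))$ increases to $\Re$ because $\hat{\alpha}_n\to\infty$, and the prior factors $\pi(q(1)+v/\hat{\alpha}_n)$ in the numerator and denominator both tend to $\pi(q(1))$ by Assumption \ref{ass:rho}.7 and cancel, leaving the ratio $f(z_1-v,\ldots,z_L-v;\xi,p)/\int f(z_1-v',\ldots,z_L-v';\xi,p)\,dv'$. Joint continuity of $f$ at $(\xi_0,p_0)$ (Assumption \ref{ass:rho}.6) handles the integrand, while the polynomial growth of $\rho$ (Assumption \ref{ass:rho}.1) together with the uniform exponential decay of $f$ (Assumption \ref{ass:rho}.6) supplies an integrable dominating function, so dominated convergence delivers pointwise convergence $Q_n\to Q_\infty$ in $z$. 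The truncated objects $\tilde{\theta}_t^{BE}$ on $K_t=[-t,t]$ and the growth bound Assumption \ref{ass:rho}.5 are exactly what I would use to make the tail $|v|>t$ uniformly small, upgrading pointwise convergence to uniform convergence over $z$ in compacts and over $(\xi,p)$ near $(\xi_0,p_0)$; alternatively, since $\rho$ is convex, $Q_n$ and $Q_\infty$ are convex in $z$ and pointwise convergence of convex functions is automatically locally uniform.

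For (iii) and the conclusion: Assumption \ref{ass:rho}.8 states that $Q_\infty(z,\tilde{Z}_\infty(k_1),\ldots,\tilde{Z}_\infty(k_L);\xi_0,p_0)$ is finite on a nonempty open set and is uniquely minimized at $Z_\infty^{BE}$ w.p.\ $1$, which is the well-separated minimum required for the argmin theorem. The a.s.\ growth bound Assumption \ref{ass:rho}.4 shows $|\hat{Z}_n^{BE}| \le C\sum_l |\tilde{Z}_n(k_l)^B|$, which along the almost-sure representation is eventually bounded by a fixed constant, so the minimizers lie in a fixed compact set for all large $n$ and the argmin problem can be localized to that set, where the uniform convergence from (ii) applies. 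The deterministic argmin argument (as in \cite{VW96}, Theorem~3.2.2 / Corollary~3.2.3) then gives $\hat{Z}_n^{BE}\to Z_\infty^{BE}$ a.s.\ on the representation space, hence $\hat{Z}_n^{BE}\convD Z_\infty^{BE}$ on the original space, which is Theorem \ref{thm:simmain}.

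The hard part will be (ii): establishing $Q_n\to Q_\infty$ with enough uniformity, because $Q_n$ depends on $n$ not only through the data arguments but also, nontrivially, through the random scaling $\hat{\alpha}_n$ that enters both the integration region $U_n$ and the rescaled prior $\pi(q(1)+v/\hat{\alpha}_n)$. I must show these two $n$-dependent perturbations vanish simultaneously with the convergence of $(z_1^{(n)},\ldots,z_L^{(n)},\xi^{(n)},p^{(n)})$, and must do so uniformly enough in $z$ to run the argmin theorem; this is precisely where the truncation device $K_t$, the polynomial-growth condition on $\rho$, and the uniform exponential decay of $f$ do the work, and where passing to the almost-sure representation space is the convenient way to absorb the randomness of $(\hat{\xi},\hat{p})$. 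A secondary, more routine obstacle is the coercivity needed for (iii): convexity and growth of $\rho$ together with the exponential decay of $f$ make $Q_\infty$ coercive in $z$, and Assumption \ref{ass:rho}.4 transfers this control to the finite-$n$ minimizers.
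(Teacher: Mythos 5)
Your proposal is correct and follows essentially the same route as the paper: establish pointwise convergence of the convex criteria $Q_n\to Q_\infty$ by dominated convergence (using Assumptions \ref{ass:rho}.1, \ref{ass:rho}.6, \ref{ass:rho}.7), invoke the Geyer/Knight convexity lemma (Lemma \ref{lem:convexity}) to pass from pointwise convergence of convex functions to convergence of argmins, and then combine with the weak convergence from Corollary \ref{cor:feasible} via the continuous mapping theorem. The only cosmetic difference is that you phrase the final step through an almost-sure (Skorokhod) representation, whereas the paper works with arbitrary convergent sequences plus the continuity in Assumption \ref{ass:rho}.4 and then applies the continuous mapping theorem directly; also, the uniformity and coercivity concerns you flag as "the hard part" are in fact absorbed automatically by the convexity lemma, as you yourself note in the "alternatively" clause.
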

	
	We take the special case of $L=1$ to illustrate the distribution of $Z_\infty^{BE}$. When the loss function is quadratic, i.e., $\ell(u) = u^2$, $Z_\infty^{BE}$ minimizes
	$$\int (z-v)^2f(\tilde{Z}_\infty(k)-v;\xi_0)dv.$$
	By the first-order condition and simple calculations, we obtain
	$$Z_\infty^{BE} = \tilde{Z}_\infty(k) - \mathbb{E}\tilde{Z}_\infty(k).$$
	The new limit $Z_\infty^{BE}$ is the demeaned version of the limit (i.e., $\tilde{Z}_\infty(k)$) of the original estimator, exactly because it is designed to minimize the MSE. This illustrates that our approach can automatically correct for the bias of the original estimator. Similarly, when $\ell(u) = |u|$, the quasi-Bayesian estimator is asymptotically median-unbiased, i.e., it minimizes the mean absolute deviation (MAD).
	
	Next, we confirm this property of our estimator for the general case with $L>1$. Let $\theta_n(\cdot)$ be a generic estimator, i.e., a function of data $(z_1,\cdots,z_L)$ and $\hat{\xi}$, and $K$ be a compact subset of $\Re$. Following \cite{CH03}, we denote the finite average risk of the estimator $\theta_n$ in $K$ as
	\begin{equation}
	\begin{aligned}
	AR_{\ell,K}(\theta_n) = \int_K \int_{\Re^L} \ell(\theta_n(z_1,\cdots,z_L;\hat{\xi}) -v)f(z_1-v,\cdots,z_L-v;\hat{\xi})dz_1\cdots d_{z_L}dv/\Lambda(K),
	\label{eq:risk}
	\end{aligned}
	\end{equation}
	where $\ell(\cdot)$ and $\Lambda(\cdot)$ are the loss function and the Lebesgue measure, respectively. For a generic sequence of estimators $\{\theta_n(\cdot)\}_{n \geq 1}$, the asymptotic average risk is defined as
	$$AAR_{\ell}(\{\theta_n\}) = \limsup_{t \rightarrow \infty}\limsup_{n\rightarrow \infty}AR_{\ell,K_t}(\theta_n),$$
	in which $K_t$ is defined after \eqref{eq:thetatilde}, i.e., $K_t = [-t,t]$. Recall the Bayesian estimator $\hat{Z}_n^{BE}$ is a function of first-stage estimates $(\tilde{Z}_n(k_1),\cdots,\tilde{Z}_n(k_L))$ and $\hat{\xi}$, i.e., $\hat{Z}_n^{BE} =  \hat{Z}_n^{BE} = \theta_n^{BE}(\tilde{Z}_n(k_1),\cdots,\tilde{Z}_n(k_L);\hat{\xi})$. The next theorem establishes some optimality property regarding such function (or equivalently, such way of combining first-stage estimates).  
	\begin{thm}
		If the assumptions in Theorem \ref{thm:simmain} hold, then
		$$AAR_{\ell}(\{\theta_n^{BE}\}) = \mathbb{E}\ell(Z_\infty^{BE})~a.s.$$
		In addition, let $\Theta_n$ be the collection of all estimators based on
		$(\tilde{Z}_n(k_1),\cdots,\tilde{Z}_n(k_L))$ and $\hat{\xi}$. Then
		$$\inf_{\theta_n \in \Theta_n}AAR_{\ell}(\{\theta_n\}) = AAR_{\ell}(\{\theta^{BE}_n\})~a.s.$$
		\label{thm:optimal}
	\end{thm}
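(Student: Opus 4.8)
The plan is to prove the two displayed identities separately: first that the asymptotic average risk of the Bayesian estimator equals $\mathbb{E}\rho(Z_\infty^{BE})$, and second that no estimator based on the same information can do better. For the first identity, I would start from the definition of $AR_{\rho,K_t}(\theta_n^{BE})$ in \eqref{eq:risk} and recognize that, after the change of variables $v = \hat\alpha_n(\overline q - q(1))$ already set up before \eqref{eq:Z1}, the inner integral is exactly the finite-sample posterior risk $Q_n(z,z_1,\dots,z_L;\hat\xi,\hat p)$ evaluated at $z = \theta_n^{BE}(z_1,\dots,z_L;\hat\xi,\hat p)$, i.e.\ at its minimizer. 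So $AR_{\rho,K_t}(\theta_n^{BE})$ is an integral of $Q_n$ at its own argmin against the (limiting) density $f$. The strategy is then a two-stage limit: let $n\to\infty$ with $t$ fixed, using Theorem \ref{thm:simmain} (which gives $\hat Z_n^{BE}\convD Z_\infty^{BE}$) together with the continuous-mapping/argmin machinery of \cite{VW96}, the consistency $(\hat\xi,\hat p)\convP(\xi_0,p_0)$ from Assumption \ref{ass:rho}.3, and the continuity of $\theta_n^{BE}$, $\theta_\infty^{BE}$, $Q_n$, $Q_\infty$ and $f$ in Assumptions \ref{ass:rho}.4, \ref{ass:rho}.6; then let $t\to\infty$. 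At each stage I need uniform integrability to pass the limit inside the integral, and this is exactly what the polynomial growth bounds in Assumptions \ref{ass:rho}.4--\ref{ass:rho}.5, the polynomial bound on $\rho$ in Assumption \ref{ass:rho}.1, and the exponential tail decay of $f$ in Assumption \ref{ass:rho}.6 are designed to supply — these combine to give an integrable dominating function via the dominated convergence theorem, as flagged in the seventh comment after Assumption \ref{ass:rho}. The outcome of this step is $AR_{\rho,K_t}(\theta_n^{BE})\to \mathbb{E}[\rho(\tilde\theta_t^{BE}-\text{something})]$ and then, as $t\to\infty$, $\to\mathbb{E}\rho(Z_\infty^{BE})$, using Assumption \ref{ass:rho}.8 for the uniqueness of the limiting minimizer.

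For the lower bound (the ``In addition'' clause), the plan is a classical Bayes-risk-versus-minimax argument specialized to the limit experiment. Fix an arbitrary sequence $\theta_n\in\Theta_n$. For each $n$ and each compact $K_t$, the restricted-prior Bayes estimator minimizes $AR_{\rho,K_t}(\cdot)$ over all estimators; by construction $\theta_n^{BE}$ is (asymptotically, as $\hat\alpha_n\to\infty$ so $U_n\uparrow\Re$) the minimizer of the posterior risk with the flat prior on $K_t$, hence $AR_{\rho,K_t}(\theta_n^{BE})\le AR_{\rho,K_t}(\theta_n) + o(1)$ — more precisely, $AR_{\rho,K_t}(\theta_n)\ge AR_{\rho,K_t}(\tilde\theta_t^{BE}(\cdot;\hat\xi,\hat p))$ pointwise because $\tilde\theta_t^{BE}$ in \eqref{eq:thetatilde} is exactly the pointwise minimizer of the inner $v$-integral over $K_t$ with uniform weight. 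Taking $\limsup_{n}$ and then $\limsup_{t}$ on both sides and invoking the first part of the theorem to identify the right-hand limit as $\mathbb{E}\rho(Z_\infty^{BE})=AAR_\rho(\{\theta_n^{BE}\})$ yields $AAR_\rho(\{\theta_n\})\ge AAR_\rho(\{\theta_n^{BE}\})$ for every competing sequence, which is the claimed infimum identity (the reverse inequality is trivial since $\theta_n^{BE}\in\Theta_n$). The key technical point here is that the order of the two $\limsup$'s is the one in the definition of $AAR_\rho$, so I must push the argmin inequality through at finite $t$ first and only then send $t\to\infty$; Assumption \ref{ass:rho}.5 is what controls $\tilde\theta_t^{BE}$ uniformly enough to make this legitimate.

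The main obstacle I anticipate is the justification of interchanging the double limit in $n$ and $t$ with the integration defining the average risk, i.e.\ establishing tight enough uniform-integrability/domination estimates that hold simultaneously as $n\to\infty$ and as $t\to\infty$. The subtlety is that the region of integration $K_t$ grows with $t$, so a dominating function valid on each fixed $K_t$ is not automatically integrable over $\Re$; one must show that the contribution of $\{|v|>t\}$ or $\{\max_l|z_l|>t\}$ vanishes in the iterated limit. This is where the exponential decay of $f$ in Assumption \ref{ass:rho}.6 must be played off against the merely polynomial growth of $\rho$ and of $\theta_n^{BE}$, $\theta_\infty^{BE}$, $\tilde\theta_t^{BE}$ (Assumptions \ref{ass:rho}.1, \ref{ass:rho}.4, \ref{ass:rho}.5): polynomial $\times$ exponentially small is summable/integrable, and the tail mass goes to zero uniformly in $n$ because $f$ is the genuine limiting density and the normalized estimates $\tilde Z_n(k_l)$ are uniformly tight by Corollary \ref{cor:feasible}. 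A secondary, more bookkeeping-level obstacle is handling the normalization constant in the denominator of $Q_n$ and showing $U_n = \hat\alpha_n(U - q(1))$ expands to all of $\Re$ fast enough that $Q_n\to Q_\infty$ uniformly on compacta, which uses that $\hat\alpha_n\to\infty$ (implicit in Assumption \ref{ass:ev}) together with the boundedness and continuity of $\pi$ at $q(1)$ from Assumption \ref{ass:rho}.7.
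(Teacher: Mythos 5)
Your high-level architecture is the right one — a two-stage $\limsup_{t}\limsup_{n}$ limit with DCT at fixed $t$, and for the lower bound the classical Bayes-vs-arbitrary-estimator inequality using that $\tilde\theta_t^{BE}$ is the pointwise minimizer of the truncated posterior risk. That much matches the paper. But there are two concrete gaps that prevent the proposal from going through.

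First, for $AAR_{\rho}(\{\theta_n^{BE}\}) = \mathbb{E}\rho(Z_\infty^{BE})$ you never invoke the translation-equivariance identity
$\theta_\infty^{BE}(z_1,\cdots,z_L;\xi_0,p_0) - v = \theta_\infty^{BE}(z_1-v,\cdots,z_L-v;\xi_0,p_0)$,
which follows from a change of variables in \eqref{eq:gamma}. This is the step that makes the inner $\Re^L$-integral, after sending $n\to\infty$, a constant independent of $v$; the $v$-average over $K_t$ is then trivial and no $t$-uniform domination is needed for the first claim at all. Your identification of the inner integral with $Q_n$ at its own argmin is also off: after Fubini, the $v$-integral in $AR_{\rho,K_t}$ has the uniform weight $\mathds{1}\{v\in K_t\}/\Lambda(K_t)$, so its pointwise minimizer in $\gamma$ is $\tilde\theta_t^{BE}$, not $\theta_n^{BE}$ (which minimizes $Q_n$ with a different prior/normalization).

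Second, and more seriously, the lower bound requires establishing $\limsup_{t}\limsup_{n} AR_{\rho,K_t}(\tilde\theta_t^{BE}(\cdot;\hat\xi,\hat p)) = \mathbb{E}\rho(Z_\infty^{BE})$; this is a separate fact (the paper's \eqref{eq:RK}), not a consequence of the first part, which concerns $\theta_n^{BE}$ rather than $\tilde\theta_t^{BE}$. You rightly identify the $t\to\infty$ interchange as the crux, but your proposed fix — polynomial growth of $\tilde\theta_t^{BE}$ against exponential decay of $f$ — does not work, because Assumption \ref{ass:rho}.5 only gives constants $C_t,B_t$ that may blow up with $t$, so there is no dominating function that is uniform in $t$. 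The paper's resolution is a one-sided argument: after the substitutions $v=tu$ and $z\mapsto z+tu$, one needs $\tilde\theta_t^{BE}(z_1+tu,\cdots,z_L+tu;\xi_0,p_0) - tu \to \theta_\infty^{BE}(z_1,\cdots,z_L;\xi_0,p_0)$ (again via translation equivariance and Lemma \ref{lem:convexity}), and then the difference $\rho(\tilde\theta_t^{BE}(z+tu)-tu) - \rho(\theta_\infty^{BE}(z))$ is split into negative and positive parts $I_t$ and $II_t$. $I_t\to 0$ by DCT with dominating function $\rho(\theta_\infty^{BE}(z))$ (integrable by Assumption \ref{ass:rho}.4), while the minimality of $\tilde\theta_t^{BE}$ for the truncated Bayes risk forces $0\le II_t\le I_t$. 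Without this trick, the proposal has no way to justify the $t\to\infty$ limit passage, which you yourself flag as the main obstacle.
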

	
	Theorem \ref{thm:optimal} shows that the quasi-Bayesian estimator achieves the infimum of the asymptotic average risk over the class of estimators constructed based on $(\tilde{Z}_n(k_1),\cdots,\tilde{Z}_n(k_L))$ and $\hat{\xi}$. It is possible to find other better estimators outside this class. Searching for the best estimator for the production frontier is out of the scope of this paper. The main purpose of establishing Theorem \ref{thm:optimal} is the following corollary: the confidence interval constructed using the posterior quantiles controls size asymptotically.
	
	\begin{cor}
		Let $\hat{q}^{BE}(0.5)$, $\hat{q}^{BE}(\tau')$, and $\hat{q}^{BE}(\tau'')$ be the quasi-Bayesian estimators that solve \eqref{eq:q1} with the loss function $\tilde{\ell}_\tau(u) = (\mathds{1}\{u >0 \} - \tau)u = u- \rho_{\tau}(u)$ and $\tau = 0.5$, $\tau'$ and $\tau''$, respectively. Let $Z^{BE}_\infty(0.5)$, $Z^{BE}_\infty(\tau')$ and $Z^{BE}_\infty(\tau'')$ be the limits of $\hat{\alpha}_n(\hat{q}^{BE}(0.5) - q(1))$, $\hat{\alpha}_n(\hat{q}^{BE}(\tau') - q(1))$ and $ \hat{\alpha}_n(\hat{q}^{BE}(\tau'') - q(1))$, respectively. If $0 < \tau' < \tau'' < 1$ and $Z^{BE}_\infty(0.5)$, $Z^{BE}_\infty(\tau')$ and $Z^{BE}_\infty(\tau'')$ are continuously distributed at zero, then
		$$\mathbb{P}(q(1) \leq \hat{q}^{BE}(0.5) ) \rightarrow 0.5 \quad \text{and} \quad \mathbb{P}(q(1) \in \text{CI}^{BE}(\tau''-\tau') ) \rightarrow \tau''-\tau',$$
		where $\text{CI}^{BE}(\tau''-\tau') = (\hat{q}^{BE}(\tau'), \hat{q}^{BE}(\tau''))$.
		\label{cor:posterior_quantile}
	\end{cor}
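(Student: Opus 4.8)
The plan is to reduce the two coverage statements to probability computations under the limiting law delivered by Theorem~\ref{thm:simmain}, and then to recognize that law as the $\tau$-quantile of a flat-prior posterior in a location model, for which one-sided and two-sided credible sets enjoy exact frequentist coverage.

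First I would check that the check-type loss $\tilde\rho_\tau(u)=(\mathds{1}\{u>0\}-\tau)u$ is convex and bounded by $|u|$, hence satisfies Assumption~\ref{ass:rho}.1, and that the remaining loss-dependent parts of Assumption~\ref{ass:rho} (items~4, 5 and~8) hold for $\tilde\rho_\tau$ because the associated argmin is just a quantile of the flat-prior posterior density built from $f$, which inherits continuity and at-most-polynomial growth from the exponential-tail condition in Assumption~\ref{ass:rho}.6. Theorem~\ref{thm:simmain} then applies with this loss for each $\tau\in\{0.5,\tau',\tau''\}$. Moreover, since $\hat q^{BE}(0.5),\hat q^{BE}(\tau'),\hat q^{BE}(\tau'')$ are obtained by applying the maps $\theta_n^{BE}(\cdot;\hat\xi,\hat p)$ — with the respective $\tau$'s — to the \emph{same} vector $(\tilde Z_n(k_1),\dots,\tilde Z_n(k_L))$, and these maps converge to continuous limits, the argument behind Theorem~\ref{thm:simmain}, applied to the stacked map, gives the joint convergence
\[
\bigl(\hat\alpha_n(\hat q^{BE}(0.5)-q(1)),\ \hat\alpha_n(\hat q^{BE}(\tau')-q(1)),\ \hat\alpha_n(\hat q^{BE}(\tau'')-q(1))\bigr)\convD\bigl(Z_\infty^{BE}(0.5),\,Z_\infty^{BE}(\tau'),\,Z_\infty^{BE}(\tau'')\bigr).
\]
Since $\hat\alpha_n>0$, we have $\{q(1)\le\hat q^{BE}(0.5)\}=\{\hat\alpha_n(\hat q^{BE}(0.5)-q(1))\ge0\}$, and, using the monotonicity $\hat q^{BE}(\tau')\le\hat q^{BE}(\tau'')$ of posterior quantiles, $\{q(1)\in\text{CI}^{BE}(\tau''-\tau')\}=\{\hat\alpha_n(\hat q^{BE}(\tau')-q(1))\le0\le\hat\alpha_n(\hat q^{BE}(\tau'')-q(1))\}$. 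The hypothesis that $Z_\infty^{BE}(\cdot)$ is continuously distributed at zero makes the boundaries of these events null under the limit, so the portmanteau theorem reduces the claim to showing $P(Z_\infty^{BE}(0.5)\ge0)=0.5$ and $P(Z_\infty^{BE}(\tau')\le0\le Z_\infty^{BE}(\tau''))=\tau''-\tau'$.

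Next I would identify the limiting law. With $\tilde\rho_\tau$, the first-order condition for the minimization defining $\theta_\infty^{BE}$ in \eqref{eq:gamma} (with $Q_\infty$ as in \eqref{eq:Z1}, where the prior has already dropped out in the limit by Assumption~\ref{ass:rho}.7) shows that $\theta_\infty^{BE}(z_1,\dots,z_L;\xi_0,p_0)$ is the $\tau$-quantile of the density $v\mapsto f(z_1-v,\dots,z_L-v;\xi_0,p_0)/\int_\Re f(z_1-v',\dots,z_L-v';\xi_0,p_0)\,dv'$ — that is, the flat-prior posterior for the location $v$ after ``observing'' $(z_1,\dots,z_L)$ in the location family with base density $f(\cdot;\xi_0,p_0)$. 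Evaluating at $(z_l)=(\tilde Z_\infty(k_l))_l$, whose joint density is exactly $f(\cdot;\xi_0,p_0)$, we see that $Z_\infty^{BE}(\tau)$ is the posterior $\tau$-quantile when the data are generated at the true location $0$. Writing $V=(\tilde Z_\infty(k_1),\dots,\tilde Z_\infty(k_L))$ and $\mathbf 1=(1,\dots,1)$ and substituting $u=-v$, the posterior CDF at $0$ equals
\[
T:=\frac{\int_0^\infty f(V+u\mathbf 1;\xi_0,p_0)\,du}{\int_{-\infty}^\infty f(V+u\mathbf 1;\xi_0,p_0)\,du},
\]
so, up to the null events removed by continuity at zero, $\{Z_\infty^{BE}(\tau)\ge0\}=\{T\le\tau\}$ and $\{Z_\infty^{BE}(\tau)\le0\}=\{T\ge\tau\}$. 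Hence it suffices to prove $T\sim\mathrm{Uniform}[0,1]$, since then $P(Z_\infty^{BE}(0.5)\ge0)=P(T\le0.5)=0.5$ and $P(Z_\infty^{BE}(\tau')\le0\le Z_\infty^{BE}(\tau''))=P(\tau'\le T\le\tau'')=\tau''-\tau'$.

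To prove $T\sim\mathrm{Uniform}[0,1]$ — which I expect to be the crux of the argument — I would decompose $V=\bar V\mathbf 1+V^\perp$, with $\bar V=L^{-1}\sum_l V_l$ and $V^\perp$ orthogonal to $\mathbf 1$, and substitute $t=\bar V+u$ to rewrite $T=1-\int_{-\infty}^{\bar V}f(t\mathbf 1+V^\perp;\xi_0,p_0)\,dt\,/\int_{-\infty}^{\infty}f(t\mathbf 1+V^\perp;\xi_0,p_0)\,dt$. Because the linear change of variables $V\mapsto(\bar V,V^\perp)$ has constant Jacobian, the conditional density of $\bar V$ given $V^\perp$ is proportional to $t\mapsto f(t\mathbf 1+V^\perp;\xi_0,p_0)$; hence the displayed ratio is exactly the conditional CDF of $\bar V$ given $V^\perp$ evaluated at $\bar V$, which is $\mathrm{Uniform}[0,1]$ conditionally on $V^\perp$ by the probability integral transform (the conditional law is continuous since $f$ is a density). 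Therefore $T$ equals one minus a $\mathrm{Uniform}[0,1]$ variable, i.e.\ $T\sim\mathrm{Uniform}[0,1]$ unconditionally, which together with the preceding paragraphs completes the proof. The hard part is thus the exact frequentist validity of the flat-prior posterior credible sets in the location family, established through this orthogonal decomposition and conditional probability integral transform; the remaining pieces — recasting the coverage events, upgrading Theorem~\ref{thm:simmain} to joint convergence of the three Bayesian estimators, and the portmanteau step licensed by the continuity-at-zero hypothesis — are routine, the only secondary care being to confirm that $\tilde\rho_\tau$ meets all of Assumption~\ref{ass:rho} and that the edge events (where $0$ is a posterior atom, or $T$ hits a fixed $\tau$) are genuinely null.
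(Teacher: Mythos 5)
Your proof is correct, but the route is genuinely different from the paper's. The paper's proof establishes $P(Z^{BE}_\infty(\tau')>0)=\tau'$ by \emph{contradiction using the optimality theorem} (Theorem \ref{thm:optimal}): if $P(Z^{BE}_\infty(\tau')>0)\neq\tau'$, the $(1-\tau')$-quantile of $Z^{BE}_\infty(\tau')$ is some nonzero $c$, so the shifted estimator $\theta_n^{BE}(\cdot;\hat\xi,\hat p)-c$ lies in $\Theta_n$ and has strictly smaller asymptotic average check-risk, contradicting the minimality of the Bayesian estimator's $AAR$. You instead prove a \emph{stronger, more structural fact directly}: that the flat-prior posterior CDF of the location parameter evaluated at the truth, $T=G(0)$, is exactly $\mathrm{Uniform}[0,1]$, via the orthogonal split $V=\bar V\mathbf 1+V^\perp$, the constant-Jacobian change of variables, and the conditional probability integral transform. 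Your argument is a clean, self-contained instance of the classical exact frequentist validity of flat-prior credible sets in a location family and does not invoke Theorem \ref{thm:optimal} at all; it also gives the two-sided coverage and the median coverage in one stroke from $T\sim\mathrm{Uniform}[0,1]$. The paper's approach, by contrast, re-uses a theorem it already needs elsewhere and avoids having to identify $\theta^{BE}_\infty$ explicitly as a posterior quantile or to carry out the pivot construction. One minor remark on your write-up: the reduction $\{Z_\infty^{BE}(\tau)\ge0\}=\{T\le\tau\}$ requires the posterior CDF $G$ to be continuous at $0$ (so $G(0^-)=G(0)=T$) and to have no flat piece through level $\tau$ there; the first is automatic since $G$ has a Lebesgue density, and the second is exactly what the continuity-at-zero hypothesis on $Z^{BE}_\infty(\cdot)$ is for, so the gap is only expository. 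Also note you do not actually need the joint convergence of the three Bayesian estimators: because posterior quantiles are monotone in $\tau$ one has $\{q(1)<\hat q^{BE}(\tau')\}\subset\{q(1)\le\hat q^{BE}(\tau'')\}$, so the two-sided coverage reduces to a difference of two one-sided (marginal) limits, which is precisely what the paper does.
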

	The quasi-Bayesian estimator $\hat{q}^{BE}(\tau)$ is just the $\tau$-th posterior quantile. Corollary \ref{cor:posterior_quantile} shows we can construct a median-unbiased estimator and a valid confidence interval based on posterior quantiles. To implement the MCMC method (such as the Metropolis-Hastings algorithm) and obtain the posterior distribution, we have to evaluate $f(\cdot;\xi)$, the joint PDF of
	$(\tilde{Z}_\infty(k_1),\cdots,\tilde{Z}_\infty(k_L))$ at
	$$(\hat{\alpha}_n(\hat{q}_n(\hat{\tau}_{n1}) - \overline{q}),\cdots,\hat{\alpha}_n(\hat{q}_n(\hat{\tau}_{nL}) - \overline{q})).$$
	
	Next, we derive an analytical form for $f(u_1,\cdots,u_L;\xi)$.  
	
	\begin{ass}
		$\lceil k_0 \rceil < \lceil mk_0 \rceil < \lceil k_1 \rceil < \cdots < \lceil k_L \rceil.$
		\label{ass:h}
	\end{ass}
	The order of $k$'s is needed to derive a simple formula for the joint PDF but is not required for Theorem \ref{thm:simmain}. Essentially, Assumption \ref{ass:h} requires that $\lceil k_0 \rceil$ and $\lceil mk_0 \rceil$ are smaller than all the other $k$'s, which makes it much easier to handle the common denominator $Z_\infty(mk_0) - Z_\infty(k_0)$ in $\tilde{Z}_\infty(k_l)$ for $l=1,\cdots,L$.
	
	\begin{prop}
		Let $f_h$ be the PDF of a gamma random variable with shape and scale parameters being equal to $h$ and 1, respectively. If Assumption \ref{ass:h} holds, then
		\begin{equation*}
		\begin{aligned}
		& f(u_1,\cdots,u_L;\xi) \\
		= & \int (-1/\xi)^L \tilde{u}(t,s)^{-L/\xi} \biggl[\prod_{l=1}^Lu_l^{-1/\xi-1}f_{h_l - h_{l-1}}(v_l - v_{l-1})\biggr]  f_{\lceil k_0 \rceil}(s)f_{\lceil mk_0 \rceil -\lceil k_0 \rceil}(t) dsdt,
		\end{aligned}
		\end{equation*}
		where $h_l = \lceil k_l \rceil$ for $1 \leq l \leq L$, $h_0 = \lceil mk_0 \rceil$, $v_l = (u_l\tilde{u}(t,s))^{-1/\xi}$ for $1 \leq l \leq L$, $\tilde{u}(t,s) = (t+s)^{-\xi} - s^{-\xi}$, and $v_0 = t+s$.
		\label{prop:PDF}
	\end{prop}

	Given the analytical form of $f(u_1,\cdots,u_L;\xi)$, the estimates $(\hat{q}_n(\hat{\tau}_{n1}),\cdots,\hat{q}_n(\hat{\tau}_{nL}))$, and the feasible convergence rate $\hat{\alpha}_n$, we can generate MCMC draws from the posterior
	
	$$f(\hat{\alpha}_n(\hat{q}_n(\hat{\tau}_{n1}) -\overline{q}),\cdots,\hat{\alpha}_n(\hat{q}_n(\hat{\tau}_{nL}) -\overline{q});\xi)\pi(\overline{q})\phi(\frac{\xi- \hat{\xi}}{\hat{\sigma}})1\{\xi \in \Gamma \}.$$
	
	Then, we can use these MCMC draws to construct point estimator and confidence interval for $q(1)$. The quasi-Bayesian approach requires several tuning parameters, namely $L$, $(k_0,\cdots,k_L)$, and $m$. We discuss the choices of these tuning parameters in Section \ref{sec:tuning}. We also describe the detail of the MCMC procedure in Section \ref{sec:numcomp}. The R code for the quasi-Bayesian inference is available upon request.

	\section{Simulations\label{sec:sim}}
	In this section, we investigate the finite-sample performance of our
	estimation and inference procedures.
	
	\subsection{Data Generating Processes\label{sec:dgp}}
	
	The data generating process (DGP) is based on the model
	\begin{equation}
	Y_{i}=\psi\left(  X_{i}\right)  \mathcal{U}_{i},X_{i}\sim\text{Unif}\left(
	0,6\right)  ,i=1,2,\dots,n, \label{ModelSim}%
	\end{equation}
	where $\psi\left(  X\right)  $ is a function representing the frontier,
	$\mathcal{U}$ is the error term, and Unif$\left(  0,6\right)  $ denotes the
	uniform distribution over $\left[  0,6\right]  $. We consider three different $\psi\left(  X\right)$'s:%
	\[
	\left(  1\right)  \text{ }\psi(X)=X^{0.5},\quad (2)\text{ }\psi
	(X)=X,\quad \text{and} \quad (3)\text{ }\psi(X)=\frac{X^{2}}{6},
	\]
	which are concave, linear, and convex, respectively.
	
	The first two functional forms have been investigated in simulations in previous papers, e.g., \cite{ADT05} and \cite{DFS10}. Convex
	frontiers, like (3), were adopted in
	simulations in \cite{PSW00} and \cite{MY08} among others.
	
	We combine the above three $\psi\left(  X\right)$'s with the following five distributions of $\mathcal{U}$.
	
	\begin{description}
		\item[(1)] $\mathcal{U}\sim\text{Unif$\left(  0,1\right)  $}$ with density
		evenly distributed over the support $\left[  0,1\right]  $.
		
		\item[(2)] $\mathcal{U}\sim\exp(-u),u\sim$exponential$\left(  \frac{1}%
		{3}\right)  $ with density skewed to the left over the support $\left[
		0,1\right]  $.
		
		\item[(3)] $\mathcal{U}\sim\text{Beta}\left(  \frac{1}{2},\frac{3}{2}\right)
		$ with density skewed more to the left compared to the density in (2).
		
		\item[(4)] $\mathcal{U}\sim\text{Beta}\left(  \frac{3}{2},\frac{1}{2}\right)
		$ with density skewed to the right over the support $\left[  0,1\right]  .$
		
		\item[(5)] $\mathcal{U}\sim$ truncated normal with density, $\frac
		{2\phi\left(  \frac{u-1/2}{1/2}\right)  }{\Phi(1)-\Phi(-1)}$ for $u\in
		\lbrack0,1]$ ($\phi$ and $\Phi$ denote the PDF and the CDF of standard normal, respectively), which is concentrated in the middle over the support $\left[
		0,1\right]  .$
	\end{description}
	
	The five distributions above exhibit different types of tail behaviors. Specifically, in our DGPs, $\xi_{0}$ only depends on
	the density of $\mathcal{U}$. Some simple calculation further shows that $\xi_{0}=-\frac{1}{2}$ for the DGPs with distributions (1), (2), and (5), $\ \xi_{0}=$\ $-\frac{2}{5}$ for the DGPs with distribution (3), and $\xi_{0}=-\frac{2}{3}$ for the DGPs with distribution (4). Note that $\xi_{0}=-\frac{1}{2}$, $\xi_{0}>-\frac{1}{2}$, and $\xi_{0}<-\frac{1}{2}$ when the density of $\mathcal{U}$ is bounded and bounded away from zero, decays to zero, and diverges to infinity at the boundary 1, respectively. Overall, we consider DGPs using all the combinations of the functional forms of $\psi$\ and the distributions of $\mathcal{U}$, which results in 
	15 DGPs in total. We denote them as DGP$(i,j)$, where $i=1,2,3$ represents the functional forms of $\psi(\cdot)$ and $j = 1,\cdots,5$ denotes distributions of $\mathcal{U}$.

	Since the data in our empirical application contains
	four outliers, we also add four outliers to each DGP to check the impact of
	outliers on our estimation and inference procedure. Note \cite{ADT05} also introduce outliers in their simulation
	setup. Specifically, the four
	outliers in our DGPs are%
	\begin{align*}
	&  \left(  \text{Unif}\left(  0,1.5\right)  ,\psi\left(  2.25\right)  \right)
	,\text{ }\left(  1.5+\text{Unif}\left(  0,1.5\right)  ,\psi\left(  3.5\right)
	\right)  ,\text{ }\left(  1.5+\text{Unif}\left(  0,1.5\right)  ,\psi\left(
	4\right)  \right)  ,\\
	&  \text{ and }\left(  3+\text{Unif}\left(  0,1.5\right)  ,\psi\left(
	5.25\right)  \right)  .
	\end{align*}
	We report the results for $x=1.5,$ $3.0$, and $4.5.$ Thus, our
	procedure faces 1, 3, and 4 outliers at $1.5,$ $3.0,$ and $4.5,$ respectively.
	
	\subsection{Tuning parameters}
	
	\label{sec:tuning} The tuning parameters used in our procedures are the
	lower quantile index $k_{0}$, the spacing parameter $m$, and the upper quantile index $k_L$. How to choose those tuning parameters optimally is an important
	yet challenging question. Just as argued in \citet[Section 5]{MW16},
	\textquotedblleft\textit{under }$k_{n}\rightarrow\infty$\textit{, the
		determination of }$k_{n}$\textit{ in a given sample size }$n$\textit{ is
		widely recognized as a difficult issue. But the problem is arguably even
		harder under fixed-}$k$\textit{ asymptotics, as there cannot exist a procedure
		based on the largest }$k$\textit{ order statistics that consistently
		determines whether, say, }$k_{1}<k$\textit{ or }$k_{2}<k_{1}$\textit{ is
		appropriate}\textquotedblright. Here, we provide some rules of thumb for
	$k_{0},m,$ and $k_L$ based on the existing literature and some
	unique features of our own procedure. We leave the formal analysis on the
	higher-order impact of the tuning parameters to future research. 
	
	Note that the
	spacing parameter $m$ and the upper quantile index $k_L$ have been well studied
	in \cite{CF11}. We choose $m$ and $k_L$ based on their recommendation. The role of
	$k_{0}$ is to guard against outliers. We detail our rule-of-thumb choice of the tuning parameters below. 
	
	\begin{description}
		\item[(1)] To be robust against outliers, we set $k_{0}$ as
		\[
		k_{0}=\text{Number of spotted outliers }+2.
		\]
		In the simulation, we set $k_{0}=3,5,$ and
		$6$ for $X=1.5,$ $3.0,$ and $4.5,$ respectively.
		
		\item[(2)] As for $m$, \cite{CF11} suggest using $m=1+\frac{sp}{\lceil
			k_{0}\rceil}$, where $sp\in\left[  2,20\right]  $,\footnote{The original
			formula in \cite{CF11} is $m=1+\frac{d+sp}{\lceil k_{0}\rceil}$, where $d$ is
			the dimension of the regressors. In our case, there is no regressor so $d=0$.}
		and they set $sp=5$ for simulations and applications. We follow them and set
		$sp=5$. Then, $m=1+\frac{5}{\lceil k_{0}\rceil},$ which implies
		$k_{1}=k_{0}+5$.
		
		\item[(3)]\cite{CF11} point out that the fixed-k asymptotics has a
		better approximation of the finite sample distribution when $k_{L}$ is within the range $[40,80]$. In addition, denote the effective sample size for each $n$ and $x$ as $n\hat{p}$, where $\hat{p} = \frac{1}{n}\sum_{i=1}^n1\{X_i \leq x\}$. Then, $k_L/(n\hat{p})$ is the quantile index of the $k_L$-th order statistic in the effective sample. As our theories rely on the extreme quantile asymptotics, we require such quantile index to be close to zero. In practice, we require $k_L \leq 0.1 n \hat{p}$. Therefore, our rule-of-thumb choice of $k_L$ is $k_{L}=\min(0.1n\hat{p},40)$. Note as $n \rightarrow \infty$, $\min(0.1n\hat{p},40)$ reduces to $40$, which fits the extreme quantile asymptotics. For robustness check, we also consider $k_{L}' = \min(0.1n\hat{p},35)$ and $k_{L}^{''} = \min(0.1n\hat{p},45)$ for all 15 DGPs. All the simulation results are very close. 
		
		\item[(4)] The more quantiles are used, the more efficient is our quasi-Bayesian estimator. Thus, we use all the integers between $k_1$ and $k_L$ for estimation, i.e., we let 
		$$\left\{  k_{l}\right\}
		_{l=0}^{L} = \left\{  k_{0},k_{1},k_{1}+1,k_{1}+2,...,k_{L}-1,k_{L}%
		\right\}  .$$ 
		Once $k_0$, $m$, and $k_L$ are determined, the whole sequence $\{k_l\}_{l=0}^L$ is determined. 
	\end{description}
	
	
	To show the robustness of our procedure to $k_{0}$ and $sp$ $($or $m),$ we
	experiment $k_{0}=$Number of spotted outliers $+3$ or $k_{0}=$Number of
	spotted outliers $+4$ and $sp=6$ or $7$ for DGP(1,1) and DGP(2,1). These results can be found in Sections \ref{sec:simk0} and \ref{sec:simsp} in the supplement.
	
	\subsection{Detail about the MCMC procedure}
	\label{sec:numcomp} 
	The numerical evaluation of the joint density function
	established in Proposition \ref{prop:PDF} is detailed in Section
	\ref{sec:PDF} in the supplement. The length of burn-in sequence and MCMC sequence should be set
	as large as computationally possible. We use $4,000$ and $10,000$,
	respectively. Second, we need to determine the initial values of the MCMC.
	Given $x$, the initial value $\bar{q}^{\ast}$ is computed by
	\[
	\bar{q}^{\ast}=\underset{q}{\arg\min}\sum_{i=1}^{n}\rho_{\tau}\left(
	Y_{i}-q\right)  1\left\{  X_{i}\leq x\right\}  ,
	\]
	where $\tau=0.99$. The initial value of $\xi_0$ is $\hat{\xi}$ computed by the \textbf{rho\_momt\_pick }function in the R package \textbf{npbr}. We will provide more detail about the estimation of $\xi_0$ in Section \ref{sec:evindex} below. 
	
	
	\subsection{Estimators for comparison}
	
	Based on the characterization in \citet[Table 2]{DLN17}, our paper considers
	point-wise and robust estimation of the production frontier under the
	assumption that the frontier is monotone only. Among all the estimators
	mentioned in \citet[Table 2]{DLN17}, the moment- and Pickands-type estimations
	proposed by \cite{DFS10} and the the probability-weighted moment frontier
	estimation proposed by \cite{DFS12} are in the same category as ours and
	produce not only point estimates but also confidence intervals. Therefore, we
	will compare our method to them. The four methods are labelled as follows:
	
	\begin{description}
		\item[(1)] \textquotedblleft Quasi-Bayesian": our quasi-Bayesian method,
		
		\item[(2)] \textquotedblleft Mom\textquotedblright: the moment frontier estimator,
		
		\item[(3)] \textquotedblleft Momt\_pick\textquotedblright: the Pickands
		frontier estimator,
		
		\item[(4)] \textquotedblleft Pwm\textquotedblright: the probability-weighted
		moment frontier estimator.
	\end{description}
	
	The estimation procedures for ``Mom", ``Momt\_pick", and ``Pwm" are described
	in Section $\ref{sec:compute-estimators}$ in the supplement.
	
	\subsection{Prior of $\bar{q}$}
	\label{sec:simprior}
	For all simulations, we simply set $\pi\left(\cdot\right)  =1,$ which is the uninformative prior for $\bar{q}$. We experiment $\pi\left(\cdot\right)  $ as normal with the mean as the
	initial value $\bar{q}^{\ast}$ and the variance as 1 or 1.5 for DGPs(1,1) and (2,1). The simulation results show our inference method is insensitive to the choice of prior distributions. The detail can be found in Section \ref{sec:simpi} in the supplement.

	\subsection{Estimation of $\xi_0$}
	
	\label{sec:evindex} All four estimation methods above require the estimation
	of the EV index $\xi_0$. For fair comparison, for each replication, we force all
	estimators to share the same EV index estimate $\hat{\xi}$, which is the
	negative reciprocal of the output of the function \textbf{rho\_momt\_pick} in
	\textbf{npbr} with arguments \textbf{method = \textquotedblleft
		Pickands\textquotedblright} and support intervals $\left(  1,3\right)$, $\left(  0.5,2.5\right)  $ and
	$\left(  1.5,3.5\right)  ,$ when the true values of $-1/\xi_0$ are 2 (error distributions (1), (2), (5)), 2.5 (error distribution (3)), and 1.5 (error distribution (4)), respectively. When
	the effective sample size is small, occasionally, the function
	\textbf{rho\_momt\_pick }may return NA value. In this case, we propose to use
	the following equation to compute $\hat{\xi}$:
	\begin{equation}
	\label{eq:xihat}\hat{\xi}=\text{Med}\left[  \left\{  -\frac{1}{r\log\left(
		l\right)  }\log\left[  \frac{\hat{q}_{n}\left(  \tau\left(  \tilde{m}%
		l^{r}\tilde{k}\right)  \right)  -\hat{q}_{n}\left(  \tau\left(  l^{r}\tilde
		{k}\right)  \right)  }{\hat{q}_{n}\left(  \tau\left(  \tilde{m}\tilde
		{k}\right)  \right)  -\hat{q}_{n}\left(  \tau\left(  \tilde{k}\right)
		\right)  }\right]  \right\}  _{\tilde{k}=K-c\hat{p}}^{K+c\hat{p}}\right]  ,
	\end{equation}
	where $\text{Med}\left(  \cdot\right)  $ denotes the median operator,
	$\tau\left(  k\right)  =1-\frac{k}{n\hat{p}}$, $l=2$, $r=2$, $\tilde{m}=1.5$,
	$c=20$ and the tuning parameter $K=n\hat{p}/10$. Once the estimated $\hat{\xi
	}$ is outside $\Gamma$, we directly assume it equals to the closest boundary.
	
	For our quasi-Bayesian method, we use the truncated normal prior
	$\phi(\frac{\xi-\hat{\xi}}{\hat{\sigma}})1\{\xi\in\Gamma\}$, where
	$\hat{\sigma}$ is obtained via bootstrap. Specifically, for the $s$-th
	bootstrap sample, we can generate $\{\zeta_{i}^{(s)}\}_{i=1}^{n}$, which is a
	sequence of i.i.d. standard exponentially distributed random variables. For a
	generic quantile index $\tau$, we can compute
	\[
	\hat{q}_{n}^{(s)}(\tau)=\arg\min_{q}\sum_{i=1}^{n}\zeta_{i}^{(s)}\rho_{\tau
	}(Y_{i}-q)1\{X_{i}\leq x\}.
	\]
	Then, the EV index estimator $\hat{\xi}^{(s)}$ for the $s$-th bootstrap sample
	can be computed similarly using \eqref{eq:xihat} with $\hat{q}_n(\cdot)$ and $K$ replaced by $\hat{q}_n^{(s)}(\cdot)$ and
	$\hat{K}$, respectively, where $\hat{K}$ is the optimal tuning
	parameter associated with the EV index $\hat{\xi}$ obtained by function
	\textbf{rho\_momt\_pick}. For some replication, when \textbf{rho\_momt\_pick}
	returns NA value, we instead set $\hat{K}=n\hat{p}/10$. For
	each replication, we repeat the above procedure for $s=1,\cdots,S$, where $S$
	is a sufficiently large positive integer and obtain $\left\{  \hat{\xi}%
	^{(s)}\right\}  _{s=1}^{S}$. We let
	\[
	\hat{\sigma}=\frac{c_{\sigma}(c_{0.75}-c_{0.25})}{\text{normal inverse}%
		(0.75)-\text{normal inverse}(0.25)},
	\]
	where $c_{\sigma}=1.5+1.5 \cdot 1\{\hat{\xi}>-0.5 \}$, $c_{0.75}$ and $c_{0.25}$ are the $75\%$ and $25\%$ quantiles of
	$\left\{  \hat{\xi}^{(s)}\right\}  _{s=1}^{S}$, and $\text{normal
		inverse}(0.75)$ and $\text{normal inverse}(0.25)$ are the $75\%$ and $25\%$
	quantiles of the standard normal distribution, respectively.
	
	\subsection{Results}
	
	\label{sec:simulresults}We construct 95\% confidence intervals for the four
	estimation methods. We report the results of the coverage probabilities and
	average lengths of the CIs for $\psi(x)$ at $x=1.5,3.0,4.5$. Due to the length limit, we report results for DGPs(1,1), (2,1), and (3,1) in Tables \ref{tabledgp1}--\ref{tabledgp3}. The results for the rest 12 DGPs and various robustness checks are relegated to the supplement. Given the value of $x$, we report the performance when the sample
	size $n=500$, $1,000$, $2,000$ and $4,000$. All simulations are repeated $1,000$ times. 
	

	We can make several observations. First, the quasi-Bayesian method controls
	size well, even when the effective sample size is small. Meanwhile, the CIs for the Pickands, the moment,
	and the probability-weighed frontier methods over- or under-cover quite a bit in the majority of cases. The
	simulation results in Section \ref{sec:addsim2} in the supplement further show that even we
	use the true EV index, the inferences using these methods still have the same issue. This may be due to the fact
	that their tuning parameters selected by \textbf{npbr} are not optimal for
	inference purpose. Second, the average length of the quasi-Bayesian method is in general the shortest among
	all four methods, despite the fact that its coverage rate is also closest to the nominal rate. Third, both the coverage rates and average lengths of
	our method are stable across different values of
	$k_{L}$. In addition, in Sections \ref{sec:simpi}--\ref{sec:simsp} in the supplement, we show the performance of quasi-Bayesian method is insensitive to the choices of $k_{0}$, $sp$ (or equivalently $m$) and the prior $\pi\left(  \bar{q}\right)$. Fourth, the average lengths for the quasi-Bayesian method decrease as the sample size
	increases. This indicates the validity of the fixed-k type asymptotics, which our theory relies on.

	\begin{table}[H]
		\centering{}\centering {\caption{DGP(1,1)}
			\label{tabledgp1} }%
		\begin{adjustbox}{max width=\textwidth}
			\begin{tabular}{r|ccc|ccc}
				\multicolumn{7}{c}{}  \\
				\hline\hline			
				\multicolumn{7}{c}{}  \\
				\multicolumn{7}{c}{Panel A: $x = 1.5$}  \\\hline
				& \multicolumn{3}{c|}{Quasi-Bayesian} & \multicolumn{3}{c}{Pickands}\\\hline
				& $k_{L}'$ & $k_{L}$ & $k_{L}^{''}$ & Mom & Momt-pick & Pwm \\
				\hline
				$n=500$& 0.9660 & 0.9650 & 0.9660 & 1.0000 & 0.9680 & 0.9570 \\
				$n p_0=125$& (0.5328) & (0.5332) & (0.5283) & (0.8451) & (1.9062) & (1.1307) \\
				$n=1000$& 0.9800 & 0.9820 & 0.9820 & 0.9940 & 0.9730 & 0.9880 \\
				$n p_0=250$& (0.3062) & (0.3069) & (0.3068) & (0.4978) & (1.2312) & (0.8488) \\
				$n=2000$& 0.9730 & 0.9700 & 0.9710 & 0.9930 & 0.9800 & 0.9930 \\
				$n p_0=500$& (0.2138) & (0.2117) & (0.2119) & (0.3267) & (0.8659) & (0.5960) \\
				$n=4000$& 0.9650 & 0.9580 & 0.9550 & 0.9880 & 0.9850 & 0.9970 \\
				$n p_0=1000$& (0.1477) & (0.1467) & (0.1458) & (0.2303) & (0.6343) & (0.4158) \\
				\hline			
				\multicolumn{7}{c}{}  \\
				\multicolumn{7}{c}{Panel B: $x = 3.0$}  \\\hline
				& \multicolumn{3}{c|}{Quasi-Bayesian} & \multicolumn{3}{c}{Pickands}\\\hline
				& $k_{L}'$ & $k_{L}$ & $k_{L}^{''}$ & Mom & Momt-pick & Pwm \\
				\hline				
				$n=500$& 0.9390 & 0.9280 & 0.9360 & 1.0000 & 0.9900 & 0.7020 \\
				$n p_0=250$& (0.5485) & (0.5490) & (0.5495) & (0.8204) & (1.9218) & (0.9080) \\
				$n=1000$& 0.9500 & 0.9590 & 0.9450 & 0.9990 & 0.9830 & 0.8150 \\
				$n p_0=500$& (0.3616) & (0.3585) & (0.3559) & (0.5384) & (1.3046) & (0.7082) \\
				$n=2000$& 0.9520 & 0.9530 & 0.9520 & 0.9990 & 0.9880 & 0.8930 \\
				$n p_0=1000$& (0.2461) & (0.2424) & (0.2410) & (0.3669) & (0.9549) & (0.5213) \\
				$n=4000$& 0.9390 & 0.9440 & 0.9380 & 0.9990 & 0.9950 & 0.9390 \\
				$n p_0=2000$& (0.1715) & (0.1686) & (0.1667) & (0.2567) & (0.7050) & (0.3703) \\
				\hline			
				\multicolumn{7}{c}{}  \\
				\multicolumn{7}{c}{Panel C: $x = 4.5$}  \\\hline
				& \multicolumn{3}{c|}{Quasi-Bayesian} & \multicolumn{3}{c}{Pickands}\\\hline
				\hline
				& $k_{L}'$ & $k_{L}$ & $k_{L}^{''}$ & Mom & Momt-pick & Pwm \\
				\hline
				$n=500$& 0.9540 & 0.9550 & 0.9550 & 0.9980 & 0.9850 & 0.8430 \\
				$n p_0=375$& (0.5187) & (0.5167) & (0.5171) & (0.6377) & (1.7471) & (0.4787) \\
				$n=1000$& 0.9320 & 0.9290 & 0.9200 & 0.9880 & 0.9900 & 0.9250 \\
				$n p_0=750$& (0.3730) & (0.3665) & (0.3632) & (0.4262) & (1.2530) & (0.3641) \\
				$n=2000$& 0.9760 & 0.9720 & 0.9720 & 0.9930 & 0.9920 & 0.9400 \\
				$n p_0=1500$& (0.2432) & (0.2396) & (0.2384) & (0.3124) & (0.9603) & (0.2622) \\
				$n=4000$& 0.9830 & 0.9740 & 0.9750 & 0.9960 & 0.9990 & 0.9660 \\
				$n p_0=3000$ & (0.1719) & (0.1680) & (0.1665) & (0.2285) & (0.7277) & (0.1814) \\
				\hline
				\multicolumn{7}{c}{}  \\
				\hline\hline
			\end{tabular}
		\end{adjustbox}
		
		{ Notes: $k_{L}'=\min\{\lceil0.10n \hat{p}\rceil, 35\}, k_{L}=\min\{\lceil0.10n \hat{p}\rceil, 40\}$, and $k_{L}^{''}=\min\{\lceil0.10n \hat{p}\rceil, 45\}$. The coverage rates and average lengths of the CIs
			(in parentheses) are reported.}
	\end{table}
	
	\begin{table}[H]
		\centering{}\centering {\caption{DGP(2,1)}
			\label{tabledgp2} }%
		\begin{adjustbox}{max width=\textwidth}
			\begin{tabular}{r|ccc|ccc}
				\multicolumn{7}{c}{}  \\
				\hline\hline			
				\multicolumn{7}{c}{}  \\
				\multicolumn{7}{c}{Panel A: $x = 1.5$}  \\\hline
				& \multicolumn{3}{c|}{Quasi-Bayesian} & \multicolumn{3}{c}{Pickands}\\\hline
				& $k_{L}'$ & $k_{L}$ & $k_{L}^{''}$ & Mom & Momt-pick & Pwm \\
				\hline
				$n=500$& 0.9700 & 0.9660 & 0.9720 & 0.9980 & 0.9820 & 0.9330 \\
				$n p_0=125$& (0.840) & (0.8180) & (0.830) & (1.4814) & (3.9622) & (1.4388) \\
				$n=1000$& 0.9780 & 0.9800 & 0.9770 & 0.9970 & 0.9940 & 0.9630 \\
				$n p_0=250$& (0.4799) & (0.4805) & (0.4793) & (1.0809) & (3.1801) & (1.1207) \\
				$n=2000$& 0.9690 & 0.9620 & 0.9610 & 0.9760 & 0.9870 & 0.9730 \\
				$n p_0=500$& (0.3457) & (0.3419) & (0.3376) & (0.7511) & (2.3309) & (0.8267) \\
				$n=4000$& 0.9770 & 0.9680 & 0.9700 & 0.9550 & 0.9740 & 0.9790 \\
				$n p_0=1000$& (0.2509) & (0.2481) & (0.2459) & (0.4650) & (1.4488) & (0.5810) \\
				\hline
				\multicolumn{7}{c}{}  \\
				\multicolumn{7}{c}{Panel B: $x = 3.0$}  \\\hline
				& \multicolumn{3}{c|}{Quasi-Bayesian} & \multicolumn{3}{c}{Pickands}\\\hline
				& $k_{L}'$ & $k_{L}$ & $k_{L}^{''}$ & Mom & Momt-pick & Pwm \\
				\hline
				$n=500$& 0.9040 & 0.9030 & 0.9060 & 1.0000 & 0.9960 & 0.3990 \\
				$n p_0=250$& (0.8849) & (0.8871) & (0.8882) & (1.8113) & (4.8315) & (1.1930) \\
				$n=1000$& 0.9080 & 0.9020 & 0.8900 & 0.9990 & 0.9940 & 0.5420 \\
				$n p_0=500$& (0.5986) & (0.5873) & (0.5768) & (1.2152) & (3.4421) & (1.0037) \\
				$n=2000$& 0.9610 & 0.9600 & 0.9640 & 0.9830 & 0.9840 & 0.7430 \\
				$n p_0=1000$& (0.4145) & (0.4075) & (0.4019) & (0.7482) & (2.1716) & (0.7682) \\
				$n=4000$& 0.9410 & 0.9410 & 0.9350 & 0.9750 & 0.9780 & 0.8010 \\
				$n p_0=2000$& (0.2976) & (0.2915) & (0.2883) & (0.4875) & (1.4713) & (0.5490) \\
				\hline
				\multicolumn{7}{c}{}  \\
				\multicolumn{7}{c}{Panel C: $x = 4.5$}  \\\hline
				& \multicolumn{3}{c|}{Quasi-Bayesian} & \multicolumn{3}{c}{Pickands}\\\hline
				\hline
				& $k_{L}'$ & $k_{L}$ & $k_{L}^{''}$ & Mom & Momt-pick & Pwm \\
				\hline
				$n=500$& 0.9550 & 0.9570 & 0.9530 & 0.9980 & 0.9930 & 0.5590 \\
				$n p_0=375$& (0.8537) & (0.8396) & (0.8420) & (1.5060) & (4.9829) & (0.7829) \\
				$n=1000$& 0.9270 & 0.9280 & 0.9250 & 0.9810 & 0.9910 & 0.6990 \\
				$n p_0=750$& (0.6194) & (0.6046) & (0.5975) & (0.9935) & (3.1977) & (0.6487) \\
				$n=2000$& 0.9610 & 0.9640 & 0.9600 & 0.9650 & 0.9760 & 0.8410 \\
				$n p_0=1500$& (0.4328) & (0.4241) & (0.4176) & (0.6307) & (2.0447) & (0.5009) \\
				$n=4000$& 0.9680 & 0.9650 & 0.9640 & 0.9550 & 0.9710 & 0.8640 \\
				$n p_0=3000$ & (0.3019) & (0.2963) & (0.2925) & (0.4454) & (1.4648) & (0.3756) \\
				\hline
				\multicolumn{7}{c}{}  \\
				\hline\hline
			\end{tabular}
		\end{adjustbox}
		
		{ Notes: $k_{L}'=\min\{\lceil0.10n \hat{p}\rceil, 35\}, k_{L}=\min\{\lceil0.10n \hat{p}\rceil, 40\}$, and $k_{L}^{''}=\min\{\lceil0.10n \hat{p}\rceil, 45\}$. The coverage rates and average lengths of the CIs
			(in parentheses) are reported.}
	\end{table}
	
	\begin{table}[H]
		\centering{}\centering {\caption{DGP(3,1)}
			\label{tabledgp3} }%
		\begin{adjustbox}{max width=\textwidth}
			\begin{tabular}{r|ccc|ccc}
				\multicolumn{7}{c}{}  \\
				\hline\hline			
				\multicolumn{7}{c}{}  \\
				\multicolumn{7}{c}{Panel A: $x = 1.5$}  \\\hline
				& \multicolumn{3}{c|}{Quasi-Bayesian} & \multicolumn{3}{c}{Pickands}\\\hline
				& $k_{L}'$ & $k_{L}$ & $k_{L}^{''}$ & Mom & Momt-pick & Pwm \\
				\hline
				$n=500$& 0.9670 & 0.9660 & 0.9640 & 1.0000 & 0.9950 & 0.9400 \\
				$n p_0=125$& (0.9531) & (0.9424) & (0.9367) & (2.0546) & (5.0625) & (1.7818) \\
				$n=1000$& 0.9780 & 0.9810 & 0.9800 & 0.9960 & 0.9930 & 0.9680 \\
				$n p_0=250$& (0.5472) & (0.5478) & (0.5480) & (1.4625) & (3.9807) & (1.3816) \\
				$n=2000$& 0.9770 & 0.9740 & 0.9770 & 0.9940 & 0.9990 & 0.9770 \\
				$n p_0=500$& (0.4063) & (0.3952) & (0.3872) & (1.0924) & (3.0601) & (1.0267) \\
				$n=4000$& 0.9770 & 0.9760 & 0.9780 & 0.9840 & 0.9920 & 0.9890 \\
				$n p_0=1000$& (0.3210) & (0.3137) & (0.3078) & (0.7476) & (2.1403) & (0.7322) \\
				\hline
				\multicolumn{7}{c}{}  \\
				\multicolumn{7}{c}{Panel B: $x = 3.0$}  \\\hline
				& \multicolumn{3}{c|}{Quasi-Bayesian} & \multicolumn{3}{c}{Pickands}\\\hline
				& $k_{L}'$ & $k_{L}$ & $k_{L}^{''}$ & Mom & Momt-pick & Pwm \\
				\hline
				$n=500$& 0.8990 & 0.8970 & 0.8980 & 1.0000 & 0.9990 & 0.3320 \\
				$n p_0=250$& (0.9964) & (1.0069) & (1.0045) & (2.7008) & (6.4429) & (1.4510) \\
				$n=1000$& 0.9110 & 0.9140 & 0.9150 & 1.0000 & 0.9990 & 0.4930 \\
				$n p_0=500$& (0.70) & (0.6728) & (0.6505) & (1.8219) & (4.5428) & (1.2463) \\
				$n=2000$& 0.9560 & 0.9610 & 0.9560 & 0.9990 & 0.9980 & 0.6240 \\
				$n p_0=1000$& (0.5249) & (0.5067) & (0.4960) & (1.2479) & (3.3172) & (0.9707) \\
				$n=4000$& 0.9610 & 0.9670 & 0.9680 & 0.9940 & 0.9970 & 0.7610 \\
				$n p_0=2000$& (0.4035) & (0.3944) & (0.3842) & (0.8147) & (2.2525) & (0.7152) \\
				\hline
				\multicolumn{7}{c}{}  \\
				\multicolumn{7}{c}{Panel C: $x = 4.5$}  \\\hline
				& \multicolumn{3}{c|}{Quasi-Bayesian} & \multicolumn{3}{c}{Pickands}\\\hline
				\hline
				& $k_{L}'$ & $k_{L}$ & $k_{L}^{''}$ & Mom & Momt-pick & Pwm \\
				\hline
				$n=500$& 0.9670 & 0.9610 & 0.9690 & 1.0000 & 0.9990 & 0.4430 \\
				$n p_0=375$& (0.9845) & (0.9624) & (0.9619) & (2.4174) & (7.0347) & (1.0180) \\
				$n=1000$& 0.9590 & 0.9630 & 0.9640 & 1.0000 & 1.0000 & 0.6010 \\
				$n p_0=750$& (0.7604) & (0.7392) & (0.7236) & (1.6713) & (4.9101) & (0.8610) \\
				$n=2000$& 0.9570 & 0.9630 & 0.9570 & 0.9870 & 0.9960 & 0.7430 \\
				$n p_0=1500$& (0.5651) & (0.5496) & (0.5368) & (1.1008) & (3.2691) & (0.6784) \\
				$n=4000$& 0.9840 & 0.9800 & 0.9870 & 0.9780 & 0.9990 & 0.8360 \\
				$n p_0=3000$ & (0.4302) & (0.4164) & (0.4069) & (0.7426) & (2.3010) & (0.5187) \\
				\hline
				\multicolumn{7}{c}{}  \\
				\hline\hline
			\end{tabular}
		\end{adjustbox}
		
		{ Notes: $k_{L}'=\min\{\lceil0.10n \hat{p}\rceil, 35\}, k_{L}=\min\{\lceil0.10n \hat{p}\rceil, 40\}$, and $k_{L}^{''}=\min\{\lceil0.10n \hat{p}\rceil, 45\}$. The coverage rates and average lengths of the CIs
			(in parentheses) are reported.}
	\end{table}



	To sum up, the quasi-Bayesian method works well and is not sensitive to
	reasonable choices of tuning parameters. However, we also want to emphasize
	that these results do not mean our method outperforms the existing methods in
	the literature in all respects. First, the performance of other three existing
	estimators can still be improved. Second, the three existing methods are based
	on the intermediate, rather than extreme, quantile estimations. Therefore,
	they can tolerate more outliers. As put by \cite{DFS10}, \textquotedblleft%
	\emph{ It is difficult to imagine one procedure being preferable in all
		contexts. Hence, a sensible practice is not to restrict the frontier analysis
		to one procedure \ldots.}" We view our quasi-Bayesian method as an alternative
	to the existing inference procedures in the literature. The simulation study
	above shows our method has a better control of size in finite samples with
	small or moderate sample sizes.
	
	\section{An Empirical Application}
	
	\label{sec:app} We apply our inference approach to the frontier analysis of
	French post offices observed in 1994. The same dataset is also studied in
	\cite{DFS10}. In this context, $X$ and $Y$ denote the quantity of labor and
	volume of the delivered mails, respectively. The total number of observations
	is 4,000, which is close to what we consider in our simulations. Table \ref{expleT:summary} contains the summary
	statistics of the data.
	
	\begin{table}[H]
		\caption{Summary Statistics}%
		\label{expleT:summary}
		\begin{centering}
			\begin{tabular}{c|ccccccc}
				\hline
				& MEAN & STD & MIN & LQ & MEDIAN & UQ & MAX\\
				\hline
				$X$ & 1592 & 790 & 177 & 1128 & 1338 & 1730 & 4405\\
				$Y$ & 7.709 & 0.612 & 3.829 & 7.349 & 7.698 & 8.062 & 9.576\\
				\hline
			\end{tabular}
			\par\end{centering}
		\centering{}Notes: STD = standard errors, LQ = 25\% quantile, UQ = 75\%
		quantile.{}\end{table}
	
	There are four data points deviating from the rest of the sample (shown as
	circles in Figures \ref{fig:PF} and \ref{fig:CF}). We view them as outliers.
	We use the same sets of tuning parameters as in the simulations. Specifically, we set
	$k_{0}=$number of spotted outliers $+2,$ $m=1+\frac{5}{\left\lceil
		k_{0}\right\rceil }$ (or, equivalently, $sp=5$), and $k_{L}=\min\left\{
	0.1n\hat{p},40\right\}$. In Figure $\ref{fig:PF}$, we report the point estimators and the
	associated $95\%$ confidence intervals of the production frontier for labor between 800
	and 4400. Note the number of observations with labor less than 800 is 187. For comparison, we also report the point estimates of three existing methods considered in Section \ref{sec:sim}, namely \textquotedblleft
	Mom\textquotedblright, \textquotedblleft Momt\_pick\textquotedblright\ and
	\textquotedblleft Pwm.\textquotedblright\

	\cite{CFS02} consider the estimation and inference for the cost
	function.\footnote{We thank a referee for pointing it out.} Denote the cost
	function as $C\left(  y\right) = \inf\{X: Y \geq y\}$, where $(X,Y)$ follow the joint distribution of input and output. Let $\tilde{X}=M_{1}-Y$,
	$\tilde{Y}=M_{2}-X$, and $\tilde{x} = M_1 - y$, where $M_1$ and $M_2$ are two large positive constants such that $\tilde
	{X}$ and $\tilde{Y}$ are always positive. Then, we have $C\left(  y\right) = M_2-\tilde{\psi}(\tilde{x})$ where $\tilde{\psi}(\tilde{x})\equiv \sup\{\tilde{Y}: \tilde{X} \leq \tilde{x}\}$. This means we can transform the cost function $C(y)$ to a production function $\tilde{\psi}(\tilde{x})$. We first estimate and infer $\tilde{\psi}(\tilde{x})$. The corresponding point estimator and confidence interval are denoted as $\widehat{\tilde{\psi}}_m(\tilde{x})$ and $\widetilde{CI}_{m}$, respectively, where $m \in \{\text{``Quasi-Bayesian", ``Mom", ``Momt\_pick", ``Pwm"}\}$. Then, we can obtain the point estimate and confidence interval for $C(y)$ as $M_2 - \widehat{\tilde{\psi}}_m(\tilde{x})$ and $M_2 - \widetilde{CI}_m$, respectively. We emphasize that our quasi-Bayesian point estimate and confidence interval are numerically invariant to the choices of $M_1$ and $M_2$. We maintain Assumptions in Theorem \ref{thm:simmain} for $(\tilde{X},\tilde{Y})$. In implementation, we set $M_1 = \max_{i = 1,\cdots,n}Y_i$ and $M_2 = \max_{i = 1,\cdots,n}X_i$ and use the same sets of tuning
	parameters as discussed above. In Figure \ref{fig:CF}, we report the results for the cost function when the volume of delivered mails ranges between 0 and 7,500. Note the effective sample for the cost function estimation at output $y$ is all the observations with output $Y \geq y$. Therefore, the effective sample size for estimating the cost of 7,500 mails  is 120. We rescale the input between 8,000 and 14,000 in Figure \ref{fig:CF} to better present our results. We also report the ``FDH" point estimates in both Figures \ref{fig:PF} and \ref{fig:CF}. 
	
	\begin{figure}[H]
		\caption{Estimation and Inference for the Production Function}%
		\label{fig:PF}
		\centering{}\includegraphics[scale=0.6]{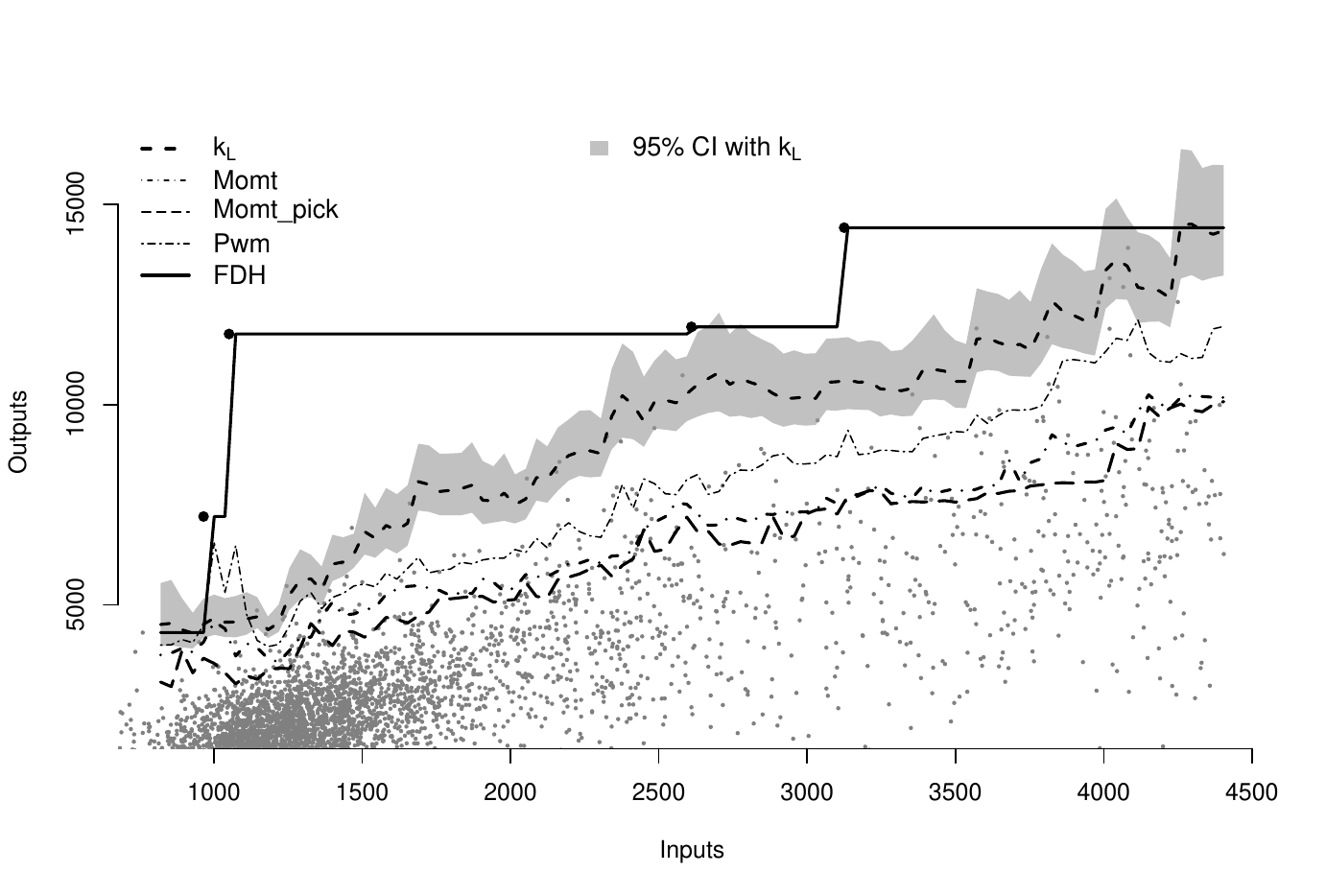}
	\end{figure}
	
	\begin{figure}[H]
		\caption{Estimation and Inference for the Cost Function}%
		\label{fig:CF}
		\centering{}\includegraphics[scale=0.6]{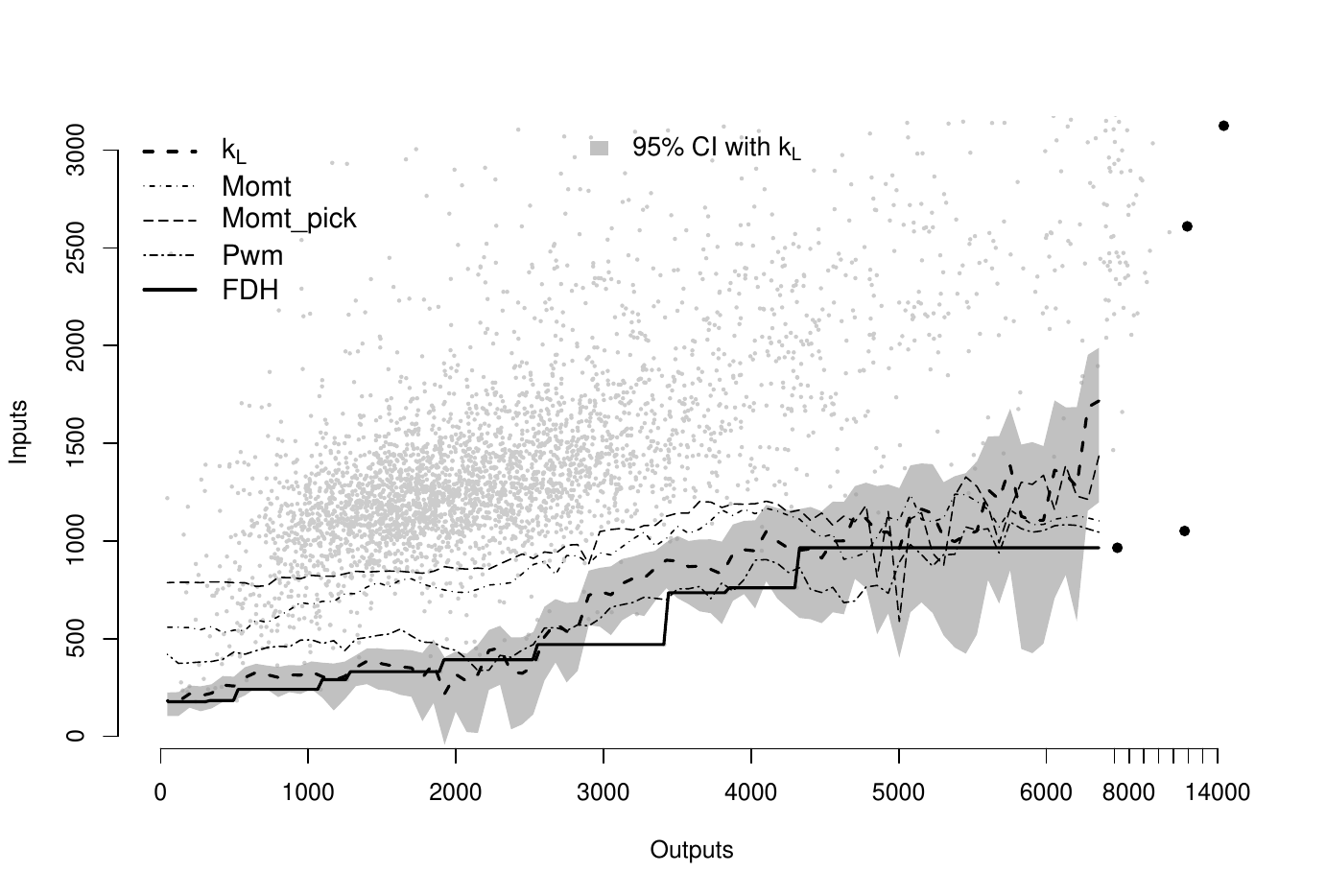}
	\end{figure}

	%
	
	Several comments regarding Figures
	$\ref{fig:PF}$ and $\ref{fig:CF}$ are in order. First, our point estimators
	and confidence intervals are all clearly away from the outliers. This confirms the robustness of our method to outliers. In contrast, the FDH estimator is greatly influenced by the outliers in Figure \ref{fig:PF}. Second, the point estimators of our
	method in general
	envelope the data in Figures \ref{fig:PF} and \ref{fig:CF}. In Figure \ref{fig:PF}, it is above all the other estimators except when the input is around 1,000. In Figure \ref{fig:CF}, it is below all the other estimators when the output ranges from 0 to 3,000. Third, in Figure \ref{fig:CF}, the confidence intervals become wider as the output grows. This is because the effective sample size for the estimation of the cost function shrinks as  output grows. In contrast, in Figure \ref{fig:PF}, as the input grows, more observations are used for estimation and inference of the production frontier, which results in shorter confidence intervals. Fourth, it appears that the majority of French offices deliver mails well below the frontiers (lower than the lower bound of
	the confidence intervals). This indicates that most French offices are not efficient. An
	interesting direction for research is to investigate the relationship between the relative efficiency and
	other demographic variables and identify the key factors that contribute to the inefficiency. Last, the point estimators for the production and cost functions
	are not monotonic in both figures. We note that this is a common feature for 
	point-wise estimation methods. For example, point estimators proposed by \cite{CFS02},
	\cite{ADT05}, \cite{DFS10} and \cite{DFS12} cannot guarantee monotonicity
	either. To ensure monotonicity, one can possibly take our estimates as initial
	estimates and monotonize them following \citet[Section 2.4.2]{DGG14}. The inference procedure,
	however, does need to change accordingly. We believe this direction is rather
	interesting, but is outside the scope of this paper. We leave it for future research.

	\section{Conclusion}
	\label{sec:concl}
	In this article, we propose a quasi-Bayesian method to estimate and infer the production frontier. Our procedure is based on extreme quantile estimators, and thus is robust to a few outliers. The asymptotic validity of our method is theoretically justified. The application to the French post offices dataset shows that our method can be a practical alternative to existing inference methods in the literature.
	

\newpage
\appendix
\begin{center}
	\Large{Supplement to ``Quasi-Bayesian Inference for Production Frontiers"}
\end{center}

\begin{abstract}
	This paper gathers the supplementary material to the original paper. Section \ref{sec:convex} introduces a convexity lemma which will be used later. 	Sections \ref{sec:pf_feasible}, \ref{sec:pf_simmain}, \ref{sec:pf_optimal}, \ref{sec:pf_post}, and \ref{sec:pf_PDF} prove Theorems \ref{thm:feasible}, \ref{thm:simmain}, \ref{thm:optimal}, Corollary \ref{cor:posterior_quantile}, and Proposition \ref{prop:PDF}, respectively. Section  \ref{sec:compute-estimators} describes the computation of three existing methods considered in Section \ref{sec:sim}. Section \ref{sec:PDF} illustrates how to evaluate the density $f\left(u_{1},\dots,u_{L};\xi\right)$ in our MCMC procedure. Section \ref{sec:cal_xi} provides some calculation of $\xi$ for production and cost frontiers. Section \ref{sec:moresimul} contains additional simulation results. 
	
	\medskip
	\textbf{Keywords: Approximate Bayesian Computation, Extreme Value Theory, Fixed-k Asymptotics}

\end{abstract}

\setcounter{page}{1} \renewcommand\thesection{\Alph{section}} %

\renewcommand{\thefootnote}{\arabic{footnote}} \setcounter{footnote}{0}

\setcounter{equation}{0}

\section{The Convexity Lemma due to  \cite{G96} and \cite{K99}}
\label{sec:convex}
We first state the convexity lemma attributed to \cite{G96} and \cite{K99}.
\begin{lem}
	Suppose (i) a sequence of convex lower-semicontinuous functions $Q_n$: $\Re \mapsto \overline{\Re} = \Re \cup \{\pm \infty\}$ marginally converges to $Q_\infty$: $\Re \mapsto \overline{\Re}$ over a dense subset of $\Re$, (ii) $Q_\infty$ is finite over a non-empty open set $\mathcal{Z}_0$, and (iii) $Q_\infty$ is uniquely minimized at a random variable $Z_\infty$, then any minimizer of $Q_n$, denoted $\hat{Z}_n(1)$, converges in distribution of $Z_\infty$.
	\label{lem:convexity}
\end{lem}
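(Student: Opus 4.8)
The plan is to prove this classical convexity lemma (in the tradition of \cite{G96} and \cite{K99}) by the standard three-move argument: pass to an almost-sure representation, use convexity to promote pointwise convergence on a dense set to locally uniform convergence, and then sandwich the minimizers around the unique limiting minimizer. First I would fix a countable dense set $D \subset \Re$. Marginal convergence over a dense subset means the $\overline{\Re}^{D}$-valued random elements $(Q_n(z))_{z\in D}$ converge in distribution to $(Q_\infty(z))_{z\in D}$; since $D$ is countable the target is Polish, so the Skorokhod--Dudley representation theorem furnishes copies $\tilde Q_n$, $\tilde Q_\infty$ on a common probability space with $\tilde Q_n(z) \to \tilde Q_\infty(z)$ for every $z \in D$, almost surely. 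Convexity and lower semicontinuity on all of $\Re$ are recovered by taking the lsc convex envelope of the restriction to $D$, so $\tilde Q_n$ and $\tilde Q_\infty$ may be treated as bona fide convex lsc functions whose values on $D$ converge a.s.

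Next, working on this a.s. event, I would invoke the deterministic convexity lemma of Rockafellar type: on the interior $\mathcal I$ of $\{Q_\infty < \infty\}$ — which is open and contains $\mathcal Z_0$ by hypothesis (ii) and convexity of the effective domain — pointwise convergence of convex functions on the dense set $D \cap \mathcal I$ to the finite convex function $\tilde Q_\infty$ forces $\tilde Q_n \to \tilde Q_\infty$ uniformly on every compact subset of $\mathcal I$ (this simultaneously propagates finiteness and local boundedness of $\tilde Q_n$ on $\mathcal I$ for $n$ large). This is the step that converts the weak dense-subset hypothesis into something usable.

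The core step, and the one I expect to be the main obstacle, is the sandwich argument together with ruling out minimizers escaping toward the boundary of $\mathcal I$ or to infinity. Still on the a.s. event: the unique minimizer $Z_\infty$ of the convex lsc function $\tilde Q_\infty$ lies in $\mathcal I$, and for every $\eps > 0$ small enough that $\bar{B}(Z_\infty,\eps) \subset \mathcal I$, uniqueness together with lower semicontinuity and compactness of the sphere gives $\delta_\eps := \min_{|z - Z_\infty| = \eps} \tilde Q_\infty(z) - \tilde Q_\infty(Z_\infty) > 0$. Uniform convergence on $\bar{B}(Z_\infty,\eps)$ then yields, for all large $n$, $\min_{|z - Z_\infty| = \eps} \tilde Q_n(z) > \tilde Q_n(Z_\infty)$; by convexity of $\tilde Q_n$ along any segment through $Z_\infty$, no minimizer of $\tilde Q_n$ can lie on or outside that sphere, so every argmin $\hat Z_n(1)$ satisfies $|\hat Z_n(1) - Z_\infty| < \eps$ eventually, hence $\hat Z_n(1) \to Z_\infty$ a.s. Since this holds for \emph{every} such selection, the measurability issue is benign: one may take $\hat Z_n(1)$ to be the left endpoint of the (closed, by lower semicontinuity) argmin set, which is a measurable functional of a convex lsc function, and any other selection converges to the same limit. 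Finally, a.s. convergence implies convergence in distribution, and because the representation preserves the law of each $Q_n$ and hence (by the selection argument) of its argmin, we conclude $\hat Z_n(1) \convD Z_\infty$ on the original probability space, which completes the proof.
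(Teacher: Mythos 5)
The paper does not actually prove Lemma \ref{lem:convexity}: it is recalled as a known result attributed to \cite{G96} and \cite{K99} and cited without derivation, so there is no in-paper argument to compare against. Judged on its own merits, your proposal follows the standard template for this convexity lemma (Skorokhod--Dudley almost-sure representation on a countable dense set, convexity upgrading pointwise convergence to locally uniform convergence on compacta of the interior of the effective domain, then a sandwich around the unique minimizer), and your treatment of the selection/measurability issue via the left-endpoint selection is a reasonable sketch of the usual fix.

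There is, however, one genuine gap in the sandwich step. You assert that ``the unique minimizer $Z_\infty$ of the convex lsc function $\tilde Q_\infty$ lies in $\mathcal I$,'' where $\mathcal I$ is the interior of the effective domain. That does not follow from hypotheses (ii) and (iii). Take $Q_\infty(z) = z$ on $[0,1]$ and $+\infty$ elsewhere: condition (ii) holds with $\mathcal Z_0 = (0,1)$ and condition (iii) with $Z_\infty = 0$, yet $Z_\infty \notin \mathcal I = (0,1)$, so no $\epsilon$ with $\bar B(Z_\infty,\epsilon) \subset \mathcal I$ exists and your $\delta_\epsilon$ construction collapses. The lemma remains true in such boundary cases, but the boundary side needs a separate monotonicity argument: picking $d_1 < Z_\infty < d_2$ in the dense set with $Q_\infty(d_1) = +\infty$ and $d_2 \in \mathcal Z_0$, one has $\tilde Q_n(d_1) \to +\infty$ while $\tilde Q_n(d_2)$ stays bounded, so eventually $\tilde Q_n(d_1) > \tilde Q_n(d_2)$ and convexity forces $\tilde Q_n$ to be strictly decreasing on $(-\infty, d_1]$, driving every minimizer of $\tilde Q_n$ to the right of $d_1$; combined with the interior-side argument on $(Z_\infty, \cdot)$ (where $Q_\infty$ is strictly increasing by uniqueness plus convexity), this closes the case. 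As written, your proof fails precisely at the boundary of $\mathrm{dom}\,Q_\infty$. In this paper's own applications the limiting criterion happens to be finite on all of $\Re$, so $\mathcal I = \Re$ and the gap never bites, but the lemma is stated in the more general form and the argument should match it.
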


\section{Proof of Theorem \ref{thm:feasible}}
\label{sec:pf_feasible}
Let $\alpha_n = 1/(q(1)-q(1-1/(np_0)))$, $\widehat{Z}_n(k) = \alpha_n(\hat{q}_n(\hat{\tau}_{n}) - q(1))$, $\widehat{Z}^c_n(k) = \alpha_n(\hat{q}_n(\hat{\tau}_{n}) - q(\tau_n))$. We divide the proof into two steps. In the first step, we show that 

\begin{equation}
\label{eq:Zhat}
\begin{aligned}
\begin{pmatrix}
\widehat{Z}_n(k_0)  \\
\widehat{Z}_n(mk_0)  \\
\widehat{Z}_n(k_1)  \\
\vdots  \\
\widehat{Z}_n(k_L)  \\
\end{pmatrix} \convD \begin{pmatrix}
Z_\infty(k_0) \\
Z_\infty(mk_0) \\
Z_\infty(k_1) \\
\vdots \\
Z_\infty(k_L)\\
\end{pmatrix}.
\end{aligned}
\end{equation}
In the second step, we derive the desired results in theorem. 

\textbf{Step 1:}\\
Denote $\mathbb{L}(u,v) = (v-u)\mathds{1}\{u < v\}$.
\begin{equation}
\begin{aligned}
\widehat{Z}_n(k) = & \argmin_z \sum_{i=1}^n\frac{1}{\alpha_n}\biggl[\alpha_n(Y_i - q(1)) - z\biggr]\biggl[1-\frac{k}{n\hat{p}} - \mathds{1}\{\alpha_n(Y_i - q(1)) \leq z \} \biggr]\mathds{1}\{X_i\leq x\} \\
= & \argmin_z \sum_{i=1}^n\frac{1}{\alpha_n}\biggl[\alpha_n(Y_i - q(1)) - z\biggr]\biggl[\mathds{1}\{\alpha_n(Y_i - q(1)) > z \}-\frac{k}{n\hat{p}} \biggr]\mathds{1}\{X_i\leq x\} \\
= & \argmin_z kz + \sum_{i=1}^n \mathbb{L}(-\alpha_n(Y_i-q(1)),-z)\mathds{1}\{ X_i \leq x\} \\
= & \argmin_z kz + \int \mathbb{L}(u,-z)d\hat{N}_n \\
= & \argmin_{-z} -kz + \int \mathbb{L}(u,z)d\hat{N}_n \\
= & -\argmin_z -kz + \int \mathbb{L}(u,z)d\hat{N}_n,
\label{eq:obj}
\end{aligned}
\end{equation}
where $\hat{N}_n = \sum_{i=1}^n \mathds{1}\{ -\alpha_n(Y_i-q(1)) \in \cdot,X_i \leq x \}$ is a point process and the second last inequality is due to a change of variables. We denote 
\begin{equation}
\label{eq:obj1}
Q_n(z,k) = -kz + \int \mathbb{L}(u,z)d\hat{N}_n
\end{equation} 
as the sample objective function. Then 
\begin{equation*}
\begin{aligned}
& (-\widehat{Z}_n(k_0),-\widehat{Z}_n(mk_0),-\widehat{Z}_n(k_1),\cdots,-\widehat{Z}_n(k_L)) \\
= & \argmin_{z_0,\tilde{z}_0,z_1,\cdots,z_L} Q_n(z_0,k_0) + Q_n(\tilde{z}_0,mk_0) + \sum_{l=1}^LQ_n(z_l,k_l)
\end{aligned}
\end{equation*}

We first derive the limit of the sample objective function 
\begin{equation}
\label{eq:Qsum}
Q_n(z_0,k_0) + Q_n(\tilde{z}_0,mk_0) + \sum_{l=1}^LQ_n(z_l,k_l)
\end{equation}
point-wise in $(z_0,\tilde{z}_0,z_1,\cdots,z_L)$. Since the check function $\ell_\tau(u)$ and thus the sample objective function are convex, the point-wise convergence in $(z_0,\tilde{z}_0,z_1,\cdots,z_L)$ is sufficient for the uniform convergence in $(z_0,\tilde{z}_0,z_1,\cdots,z_L)$. Given the uniform convergence of the sample objective function, in the second step we show that the limiting objective function has a unique minimizer $$(-Z_\infty(k_0),-Z_\infty(mk_0),-Z_\infty(k_1),\cdots,-Z_\infty(k_L))$$ with probability one. Then, by Lemma \ref{lem:convexity}, we have 
$$(\widehat{Z}_n(k_0),\widehat{Z}_n(mk_0),\widehat{Z}_n(k_1),\cdots,\widehat{Z}_n(k_L)) \convD (Z_\infty(k_0),Z_\infty(mk_0),Z_\infty(k_1),\cdots,Z_\infty(k_L)).$$

We focus on deriving the limit of $Q_n(z,k)$ in \eqref{eq:obj1} with generic $(z,k)$ such that $k$ is not an integer. Then, the limit of \eqref{eq:Qsum} is just the sum of the limits of each term in it. 

For the second term of $Q_n(z,k)$ in \eqref{eq:obj1}, we can show that the point process $\hat{N}_n(\cdot)$ weakly converges to $N(\cdot)$, a Poisson random measure with mean measure $\mu([a,b]) = \eta^{-1}(b) - \eta^{-1}(a)$. In addition, note that both $\hat{N}_n(\cdot)$ and $N(\cdot)$ are random measures on $\Re^+ = [0,\infty)$ because $Y_i \leq q(1)$ for any $i\geq 1$. Then for any fixed $z \geq 0$ and $u \in \Re^+$, $|\mathbb{L}(u,z)|$ is bounded by $z$, vanishes for $u\geq z$, and is continuous in $u$. By the continuous mapping theorem, we have, point-wise in $z$,
$$\int \mathbb{L}(u,z)d\hat{N}_n \convD \int \mathbb{L}(u,z)dN.$$
Now we show
$$\hat{N}_n(\cdot) \convD N(\cdot).$$
Let $T_i = \alpha_n(Y_i - q(1))$. By \citet[][Lemma 9.3 and 9.4]{CH05}, it suffices to show that, for any $0 \leq a < b <\infty$,
$$n\mathbb{P}(-T_i \in [a,b],X_i \leq x) \rightarrow \eta^{-1}(b)-\eta^{-1}(a).$$
Note that $F(y/x) = \mathbb{P}(Y \leq y|X \leq x)$ and
$$\frac{1}{np_0} = 1 - F(q(1) - \frac{1}{\alpha_n}/x).$$
Then,
\begin{equation*}
\begin{aligned}
n\mathbb{P}(-T_i \in [a,b],X_i \leq x) = & np_0 \mathbb{P}(-T_i \in [a,b]|X_i \leq x) \\
= & \mathbb{P}(Y_i \in [q(1) - \frac{b}{\alpha_n},q(1) - \frac{a}{\alpha_n}]|X_i\leq x)/(1-F(q(1) - \frac{1}{\alpha_n}/x)) \\
= & \frac{F(q(1) - \frac{a}{\alpha_n}/x) - F(q(1) - \frac{b}{\alpha_n}/x)}{1-F(q(1) - \frac{1}{\alpha_n}/x)} \\
\rightarrow & \eta^{-1}(b) - \eta^{-1}(a),
\end{aligned}
\end{equation*}
where the last convergence follows Assumption \ref{ass:ev}. By \citet[][Propositions 3.7 and 3.8]{Resnick87}, $N(\cdot)$ can be written as $\sum_{i=1}^\infty \mathds{1}\{\mathcal{J}_i \in \cdot\}$, where $\mathcal{J}_i = (\sum_{j=1}^i \mathcal{E}_i)^{-\xi_0}$ and $\{\mathcal{E}_i\}_{i \geq 1}$ is a sequence of i.i.d. standard exponential random variables. Therefore, the sample objective function will converge to
$$Q_\infty(z,k) = - kz + \int \mathbb{L}(u,z)dN = -kz + \sum_{i=1}^\infty \mathbb{L}(\mathcal{J}_i,z)$$
weakly and uniformly over $z \in \Re^+$.

In addition, from the first-order condition of the limit objective function, we have  
\begin{equation*}
\begin{aligned}
& (-Z_\infty(k_0),-Z_\infty(mk_0),-Z_\infty(k_1),\cdots,-Z_\infty(k_L)) \\
= & \argmin_{z_0,\tilde{z}_0,z_1,\cdots,z_L} Q_\infty(z_0,k_0) + Q_\infty(\tilde{z}_0,mk_0) + \sum_{l=1}^LQ_\infty(z_l,k_l).
\end{aligned}
\end{equation*}
This establishes \eqref{eq:Zhat}. 

\textbf{Step 2:}\\
By \eqref{eq:Zhat}, we have
\begin{equation*}
\begin{aligned}
\begin{pmatrix}
\hat{\alpha}_n(\hat{q}_n(\hat{\tau}_{n1}) - q(1)) \\
\vdots  \\
\hat{\alpha}_n(\hat{q}_n(\hat{\tau}_{nL}) - q(1))\\
\end{pmatrix} \convD \begin{pmatrix}
Z_\infty(k_1)/(Z_\infty(k_0) - Z_\infty(mk_0)) \\
\vdots \\
Z_\infty(k_L)/(Z_\infty(k_0) - Z_\infty(mk_0)) \\
\end{pmatrix} =
\begin{pmatrix}
\tilde{Z}_\infty(k_1) \\
\vdots \\
\tilde{Z}_\infty(k_L)\\
\end{pmatrix}.
\end{aligned}
\end{equation*}
The denominator $Z_\infty(k_0) - Z_\infty(mk_0)$ is nonzero because $Z_\infty(k_0) = -\mathcal{J}_{h_0}$ and $Z_\infty(mk_0) = -\mathcal{J}_{h_0'}$ for $h_0 \in (k_0,k_0+1)$ and $h_0' \in (mk_0,mk_0+1)$, respectively, and by Assumption \ref{ass:m}, $h_0 \neq h_0'$ because $mk_0 > k_0+1$.

\section{Proof of Theorem \ref{thm:simmain}}
\label{sec:pf_simmain}
By Theorem \ref{thm:feasible}, we have 
\begin{align*}
(\tilde{Z}_n(k_1),\cdots,\tilde{Z}_n(k_L)) \convD (\tilde{Z}_\infty(k_1),\cdots,\tilde{Z}_\infty(k_L)) = O_p(1). 
\end{align*}
Therefore, for any $\eps>0$, we can choose a constant $M$ sufficiently large such that 
\begin{align*}
\mathbb{P}\left((\tilde{Z}_n(k_1),\cdots,\tilde{Z}_n(k_L)) \in [-M,M]^L\right) \geq 1-\eps. 
\end{align*}
It suffices to show that 
$$\theta_n^{BE}(z_{1n},\cdots,z_{Ln};\hat{\xi}) \convP \theta_\infty^{BE}(z_1,\cdots,z_L),$$ 
where both $(z_{1n},\cdots,z_{Ln})$ and $(z_{1},\cdots,z_{L})$ are two deterministic sequences that belong to $[-M,M]^L$ and 
\begin{align*}
(z_{1n},\cdots,z_{Ln}) \rightarrow (z_{1},\cdots,z_{L}). 
\end{align*}

Note 
\begin{align*}
& \theta_n^{BE}(z_{1n},\cdots,z_{Ln};\hat{\xi}) \\
= & \argmin_z Q_n(z,z_{1n},\cdots,z_{Ln};\hat{\xi}) \\
= & \argmin_z \frac{1}{\hat{\sigma}}\int \int \ell(z-v)f(z_{1n}-v,\cdots,z_{Ln}-v;\xi)\pi(q(1) + v/\hat{\alpha}_n)\phi( \frac{\xi -\hat{\xi}}{\hat{\sigma}})1\{\xi \in \Gamma\}d\xi dv \\
= & \argmin_z \int \int \ell(z-v)f(z_{1n}-v,\cdots,z_{Ln}-v;\hat{\xi}+u\hat{\sigma})\pi(q(1) + v/\hat{\alpha}_n)\phi(u)1\{u \in \Gamma_n\}du dv,
\end{align*}
where $\Gamma_n = (\Gamma - \hat{\xi})/\hat{\sigma}$. In addition, we have 
\begin{align}
\label{eq:dcp}
& \int \int \ell(z-v)f(z_{1n}-v,\cdots,z_{Ln}-v;\hat{\xi}+u\hat{\sigma})\pi(q(1) + v/\hat{\alpha}_n)\phi(u)1\{u \in \Gamma_n\}du dv \notag \\
= & C_n \int \ell(z-v)f(z_{1n}-v,\cdots,z_{Ln}-v;\hat{\xi})\pi(q(1) + v/\hat{\alpha}_n) dv \notag \\
& + \int \int \ell(z-v)(f(z_{1n}-v,\cdots,z_{Ln}-v;\hat{\xi}+u\hat{\sigma})-f(z_{1n}-v,\cdots,z_{Ln}-v;\hat{\xi})) \notag \\
& \times \pi(q(1) + v/\hat{\alpha}_n) \phi(u)1\{u \in \Gamma_n\}dvdu,
\end{align}
where $C_n  = \int_{\Gamma_n} \phi(u)du \rightarrow 1$ as $n \rightarrow \infty$. By Assumption \ref{ass:rho}.4, $f(z_{1},\cdots,z_{L};\xi)$ is continuous in all its arguments, $\pi(q(1) + v/\hat{\alpha}_n) \rightarrow \pi(q(1))$. Therefore, point-wise in $v$, 
\begin{align*}
C_n \ell(z-v)f(z_{1n}-v,\cdots,z_{Ln}-v;\hat{\xi})\pi(q(1) + v/\hat{\alpha}_n) \convP \ell(z-v)f(z_{1n}-v,\cdots,z_{Ln}-v;\xi_0)\pi(q(1)). 
\end{align*} 
In addition, we have $\mathbb{P}(\hat{\xi} \in \Gamma) \geq 1-\eps$ as $n$ being sufficiently large. Therefore, by Assumption \ref{ass:rho}.4 and with probability greater than $1-\eps$, 
\begin{align*}
\int |\ell(z-v)f(z_{1n}-v,\cdots,z_{Ln}-v;\hat{\xi})\pi(q(1) + v/\hat{\alpha}_n)|dv \lesssim \int |\ell(z-v)|H_{1M}(v)dv < \infty. 
\end{align*}
By the dominated convergence theorem, we have, point-wise in $z$,  
\begin{align*}
& C_n \int \ell(z-v)f(z_{1n}-v,\cdots,z_{Ln}-v;\hat{\xi})\pi(q(1) + v/\hat{\alpha}_n) \{v \in \Omega_n\}dv \\
\convP & \int \ell(z-v)f(z_{1}-v,\cdots,z_{L}-v;\xi_0)\pi(q(1)) dv \equiv Q_\infty(z,z_1,\cdots,z_L). 
\end{align*}
Let $\Gamma_n' = [-\hat{\sigma}^{-1/2},\hat{\sigma}^{-1/2}]$. For the second term on the RHS of \eqref{eq:dcp}, we have 
\begin{align*}
& \biggl \vert \int \int \ell(z-v)(f(z_{1n}-v,\cdots,z_{Ln}-v;\hat{\xi}+u\hat{\sigma})-f(z_{1n}-v,\cdots,z_{Ln}-v;\hat{\xi})) \notag \\
& \times \pi(q(1) + v/\hat{\alpha}_n) \phi(u)1\{u \in \Gamma_n\}1\{v \in \Omega_n\}dvdu \biggr\vert \\
\leq &  \int \int |\ell(z-v)(f(z_{1n}-v,\cdots,z_{Ln}-v;\hat{\xi}+u\hat{\sigma})-f(z_{1n}-v,\cdots,z_{Ln}-v;\hat{\xi}))| \notag \\
& \times \pi(q(1) + v/\hat{\alpha}_n) \phi(u)1\{u \in \Gamma_n'\}dvdu \\
& + \int \int |\ell(z-v)(f(z_{1n}-v,\cdots,z_{Ln}-v;\hat{\xi}+u\hat{\sigma})-f(z_{1n}-v,\cdots,z_{Ln}-v;\hat{\xi}))| \notag \\
& \times \pi(q(1) + v/\hat{\alpha}_n) \phi(u)(1-1\{u \in \Gamma_n'\})dvdu \\ 
\leq & \hat{\sigma}\int \int |\ell(z-v)|H_{2M}(v)|u|\phi(u)1\{u \in \Gamma_n'\}dudv + 2 \int \int |\ell(z-v)| H_1(v) \phi(u)(1-1\{u \in \Gamma_n'\})dudv \\
\convP & 0, 
\end{align*}
where the last inequality is due to Assumption \ref{ass:rho}.4 and the convergence in the last line holds because $\hat{\sigma} \convP 0$ and that 
\begin{align*}
\int \phi(u)(1-1\{u \in \Gamma_n'\})du \convP 0. 
\end{align*}

Therefore, point-wise in $z$, 
\begin{align*}
Q_n(z,z_{1n},\cdots,z_{Ln};\hat{\xi}) \convP Q_\infty(z,z_1,\cdots,z_L). 
\end{align*}

In addition, since $\ell(\cdot)$ is convex in $z$, so be $Q_n(\cdot;\hat{\xi})$ and $Q_\infty(\cdot)$. In view of Lemma \ref{lem:convexity}, we have verified (i) and assumed (ii) and (iii) in Assumption \ref{ass:rho}.7. Therefore, by Lemma \ref{lem:convexity},
\begin{equation*}
\theta_n^{BE}(z_{1n},\cdots,z_{Ln};\hat{\xi}) \convP \theta_\infty^{BE}(z,z_{1},\cdots,z_{L})
\end{equation*}	
where $\theta_n^{BE}(\cdot)$ and $\theta_\infty^{BE}(\cdot)$ are defined in \eqref{eq:gamman} and \eqref{eq:gamma}, respectively. Since the sequence $(z_{1n},\cdots,z_{Ln})$ is arbitrary, we have
$$\theta_n^{BE}(z_{1},\cdots,z_{L};\hat{\xi}) \convP \theta_\infty^{BE}(z_{1},\cdots,z_{L})$$
uniformly over $(z_{1},\cdots,z_{L})$ in any compact subset of the joint support of $(\tilde{Z}_\infty(k_1),\cdots,\tilde{Z}_\infty(k_L))$. In addition, we note that 	
\begin{equation*}
\begin{aligned}
\begin{pmatrix}
\tilde{Z}_n(k_1) \\
\vdots  \\
\tilde{Z}_n(k_L)\\
\end{pmatrix} \convD \begin{pmatrix}
\tilde{Z}_\infty(k_1) \\
\vdots \\
\tilde{Z}_\infty(k_L)\\
\end{pmatrix}.
\end{aligned}
\end{equation*}
Therefore, by the continuous mapping theorem,

\begin{equation*}
\begin{aligned}
\hat{Z}_n^{BE} \equiv \theta_n^{BE}(\tilde{Z}_n(k_1),\cdots,\tilde{Z}_n(k_L);\hat{\xi}) \convD Z_\infty^{BE} \equiv \theta_\infty^{BE}(\tilde{Z}_\infty(k_1),\cdots,\tilde{Z}_\infty(k_L)).
\end{aligned}
\end{equation*}

This concludes the proof.

\section{Proof of Theorem \ref{thm:optimal}}
\label{sec:pf_optimal}
%
%
First, the proof of Theorem \ref{thm:simmain} implies, uniformly over $(z_1,\cdots,z_L) \in [-M,M]^L$, 
$$ \theta_n^{BE}(z_{1},\cdots,z_{L};\hat{\xi}) \convP \theta_\infty^{BE}(z_1,\cdots,z_L),$$
where $\theta_\infty^{BE}(z_1,\cdots,z_L)$ is defined in \eqref{eq:gamma}. In addition, by Assumptions \ref{ass:rho}.3 and \ref{ass:rho}.5, with probability approaching one, 
$$\sup_{v \in K_t}\ell(\theta_n^{BE}(z_1,\cdots,z_L;\hat{\xi})-v)f(z_1-v,\cdots,z_L-v;\xi_0)$$	
is dominated by 
\begin{align*}
C(\sum_{l=1}^L|z_l^{d_2}|+t)^{d_1} H_{3t}(z_1,\cdots,z_L),
\end{align*}
which is an integrable function w.r.t. $(z_1,\cdots,z_L)$ for fixed $t$. Therefore, by the dominated convergence theorem, as $n \rightarrow \infty$
\begin{equation}
\begin{aligned}
& \int_{\Re^L} \ell(\theta_n^{BE}(z_1,\cdots,z_L;\hat{\xi})-v)f(z_1-v,\cdots,z_L-v;\xi_0)dz_1\cdots dz_L \\
\convP &  \int_{\Re^L} \ell(\theta_\infty^{BE}(z_1,\cdots,z_L)-v)f(z_1-v,\cdots,z_L-v;\xi_0)dz_1\cdots dz_L.
\label{eq:R1}
\end{aligned}
\end{equation}
By \eqref{eq:gamma} and a change of variable argument, we have, for any $v$,
$$\theta^{BE}_\infty(z_1,\cdots,z_L)-v = \theta_\infty^{BE}(z_1-v,\cdots,z_L-v).$$
Furthermore, by construction, $f(\cdot;\xi_0)$ is the joint PDF of $(\tilde{Z}_\infty(k_1),\cdots,\tilde{Z}_\infty(k_L))$. Therefore,
\begin{equation*}
\begin{aligned}
\text{the RHS of \eqref{eq:R1}} = & \int_{\Re^L} \ell(\theta_\infty^{BE}(z_1,\cdots,z_L))f(z_1,\cdots,z_L;\xi_0)dz_1\cdots dz_L\\
= & \mathbb{E}\ell(\theta_\infty^{BE}(\tilde{Z}_\infty(k_1),\cdots,\tilde{Z}_\infty(k_L))) = \mathbb{E}\ell(Z_\infty^{BE}),
\end{aligned}
\end{equation*}
where the last equality holds because $Z_\infty^{BE} = \theta_\infty^{BE}(Z_\infty(k_1),\cdots,Z_\infty(k_L))$. Then, we have, for every fixed $t$, 
\begin{align*}
& \int_{K_t} \int_{\Re^L} \ell(\theta_n^{BE}(z_1,\cdots,z_L;\hat{\xi})-v)f(z_1-v,\cdots,z_L-v;\xi_0)dz_1\cdots dz_Ldv/\Lambda(K_t) \\
\convP & \int_{K_t}  \mathbb{E}\ell(Z_\infty^{BE})dv/\Lambda(K_t) = \mathbb{E}\ell(Z_\infty^{BE})
\end{align*}
$$,$$

Taking $\limsup_{t \rightarrow \infty}$ on both sides, we have
$$AAR_{\ell}(\{\theta_n^{BE}\}) = \mathbb{E}\ell(Z_\infty^{BE}).$$

To prove the second result, for each $t \geq 1$, we denote $\tilde{q}_{n,t}^{BE}$ as the quasi-Bayesian estimator with prior $\pi(\overline{q}) = \mathds
{1}\{\hat{\alpha}_n(\overline{q}-q(1)) \in K_t \}$, i.e.,
$$\hat{\alpha}_n (\tilde{q}_{n,t}^{BE} -q(1))= \tilde{\theta}^{BE}_t(\tilde{Z}_n(k_1),\cdots,\tilde{Z}_n(k_1);\hat{\xi}),$$
where $\tilde{\theta}^{BE}_t(z_1,\cdots,z_L)$ is defined in \eqref{eq:thetatilde}. Next, we aim to show
\begin{eqnarray}
\limsup_{t \rightarrow \infty }\limsup_{n \rightarrow \infty} AR_{\ell,K_t}(\tilde{\theta}_t^{BE}) = \mathbb{E}\ell(Z_\infty^{BE}).
\label{eq:RK}
\end{eqnarray}

Note that, 
\begin{equation}
\begin{aligned}
& AR_{\ell,K_t}(\tilde{\theta}_{t}^{BE}(z_1,\cdots,z_L;\hat{\xi})) \\
= &  \int_{-t}^t\int_{\Re^L}\ell(\tilde{\theta}^{BE}_{t}(z_1,\cdots,z_L;\hat{\xi})-v)f(z_1-v,\cdots,z_L-v;\hat{\xi})dz_1 \cdots dz_Ldv/2t \\
= & \int_{-1}^{1}\int_{\Re^L}\ell(\tilde{\theta}^{BE}_{t}(z_1,\cdots,z_L;\hat{\xi})-tu)f(z_1-tu,\cdots,z_L-tu;\hat{\xi})dz_1 \cdots dz_Ldu/2 \\
\convP & \int_{-1}^{1}\int_{\Re^L}\ell(\tilde{\theta}^{BE}_{t}(z_1+tu,\cdots,z_L+tu;\xi_0)-tu)f(z_1,\cdots,z_L;\xi_0)dz_1 \cdots dz_Ldu/2,
\label{eq:RK2}
\end{aligned}
\end{equation}
where the last convergence follows the same argument in \eqref{eq:R1}. By the definition of $\tilde{\theta}_t^{BE}$ in \eqref{eq:thetatilde},
\begin{equation*}
\begin{aligned}
& \tilde{\theta}_t^{BE}(w_1+tu,\cdots,w_L+tu;\xi_0) - tu \\
= & \argmin_\gamma \int_{-t}^t\ell(\gamma + tu - v)f(w_1+tu-v,\cdots,w_L+tu-v;\xi_0)dv \\
= & \argmin_\gamma \int_\Re \mathds{1}\{v \in (t-tu,-t-tu)\}\ell(\gamma - v)f(w_1-v,\cdots,w_L-v;\xi_0)dv.
\end{aligned}
\end{equation*}
Since $ u \in (-1,1)$, as $t \rightarrow \infty$, $\mathds{1}\{v \in (t-tu,-t-tu)\} \uparrow 1$. Therefore, by the monotone convergence theorem, point-wise in $\gamma$, 
\begin{equation*}
\begin{aligned}
& \int_\Re \mathds{1}\{v \in (t-tu,-t-tu)\}\ell(\gamma - v)f(w_1-v,\cdots,w_L-v;\xi_0)dv \\
\rightarrow & \int_\Re \ell(\gamma-v)f(w_1-v,\cdots,w_L-v;\xi_0)dv.
\end{aligned}
\end{equation*}
Then, by Lemma \ref{lem:convexity}, as $t \rightarrow \infty$
\begin{equation}
\begin{aligned}
\tilde{\theta}_t^{BE}(z_1+tu,z_L+tu;\xi_0) - tu  \rightarrow \theta_\infty^{BE}(z_1,\cdots,z_L).
\label{eq:-tu2}
\end{aligned}
\end{equation}
Following \eqref{eq:RK2}, in order to show \eqref{eq:RK}, it suffices to show, as $t\rightarrow \infty$
\begin{equation*}
\begin{aligned}
& \int_{-1}^{1}\int_{\Re^L}\biggl|\ell(\tilde{\theta}^{BE}_{t}(z_1+tu,\cdots,z_L+tu;\xi_0)-tu) - \ell(\theta_\infty^{BE}(z_1,\cdots,z_L))\biggr| \\
& \times f(z_1,\cdots,z_L;\xi_0)dz_1 \cdots dz_Ldu/2 \\
= & \int_{-1}^{1}\int_{\Re^L}\biggl[\ell(\tilde{\theta}^{BE}_{t}(z_1+tu,\cdots,z_L+tu;\xi_0)-tu) - \ell(\theta_\infty^{BE}(z_1,\cdots,z_L))\biggr]^- \\
& \times f(z_1,\cdots,z_L;\xi_0)dz_1 \cdots dz_Ldu/2 \\
& + \int_{-1}^{1}\int_{\Re^L}\biggl[\ell(\tilde{\theta}^{BE}_{t}(z_1+tu,\cdots,z_L+tu;\xi_0)-tu) - \ell(\theta_\infty^{BE}(z_1,\cdots,z_L))\biggr]^+ \\
& \times f(z_1,\cdots,z_L;\xi_0)dz_1 \cdots dz_Ldu/2 \\
= & I_t + II_t \rightarrow 0.
\end{aligned}
\end{equation*}
For $I_t$, we have
$$\biggl[\ell(\tilde{\theta}^{BE}_{t}(z_1+tu,\cdots,z_L+tu;\xi_0)-tu) - \ell(\theta_\infty^{BE}(z_1,\cdots,z_L))\biggr]^- \leq  \ell(\theta_\infty^{BE}(z_1,\cdots,z_L)) $$
which, by Assumption \ref{ass:rho}.4, is integrable w.r.t. $f(z_1,\cdots,z_L;\xi_0)1\{|u|<1\}dz_1\cdots dz_Ldu$. Therefore, by \eqref{eq:-tu2} and the dominated convergence theorem, we have $I_t \rightarrow 0.$

In addition, by \eqref{eq:thetatilde},
\begin{equation*}
\begin{aligned}
& \int_{-1}^1\int_{\Re^L}\ell(\tilde{\theta}^{BE}_{t}(z_1+tu,\cdots,z_L+tu;\xi_0)-tu)f(z_1,\cdots,z_L;\xi_0)dz_1\cdots dz_L du \\
= & \int_{-t}^t\int_{\Re^L}\ell(\tilde{\theta}^{BE}_{t}(z_1,\cdots,z_L;\xi_0)-v)f(z_1-v,\cdots,z_L-v;\xi_0)dz_1 \cdots dz_Ldv/2t \\
\leq & \int_{-t}^t\int_{\Re^L}\ell(\theta^{BE}_{\infty}(z_1,\cdots,z_L)-v)f(z_1-v,\cdots,z_L-v;\xi_0)dz_1 \cdots dz_Ldv/2t \\
= & \int_{-1}^1\int_{\Re^L}\ell(\theta^{BE}_{\infty}(z_1,\cdots,z_L))f(z_1,\cdots,z_L;\xi_0)dz_1\cdots dz_L du/2.
\end{aligned}
\end{equation*}
Therefore,
\begin{align*}
& 2(II_t - I_t) \\
= & \int_{-1}^1\int_{\Re^L}\ell(\theta^{BE}_{\infty}(z_1,\cdots,z_L))f(z_1,\cdots,z_L;\xi_0)dz_1\cdots dz_L du  \\
& - \int_{-1}^1\int_{\Re^L}\ell(\tilde{\theta}^{BE}_{t}(z_1+tu,\cdots,z_L+tu;\xi_0)-tu)f(z_1,\cdots,z_L;\xi_0)dz_1\cdots dz_L du \\
\leq & 0, 
\end{align*}
or equivalently, 
$$0 \leq II_t \leq I_t \rightarrow 0.$$

This concludes \eqref{eq:RK}. If there exists a sequence of estimators, denoted as $\{\breve{\theta}_n\}$, such that $\breve{\theta}_n \in \Theta_n$ and it achieves strictly smaller asymptotic average risk than the quasi-Bayesian estimator $\theta_n^{BE}$, then for infinitely many $t$ and $n$,
$$AR_{\ell,K_t}(\breve{\theta}_n) < AR_{\ell,K_t}(\tilde{\theta}_{t}^{BE}).$$
This is a contradiction because, by construction,
$$\tilde{\theta}_{t}^{BE}(\cdot) \in \argmin_{\theta \in \Theta_n}AR_{\ell,K_t}(\theta).$$
This concludes the proof.

\section{Proof of Corollary \ref{cor:posterior_quantile}}
\label{sec:pf_post}
Denote $\hat{Z}^{BE}_n(\tau') = \hat{\alpha}_n(\hat{q}^{BE}(\tau') - q(1) = \theta_n^{BE}(\tilde{Z}_n(k_1),\cdots,\tilde{Z}_n(k_L);\hat{\xi})$. Then we have
$$\mathbb{P}(\hat{q}^{BE}(\tau') > q(1) ) =  \mathbb{P}(\hat{Z}^{BE}_n(\tau') > 0) \rightarrow \mathbb{P}(Z^{BE}_\infty(\tau') > 0).$$
Next, we show
$$\mathbb{P}( Z^{BE}_\infty(\tau') > 0) = \tau'.$$
Suppose not, then there exists a nonzero constant $c$ such that $\mathbb{P}( Z^{BE}_\infty(\tau') > c) = \tau'$ or equivalently, by the first order condition,
$$\mathbb{E}\tilde{\ell}_{\tau'}( Z^{BE}_\infty(\tau') - c) < \mathbb{E}\tilde{\ell}_{\tau'}( Z^{BE}_\infty(\tau')),$$
where the loss function $\tilde{\ell}_{\tau'}(\cdot)$ is defined in Corollary \ref{cor:posterior_quantile}. Similar to the proof of the first result in Theorem \ref{thm:optimal}, we can show $\mathbb{E}\tilde{\ell}_{\tau'}( Z^{BE}_\infty(\tau') - c)$ is the asymptotic average risk for the estimator $\theta_n^{BE}(\cdot;\hat{\xi})  - c$, i.e.,
$$AAR_{\tilde{\ell}_{\tau'}}(\{\theta_n^{BE}(\cdot;\hat{\xi})  - c\}) = \mathbb{E}\tilde{\ell}_{\tau'}( Z^{BE}_\infty(\tau') - c)  < \mathbb{E}\tilde{\ell}_{\tau'}( Z^{BE}_\infty(\tau')) =  AAR_{\tilde{\ell}_{\tau'}}(\{\theta_n^{BE}(\cdot;\hat{\xi})\}).$$
On the other hand, $\theta_n^{BE}(\cdot;\hat{\xi})  - c \in \Theta_n$. Therefore, we reach a contradiction to the second result in Theorem \ref{thm:optimal}. This implies
$$\mathbb{P}( Z^{BE}_\infty(\tau') > 0) = \tau'.$$
Then, for $\tau' < \tau''$
\begin{align*}
& \mathbb{P}(\hat{q}^{BE}(\tau') \leq q(1) \leq \hat{q}^{BE}(\tau'')) \\
= & 1- \mathbb{P}(q(1) > \hat{q}^{BE}(\tau'')~\text{or}~q(1)<\hat{q}^{BE}(\tau') ) \\
= & 1- \mathbb{P}(q(1) > \hat{q}^{BE}(\tau'')) - \mathbb{P}(q(1)<\hat{q}^{BE}(\tau')) + \mathbb{P}(\hat{q}^{BE}(\tau')>q(1) > \hat{q}^{BE}(\tau'')) \\
= & \mathbb{P}( \hat{Z}^{BE}_n(\tau'') > 0) - \mathbb{P}( \hat{Z}^{BE}_n(\tau') > 0) \\
\rightarrow & \mathbb{P}( Z^{BE}_\infty(\tau'') > 0) - \mathbb{P}( Z^{BE}_\infty(\tau') > 0) = \tau'' - \tau',
\end{align*}
where the third equality holds due to the fact that, by construction, $\hat{q}^{BE}(\tau')$, the $\tau'$-th posterior quantile, is less than or equal to $\hat{q}^{BE}(\tau'')$, the $\tau''$-th posterior quantile, as $\tau''>\tau'$.

\section{Proof of Proposition \ref{prop:PDF}	}
\label{sec:pf_PDF}
We consider the CDF evaluated at $(u_1,\cdots,u_L)$ such that $u_1 < u_2, \cdots, < u_L.$  Note that
$$Z_\infty(k) = - \mathcal{J}_{\lceil k \rceil} = -(\gamma_{1}^{\lceil k \rceil}/p)^{-\xi},$$
where $\gamma_{i}^j = \sum_{l=i}^{j}\mathcal{E}_l$. Therefore,
\begin{equation}
\begin{aligned}
& \mathbb{P}(\tilde{Z}_\infty(k_1 ) \leq u_1,\cdots,\tilde{Z}_\infty( k_L ) \leq u_L) \\
= & \mathbb{E}\mathbb{P}(\tilde{Z}_\infty( k_1 ) \leq u_1,\cdots,\tilde{Z}_\infty( k_L ) \leq u_L|\gamma_1^{\lceil k_0 \rceil},\gamma_{1}^{\lceil mk_0 \rceil}) \\
= & \mathbb{E}\mathbb{P}\biggl(\frac{(\gamma_{1}^{\lceil k_1 \rceil})^{-\xi}}{(\gamma_{1}^{\lceil mk_0 \rceil})^{-\xi} - (\gamma_{1}^{\lceil k_0 \rceil})^{-\xi}} \leq u_1,\cdots,\frac{(\gamma_{1}^{\lceil k_L \rceil})^{-\xi}}{(\gamma_{1}^{\lceil mk_0 \rceil})^{-\xi} - (\gamma_{1}^{\lceil k_0 \rceil})^{-\xi}} \leq u_L\biggl|\gamma_1^{\lceil k_0 \rceil},\gamma_{1}^{\lceil mk_0 \rceil}\biggr) \\
= & \mathbb{E}\mathbb{P}\biggl(\gamma_{\lceil mk_0 \rceil+1}^{\lceil k_1 \rceil} \leq [u_1((\gamma_{1}^{\lceil mk_0 \rceil})^{-\xi} - (\gamma_{1}^{\lceil k_0 \rceil})^{-\xi})]^{-1/\xi} - \gamma_{1}^{\lceil mk_0 \rceil},\cdots, \\
& \gamma_{\lceil mk_0 \rceil+1}^{\lceil k_L \rceil} \leq [u_L((\gamma_{1}^{\lceil mk_0 \rceil})^{-\xi} - (\gamma_{1}^{\lceil k_0 \rceil})^{-\xi})]^{-1/\xi} - \gamma_{1}^{\lceil mk_0 \rceil}|\gamma_1^{\lceil k_0 \rceil},\gamma_{1}^{\lceil mk_0 \rceil}\biggr) \\
\label{eq:PDF1}
\end{aligned}
\end{equation}
Notice that
$$(\gamma_{\lceil mk_0 \rceil+1}^{\lceil k_1 \rceil},\cdots,\gamma_{\lceil mk_0 \rceil+1}^{\lceil k_L \rceil}) \indep (\gamma_1^{\lceil k_0 \rceil},\gamma_{1}^{\lceil mk_0 \rceil}).$$
Let $s= \gamma_{1}^{\lceil k_0 \rceil}$, $t = \gamma_{\lceil k_0 \rceil+1}^{\lceil mk_0 \rceil}$, $\tilde{u} = (t+s)^{-\xi} - s^{-\xi}$, respectively. Then,
\begin{equation*}
\begin{aligned}
& \text{ The RHS of \eqref{eq:PDF1}} \\
= & \int \mathbb{P}\biggl(\gamma_{\lceil mk_0 \rceil+1}^{\lceil k_1 \rceil} \leq (u_1\tilde{u}(t,s))^{-1/\xi} - t,\cdots,\gamma_{\lceil mk_0 \rceil+1}^{\lceil k_L \rceil} \leq (u_L\tilde{u}(t,s))^{-1/\xi} - t\biggr)\\
& \times f_{\lceil k_0 \rceil}(s)f_{\lceil mk_0 \rceil -\lceil k_0 \rceil}(t) dsdt.
\end{aligned}
\end{equation*}
Take derivatives w.r.t. $(u_1,\cdots,u_L)$, we obtain that
\begin{equation*}
\begin{aligned}
& f(u_1,\cdots,u_L;\xi) \\
= & \int (-1/\xi)^L \tilde{u}(t,s)^{-L/\xi} \biggl[\prod_{l=1}^Lu_l^{-1/\xi-1}f_{h_l - h_{l-1}}(v_l - v_{l-1})\biggr]  f_{\lceil k_0 \rceil}(s)f_{\lceil mk_0 \rceil -\lceil k_0 \rceil}(t) dsdt,
\end{aligned}
\end{equation*}
where $h_l = \lceil k_l \rceil$ for $L \geq l \geq 1$, $h_0 = \lceil mk_0 \rceil$, $v_l = (u_l\tilde{u}(t,s))^{-1/\xi}$ for $L \geq l \geq 1$, and $v_0 = t$.

\section{The Computation of the Three Existing Methods}
\label{sec:compute-estimators}
We compute the three estimators in the literature  based on the instructions in \cite{DLN17}. The details are listed below. 
\begin{itemize}
	\item Moment frontier estimator (``Mom'')
	\begin{itemize}
		\item The built-in EV index estimator is computed using the function \textbf{rho\_momt\_pick
		}with argument \textbf{method = ``moment"}.
		\item The tuning parameter $k_{n}$ involved in the estimation of the EV index is computed by the
		function \textbf{kopt\_momt\_pick} with \textbf{method = ``moment''} and estimated EV index.
		\item Based on the above estimation, \textbf{dfs\_momt} is used to compute
		the estimator and the corresponding 95\% confidence interval for the production frontier. 
	\end{itemize}
	\item Pickands frontier estimator (``Momt\_pick")
	\begin{itemize}
		\item The built-in EV index is estimated using function \textbf{rho\_momt\_pick} with argument \textbf{method = ``pickands''}.
		\item The tuning parameter $k_{n}$ involved in the estimation of the EV index is computed by the
		function \textbf{kopt\_momt\_pick} with \textbf{method = ``pickands''} and the estimated EV index.
		\item Based on the above estimation, \textbf{dfs\_pick} is used to compute the estimator and the corresponding 95\% confidence interval for the production frontier.\footnote{Based on \cite{DFS10}, the expressions for the asymptotic variance of the Pickands frontier estimator are different depending on whether the EV index is estimated or not. Since we estimate the EV index, we use the expression of $V_2(\rho_x)$ in \citet[Theorem 2.5]{DFS10}. } 	\end{itemize}
	\item Probability-weighted moment frontier estimator ("Pwm")
	\begin{itemize}
		\item The built-in EV index estimator is computed using \textbf{rho\_pwm} with the default arguments.
		\item The tuning parameter $k_n$ involved in the estimation of the EV index is computed by \textbf{mopt\_pwm} with default arguments.
		\item Based on the above estimation, \textbf{dfs\_pwm } is used to compute the estimator and the corresponding 95\% confidence interval for the production frontier is constructed via bootstrap following the procedure described in \cite{DFS12}.\footnote{Although the R package \textbf{npbr} produces the analytical confidence interval for the probability-weighted estimator as established in \cite{DFS12}, we follow the practice in \cite{DFS12} and conduct bootstrap inference. In our simulation study, we find that the bootstrap inference has better performance in terms of coverage rates. }
	\end{itemize}
\end{itemize}

\section{The Numerical Evaluation of the Density $f\left(u_{1},\dots,u_{L};\xi\right)$}
\label{sec:PDF}
In this section, we introduce the procedure to evaluate the value of $f\left(u_{1},\dots,u_{L};\xi\right)$ established in 
Proposition \ref{prop:PDF}. We use the simple
Trapezoid rule to evaluate the integrals with fine grids. The detailed procedure is as follows. 
\begin{itemize}
	\item Let $p^{right}=0.99999$ and $p^{left}=0.00001$. Obtain the $(p^{left},p^{right})$ quantiles of random variables with densities $f_{\lceil k_0 \rceil}\left(s\right)$ and $f_{\lceil mk_0 \rceil-\lceil k_0 \rceil}\left(t\right)$, and denote them as $(Q_{1}^{left}, Q_{1}^{right})$ and $(Q_{2}^{left}, Q_{2}^{right})$, respectively. 
	\item Construct a $I_{1}\times I_{1}$ grid $G$ for the rectangle area $[Q_{1}^{left},Q_{1}^{right}] \times [Q_{2}^{left},Q_{2}^{right}]$. Further denote $g_{i}^{1}=Q_{1}^{left}+i\times\frac{Q_{1}^{right}-Q_{1}^{left}}{I_{1}-1}$,
	$g_{j}^{2}=Q_{2}^{left}+j\times\frac{Q_{2}^{right}-Q_{2}^{left}}{I_{1}-1}$ for $i,j = 0,\cdots,I_1-1$, and 
	\begin{align*}
	\tilde{f}_{i,j}= & \left(-\frac{1}{\xi}\right)^{L}\tilde{u}\left(g_{i}^{1},g_{j}^{2}\right)^{-\frac{L}{\xi}}\left[\prod_{l=1}^{L}u_{l}^{-\frac{1}{\xi}-1}f_{h_{l}-h_{l-1}}\left(v_{l}-v_{l-1}\right)\right]\\
	& \times f_{\lceil k_0 \rceil}\left(g_{i}^{1}\right)f_{\lceil mk_0 \rceil-\lceil k_0 \rceil}\left(g_{j}^{2}\right).
	\end{align*}
	\item Evaluate the density $f\left(u_{1},\dots,u_{L};\xi\right)$ numerically, i.e.,
	\begin{align*}
	\hat{f}\left(u_{1},\dots,u_{L};\xi\right)= & \frac{\left(Q_{1}^{right}-Q_{1}^{left}\right)\left(Q_{2}^{right}-Q_{2}^{left}\right)}{\left(I_{1}-1\right)\left(I_{1}-1\right)}\\
	& \times\sum_{i=0}^{I_{1}-1}\sum_{j=0}^{I_{1}-1}\frac{1}{4}\left[\tilde{f}_{i,j}+\tilde{f}_{i,j+1}+\tilde{f}_{i+1,j}+\tilde{f}_{i+1,j+1}\right].
	\end{align*}
\end{itemize}
For implementation, we let $I_{1}=100$. Based on our simulation experience, such numerical integration is much faster and more accurate
than the usual Monte Carlo method with $100,000$ random draws. 

\section{Some calculation of $\xi$ for production and cost frontiers}\label{sec:cal_xi}
In this section, we show the calculation of $\xi$ for DGP 1 using the first
$f$ and the first $\mathcal{U}$. We also obtain $\xi$ for DGP 1 when we try to
estimate the cost function, based on what we propose in the application.
Throughout this section, we assume all functions are smooth enough, and limits
exist so that we could apply L'Hospital's rule. The calculations here can be easily extended to other DGPs. It turns our $\xi$ are the same for production frontier and cost frontier for all 15 DGPs. We omit the details, due to similarity.

Before the calculations, we show a general result.

Suppose $Y=f\left(  X\right)  \mathcal{U}.$ The density of $\mathcal{U}$ is
$g\left(  u\right)  .$ Let $f^{-1}\left(  y\right)  $ denote the inverse of
$f\left(  x\right)  .$ Furthure $X\sim$Unif$\left(  0,1\right)  .$ Then by the
definition,%
\begin{align*}
F\left(  y|x\right)   &  =P\left(  \left.  f\left(  X\right)  \mathcal{U\leq
}y\right\vert X\leq x\right)  \\
&  =P\left(  \left.  X\leq f^{-1}\left(  \frac{y}{\mathcal{U}}\right)
\right\vert X\leq x\right)  \\
&  =\int_{0}^{1}\int_{0}^{f^{-1}\left(  \frac{y}{u}\right)  \wedge x}\frac
{1}{x}dtg\left(  u\right)  du=\frac{1}{x}\int_{0}^{1}\left(  f^{-1}\left(
\frac{y}{u}\right)  \wedge x\right)  g\left(  u\right)  du\\
&  =\frac{1}{x}\int_{0}^{\frac{y}{f\left(  x\right)  }}xg\left(  u\right)
du+\frac{1}{x}\int_{\frac{y}{f\left(  x\right)  }}^{1}f^{-1}\left(  \frac
{y}{u}\right)  g\left(  u\right)  du\\
&  =\int_{0}^{\frac{y}{f\left(  x\right)  }}g\left(  u\right)  du+\frac{1}%
{x}\int_{\frac{y}{f\left(  x\right)  }}^{1}f^{-1}\left(  \frac{y}{u}\right)
g\left(  u\right)  du
\end{align*}

Further,%
\begin{align*}
&  \lim_{z\rightarrow0}\frac{1-F\left(  f\left(  x\right)  -vz\right)
}{1-F\left(  f\left(  x\right)  -z\right)  }\\
&  =\lim_{z\rightarrow0}\frac{1-\int_{0}^{1-\frac{vz}{f\left(  x\right)  }%
	}g\left(  u\right)  du-\frac{1}{x}\int_{1-\frac{vz}{f\left(  x\right)  }}%
	^{1}f^{-1}\left(  \frac{f\left(  x\right)  -vz}{u}\right)  g\left(  u\right)
	du}{1-\int_{0}^{1-\frac{z}{f\left(  x\right)  }}g\left(  u\right)  du-\frac
	{1}{x}\int_{1-\frac{z}{f\left(  x\right)  }}^{1}f^{-1}\left(  \frac{f\left(
		x\right)  -z}{u}\right)  g\left(  u\right)  du}\\
&  \overset{\text{L'Hospital's rule}}{=}\lim_{z\rightarrow0}\frac{\frac
	{v}{f\left(  x\right)  }g\left(  1-\frac{vz}{f\left(  x\right)  }\right)
	-\frac{1}{x}\frac{v}{f\left(  x\right)  }f^{-1}\left(  f\left(  x\right)
	\right)  g\left(  1-\frac{vz}{f\left(  x\right)  }\right)  -\frac{1}{x}%
	\int_{1-\frac{vz}{f\left(  x\right)  }}^{1}\frac{\partial}{\partial z}%
	f^{-1}\left(  \frac{f\left(  x\right)  -vz}{u}\right)  g\left(  u\right)
	du}{\frac{1}{f\left(  x\right)  }g\left(  1-\frac{z}{f\left(  x\right)
	}\right)  -\frac{1}{x}\frac{1}{f\left(  x\right)  }f^{-1}\left(  f\left(
	x\right)  \right)  g\left(  1-\frac{z}{f\left(  x\right)  }\right)  -\frac
	{1}{x}\int_{1-\frac{z}{f\left(  x\right)  }}^{1}\frac{\partial}{\partial
		z}f^{-1}\left(  \frac{f\left(  x\right)  -z}{u}\right)  g\left(  u\right)
	du}\\
&  =\lim_{z\rightarrow0}\frac{\int_{1-\frac{vz}{f\left(  x\right)  }}^{1}%
	\frac{\partial}{\partial z}f^{-1}\left(  \frac{f\left(  x\right)  -vz}%
	{u}\right)  g\left(  u\right)  du}{\int_{1-\frac{z}{f\left(  x\right)  }}%
	^{1}\frac{\partial}{\partial z}f^{-1}\left(  \frac{f\left(  x\right)  -z}%
	{u}\right)  g\left(  u\right)  du}%
\end{align*}

Therefore,%
\[
\lim_{z\rightarrow0}\frac{1-F\left(  f\left(  x\right)  -vz\right)
}{1-F\left(  f\left(  x\right)  -z\right)  }=\lim_{z\rightarrow0}\frac
{\int_{1-\frac{vz}{f\left(  x\right)  }}^{1}\frac{\partial}{\partial z}%
	f^{-1}\left(  \frac{f\left(  x\right)  -vz}{u}\right)  g\left(  u\right)
	du}{\int_{1-\frac{z}{f\left(  x\right)  }}^{1}\frac{\partial}{\partial
		z}f^{-1}\left(  \frac{f\left(  x\right)  -z}{u}\right)  g\left(  u\right)  du}%
\]
For DGP 1, $Y=X\mathcal{U},$ $\mathcal{U}\sim$Unif$\left(  0,1\right)  ,$
$g\left(  u\right)  =1,$ $f\left(  x\right)  =x$ and $f^{-1}\left(  y\right)
=y.$ Using the above result,%

\[
\lim_{z\rightarrow0}\frac{\int_{1-\frac{vz}{x}}^{1}\frac{\partial}{\partial
		z}\left(  x-vz\right)  u^{-1}du}{\int_{1-\frac{z}{x}}^{1}\frac{\partial
	}{\partial z}\left(  x-z\right)  u^{-1}du}=\lim_{z\rightarrow0}\frac
{-v\int_{1-\frac{vz}{x}}^{1}u^{-1}du}{-\int_{1-\frac{z}{x}}^{1}u^{-1}du}%
=\lim_{z\rightarrow0}\frac{-\frac{v^{2}}{x}\left(  1-\frac{vz}{x}\right)
	^{-1}}{-\frac{1}{x}\left(  1-\frac{z}{x}\right)  ^{-1}}=v^{2}%
\]
Thus, by the definition of $\xi,$%
\[
\xi=-\frac{1}{2}.
\]

Now, we turn to the cost function. Again, we first present a general result,
then we apply it to DGP 1.

Let $\breve{Y}=-X,$ $\breve{X}=-Y$. Let $C\left(  \breve{X}\right)
=-f^{-1}\left(  -\breve{X}\right)  $. Then, by $Y=f\left(  X\right)  U,$%
\[
\breve{Y}=C\left(  \breve{X}U^{-1}\right)  ,\text{ }U\in\left[  0,1\right]  .
\]
We can alternatively let $\tilde{Y}=-X+M,$ $\tilde{X}=-Y+M$, where $M$ is a
large positive constant. Note this transformation gives the same value of
$\xi$ as that from $\breve{X}$ and $\breve{Y}.$ We obtain $\xi$ based on
$\breve{X}$ and $\breve{Y}.$

By definition,%

\begin{align*}
F\left(  \breve{y}|\breve{x}\right)   &  =P\left(  \left.  \breve
{Y}\mathcal{\leq}\breve{y}\right\vert \breve{X}\leq\breve{x}\right)  =P\left(
\left.  -X\mathcal{\leq}\breve{y}\right\vert -Y\leq\breve{x}\right)  \\
&  =P\left(  \left.  X\mathcal{\geq-}\breve{y}\right\vert Y\mathcal{\geq
	-}\breve{x}\right)  \\
&  =P\left(  \left.  X\mathcal{\geq-}\breve{y}\right\vert f\left(  X\right)
U\mathcal{\geq-}\breve{x}\right)  \\
&  =\frac{P\left(  X\mathcal{\geq-}\breve{y},f\left(  X\right)  U\mathcal{\geq
		-}\breve{x}\right)  }{P\left(  f\left(  X\right)  U\mathcal{\geq-}\breve
	{x}\right)  }\\
&  =\frac{1}{P\left(  f\left(  X\right)  U\mathcal{\geq-}\breve{x}\right)
}P\left(  X\mathcal{\geq-}\breve{y},X\mathcal{\geq}f^{-1}\left(  \frac
{-\breve{x}}{\mathcal{U}}\right)  \right)  \\
&  =\frac{1}{P\left(  f\left(  X\right)  U\mathcal{\geq-}\breve{x}\right)
}\int_{0}^{1}\int_{\mathcal{-}\breve{y}\vee f^{-1}\left(  \frac{-\breve{x}%
	}{\mathcal{U}}\right)  }^{1}dtg\left(  u\right)  du\\
&  =\frac{1}{P\left(  f\left(  X\right)  U\mathcal{\geq-}\breve{x}\right)
}\int_{0}^{1}\left(  1-\left(  \mathcal{-}\breve{y}\vee f^{-1}\left(
\frac{-\breve{x}}{u}\right)  \right)  \right)  g\left(  u\right)  du\\
&  =\frac{1}{P\left(  f\left(  X\right)  U\mathcal{\geq-}\breve{x}\right)
}\left[  \int_{0}^{-\frac{\breve{x}}{f\left(  -\breve{y}\right)  }}\left(
1-f^{-1}\left(  \frac{-\breve{x}}{u}\right)  \right)  g\left(  u\right)
du+\left(  1+\breve{y}\right)  \int_{-\frac{\breve{x}}{f\left(  -\breve
		{y}\right)  }}^{1}g\left(  u\right)  du\right]
\end{align*}
\ By the definition of $\xi,$%
\[
\frac{1-F\left(  -f^{-1}\left(  -\breve{x}\right)  -vz\right)  }{1-F\left(
	-f^{-1}\left(  -\breve{x}\right)  -z\right)  }\rightarrow v^{-1/\xi},
\]
and (note $P\left(  f\left(  X\right)  U\mathcal{\geq-}\breve{x}\right)  $ is
cancelled out)%
\begin{align*}
&  \lim_{z\rightarrow0}\frac{1-F\left(  -f^{-1}\left(  -\breve{x}\right)
	-vz\right)  }{1-F\left(  -f^{-1}\left(  -\breve{x}\right)  -z\right)  }\\
&  =\lim_{z\rightarrow0}\frac{1-\int_{0}^{-\frac{\breve{x}}{f\left(
			f^{-1}\left(  -\breve{x}\right)  +vz\right)  }}\left(  1-f^{-1}\left(
	\frac{-\breve{x}}{u}\right)  \right)  g\left(  u\right)  du-\left(
	1-f^{-1}\left(  -\breve{x}\right)  -vz\right)  \int_{-\frac{\breve{x}%
		}{f\left(  f^{-1}\left(  -\breve{x}\right)  +vz\right)  }}^{1}g\left(
	u\right)  du}{1-\int_{0}^{-\frac{\breve{x}}{f\left(  f^{-1}\left(  -\breve
			{x}\right)  +z\right)  }}\left(  1-f^{-1}\left(  \frac{-\breve{x}}{u}\right)
	\right)  g\left(  u\right)  du-\left(  1-f^{-1}\left(  -\breve{x}\right)
	-z\right)  \int_{-\frac{\breve{x}}{f\left(  f^{-1}\left(  -\breve{x}\right)
			+z\right)  }}^{1}g\left(  u\right)  du}\\
&  \overset{\text{L'Hospital's rule}}{=}\lim_{z\rightarrow0}\frac
{v\int_{-\frac{\breve{x}}{f\left(  f^{-1}\left(  -\breve{x}\right)
			+vz\right)  }}^{1}g\left(  u\right)  du}{\int_{-\frac{\breve{x}}{f\left(
			f^{-1}\left(  -\breve{x}\right)  +z\right)  }}^{1}g\left(  u\right)  du}\text{
	(first two derivatives cancelled out, similar to before).}%
\end{align*}

Therefore,%
\[
\lim_{z\rightarrow0}\frac{1-F\left(  -f^{-1}\left(  -\breve{x}\right)
	-vz\right)  }{1-F\left(  -f^{-1}\left(  -\breve{x}\right)  -z\right)  }%
=\lim_{z\rightarrow0}\frac{v\int_{-\frac{\breve{x}}{f\left(  f^{-1}\left(
			-\breve{x}\right)  +vz\right)  }}^{1}g\left(  u\right)  du}{\int%
	_{-\frac{\breve{x}}{f\left(  f^{-1}\left(  -\breve{x}\right)  +z\right)  }%
	}^{1}g\left(  u\right)  du}.
\]
We apply the above result for DGP 1 where $Y=X\mathcal{U},$ $\mathcal{U}\sim
$Unif$\left(  0,1\right)  ,$ $g\left(  u\right)  =1,$ $f\left(  x\right)  =x$
and $f^{-1}\left(  y\right)  =y.$ Then%

\[
\lim_{z\rightarrow0}\frac{v\int_{-\frac{\breve{x}}{f\left(  f^{-1}\left(
			-\breve{x}\right)  +vz\right)  }}^{1}g\left(  u\right)  du}{\int%
	_{-\frac{\breve{x}}{f\left(  f^{-1}\left(  -\breve{x}\right)  +z\right)  }%
	}^{1}g\left(  u\right)  du}=\lim_{z\rightarrow0}\frac{v\int_{\frac{-\breve{x}%
		}{-\breve{x}+vz}}^{1}du}{\int_{\frac{-\breve{x}}{-\breve{x}+z}}^{1}du}%
=\lim_{z\rightarrow0}\frac{v^{2}\frac{-\breve{x}}{\left(  -\breve
		{x}+vz\right)  ^{2}}}{\frac{-\breve{x}}{\left(  -\breve{x}+z\right)  ^{2}}%
}=v^{2}%
\]
So%
\[
\xi=-\frac{1}{2}.
\]

\section{Additional Simulation Results}\label{sec:moresimul}

\subsection{Addition Simulation Results}
\label{sec:addsim}
\setcounter{table}{4}
In this section, we report simulation results for DGPs$(1,2),(2,2),(3,2),(1,3),\cdots,(3,5)$ in Tables \ref{tabledgp4}--\ref{tabledgp15}. 

\begin{table}[H]
	\centering{}\centering {\caption{DGP(1,2)}
		\label{tabledgp4} }%
	\begin{adjustbox}{max width=\textwidth}

	\end{adjustbox}
	
	{ Notes: $k_L'=\min\{\lceil0.10n \hat{p}\rceil, 35\}, k_L=\min\{\lceil0.10n \hat{p}\rceil, 40\}$, and $k_L^{''}=\min\{\lceil0.10n \hat{p}\rceil, 45\}$. The coverage rates and average lengths of the CIs
		(in parentheses) are reported.}
\end{table}

\subsection{Sensitivity of $\pi(\overline{q})$}
\label{sec:simpi}
In this section, we examine the sensitive of the choice of  $\pi(\overline{q})$ in DGPs(1,1) and (2,1). The first and second main columns in Tables \ref{q1priordgp1} and \ref{q1priordgp2} report the results for the quasi-Bayesian method when we use $\pi_1(\overline{q})$ and $\pi_2(\overline{q})$ as the prior respectively, 
where $\pi_1(\overline{q})$ and $\pi_2(\overline{q})$ are normal with mean $\overline{q}^{*}$ and variance 1 and 1.5, respectively.
Note we set $k_0=$number of outliers $+2$, $sp=5$, and  $k_L'=\min\{\lceil0.10n \hat{p}\rceil, 35\}, k_L=\min\{\lceil0.10n \hat{p}\rceil, 40\}$, and $k_L^{''}=\min\{\lceil0.10n \hat{p}\rceil, 45\}$.

\begin{table}[H]
	\caption{Robustness check of $\pi(\overline{q})$ for DGP(1,1)}
	\label{q1priordgp1}\centering{}\centering { } \begin{adjustbox}{max
			width=\textwidth} %
		\begin{tabular}{r|ccc|ccc}
			\hline \hline 
			\multicolumn{7}{c}{Panel A: $x=1.5$}\\
			\hline
			& \multicolumn{3}{c|}{ $\pi_1(\overline{q})$} & \multicolumn{3}{c}{ $\pi_2(\overline{q})$}\\
			\hline
			& $k_L'$  & $k_L$  & $k_L^{''}$  & $k_L'$  & $k_L$  & $k_L^{''}$ \\
			\hline
			$n=500$ & 0.9650 & 0.9670 & 0.9650 & 0.9650 & 0.9670 & 0.9650 \\ 
			$n p_0 = 125$ & (0.5322) & (0.5302) & (0.5325) & (0.5322) & (0.5302) & (0.5325) \\ 
			$n=1000$ & 0.9800 & 0.9810 & 0.9800 & 0.9800 & 0.9810 & 0.9800 \\ 
			$n p_0 = 250$ & (0.3073) & (0.3072) & (0.3061) & (0.3073) & (0.3072) & (0.3061) \\ 
			$n=2000$ & 0.9730 & 0.9710 & 0.9710 & 0.9730 & 0.9710 & 0.9710 \\ 
			$n p_0 = 500$ & (0.2138) & (0.2117) & (0.2120) & (0.2138) & (0.2117) & (0.2120) \\ 
			$n=4000$ & 0.9650 & 0.9580 & 0.9550 & 0.9650 & 0.9580 & 0.9550 \\ 
			$n p_0 = 1000$ & (0.1477) & (0.1467) & (0.1458) & (0.1477) & (0.1467) & (0.1458) \\ 
			\hline \hline 
			\multicolumn{7}{c}{Panel B: $x=3.0$}\\
			\hline
			& \multicolumn{3}{c|}{$\pi_1(\overline{q})$} & \multicolumn{3}{c}{ $\pi_2(\overline{q})$}\\
			\hline
			& $k_L'$  & $k_L$  & $k_L^{''}$  & $k_L'$  & $k_L$  & $k_L^{''}$ \\
			\hline
			$n=500$ & 0.9540 & 0.9550 & 0.9540 & 0.9540 & 0.9550 & 0.9540 \\ 
			$n p_0 = 250$ & (0.5348) & (0.5343) & (0.5360) & (0.5348) & (0.5343) & (0.5360) \\ 
			$n=1000$ & 0.9800 & 0.9750 & 0.9720 & 0.9800 & 0.9750 & 0.9720 \\ 
			$n p_0 = 500$ & (0.3479) & (0.3457) & (0.3423) & (0.3479) & (0.3457) & (0.3423) \\ 
			$n=2000$ & 0.9520 & 0.9530 & 0.9510 & 0.9520 & 0.9530 & 0.9510 \\ 
			$n p_0 = 1000$ & (0.2461) & (0.2425) & (0.2409) & (0.2461) & (0.2425) & (0.2409) \\ 
			$n=4000$ & 0.9390 & 0.9440 & 0.9380 & 0.9390 & 0.9440 & 0.9380 \\ 
			$n p_0 = 2000$ & (0.1715) & (0.1686) & (0.1667) & (0.1715) & (0.1686) & (0.1667) \\ 
			\hline \hline 
			\multicolumn{7}{c}{Panel C: $x=4.5$}\\
			\hline
			& \multicolumn{3}{c|}{ $\pi_1(\overline{q})$} & \multicolumn{3}{c}{ $\pi_2(\overline{q})$}\\
			\hline
			& $k_L'$  & $k_L$  & $k_L^{''}$  & $k_L'$  & $k_L$  & $k_L^{''}$ \\
			\hline
			$n=500$ & 0.9680 & 0.9660 & 0.9630 & 0.9680 & 0.9660 & 0.9630 \\ 
			$n p_0 = 375$ & (0.5136) & (0.5121) & (0.5095) & (0.5136) & (0.5121) & (0.5095) \\ 
			$n=1000$ & 0.9750 & 0.9720 & 0.9730 & 0.9750 & 0.9720 & 0.9730 \\ 
			$n p_0 = 1000$ & (0.3528) & (0.3486) & (0.3465) & (0.3528) & (0.3486) & (0.3465) \\ 
			$n=2000$ & 0.9760 & 0.9720 & 0.9720 & 0.9760 & 0.9720 & 0.9720 \\ 
			$n p_0 = 1500$ & (0.2431) & (0.2395) & (0.2384) & (0.2431) & (0.2395) & (0.2384) \\ 
			$n=4000$ & 0.9830 & 0.9740 & 0.9750 & 0.9830 & 0.9740 & 0.9750 \\ 
			$n p_0 = 3000$ & (0.1719) & (0.1680) & (0.1665) & (0.1719) & (0.1680) & (0.1665) \\ 
			\hline
	\end{tabular}\end{adjustbox}
	
	{ Notes: The coverage rates and average lengths of the CIs (in parentheses)
		are reported.}
\end{table}

\begin{table}[H]
	\caption{Robustness check of $\pi(\overline{q})$ for DGP(2,1)}
	\label{q1priordgp2}\centering{}\centering { } \begin{adjustbox}{max
			width=\textwidth} %
		\begin{tabular}{r|ccc|ccc}
			\hline \hline 
			\multicolumn{7}{c}{Panel A: $x=1.5$}\\
			\hline
			& \multicolumn{3}{c|}{ $\pi_1(\overline{q})$} & \multicolumn{3}{c}{ $\pi_2(\overline{q})$}\\
			\hline
			& $k_L'$  & $k_L$  & $k_L^{''}$  & $k_L'$  & $k_L$  & $k_L^{''}$ \\
			\hline
			$n=500$ & 0.9690 & 0.9710 & 0.9720 & 0.9720 & 0.9730 & 0.9770 \\ 
			$n p_0 = 125$ & (0.8276) & (0.8277) & (0.8376) & (0.8293) & (0.8359) & (0.8269) \\ 
			$n=1000$ & 0.9810 & 0.9820 & 0.9770 & 0.9810 & 0.9810 & 0.9810 \\ 
			$n p_0 = 250$ & (0.4848) & (0.4826) & (0.4803) & (0.4836) & (0.4868) & (0.4861) \\ 
			$n=2000$ & 0.9620 & 0.9610 & 0.9590 & 0.9680 & 0.9710 & 0.9620 \\ 
			$n p_0 = 500$ & (0.3443) & (0.3405) & (0.3371) & (0.3450) & (0.3419) & (0.3384) \\ 
			$n=4000$ & 0.9500 & 0.9550 & 0.9500 & 0.9550 & 0.9480 & 0.9510 \\ 
			$n p_0 = 1000$ & (0.2466) & (0.2447) & (0.2415) & (0.2482) & (0.2459) & (0.2449) \\ 
			\hline \hline 
			\multicolumn{7}{c}{Panel B: $x=3.0$}\\
			\hline
			& \multicolumn{3}{c|}{ $\pi_1(\overline{q})$} & \multicolumn{3}{c}{ $\pi_2(\overline{q})$}\\
			\hline
			& $k_L'$  & $k_L$  & $k_L^{''}$  & $k_L'$  & $k_L$  & $k_L^{''}$ \\
			\hline
			$n=500$ & 0.9790 & 0.9790 & 0.9770 & 0.9840 & 0.9820 & 0.9830 \\ 
			$n p_0 = 250$ & (1.1597) & (1.1670) & (1.1541) & (1.1641) & (1.1572) & (1.1646) \\ 
			$n=1000$ & 0.9720 & 0.9690 & 0.9650 & 0.9700 & 0.9730 & 0.9680 \\ 
			$n p_0 = 500$ & (0.7865) & (0.7687) & (0.7593) & (0.7842) & (0.7705) & (0.7622) \\ 
			$n=2000$ & 0.9470 & 0.9450 & 0.9470 & 0.9510 & 0.9560 & 0.9530 \\ 
			$n p_0 = 1000$ & (0.5727) & (0.5644) & (0.5564) & (0.5749) & (0.5640) & (0.5573) \\ 
			$n=4000$ & 0.9360 & 0.9320 & 0.9360 & 0.9430 & 0.9400 & 0.9380 \\ 
			$n p_0 = 2000$ & (0.4238) & (0.4151) & (0.4128) & (0.4249) & (0.4173) & (0.4122) \\ 
			\hline \hline 
			\multicolumn{7}{c}{Panel C: $x=4.5$}\\
			\hline
			& \multicolumn{3}{c|}{ $\pi_1(\overline{q})$} & \multicolumn{3}{c}{ $\pi_2(\overline{q})$}\\
			\hline
			& $k_L'$  & $k_L$  & $k_L^{''}$  & $k_L'$  & $k_L$  & $k_L^{''}$ \\
			\hline
			$n=500$ & 0.9810 & 0.9820 & 0.9770 & 0.9800 & 0.9760 & 0.9850 \\ 
			$n p_0 = 375$ & (1.3457) & (1.3233) & (1.3391) & (1.3657) & (1.3399) & (1.3423) \\ 
			$n=1000$ & 0.9790 & 0.9760 & 0.9760 & 0.9830 & 0.9790 & 0.9800 \\ 
			$n p_0 = 1000$ & (0.9763) & (0.9581) & (0.9473) & (0.9830) & (0.9680) & (0.9507) \\ 
			$n=2000$ & 0.9680 & 0.9710 & 0.9640 & 0.9690 & 0.9720 & 0.9690 \\ 
			$n p_0 = 1500$ & (0.7086) & (0.6982) & (0.6920) & (0.7097) & (0.6981) & (0.6915) \\ 
			$n=4000$ & 0.9450 & 0.9440 & 0.9380 & 0.9450 & 0.9490 & 0.9470 \\ 
			$n p_0 = 3000$ & (0.5142) & (0.5032) & (0.4980) & (0.5166) & (0.5060) & (0.5003) \\ 
			\hline
	\end{tabular}\end{adjustbox}
	
	{ Notes: The coverage rates and average lengths of the CIs (in parentheses)
		are reported. }
\end{table}

\subsection{Sensitivity to $k_0$}
\label{sec:simk0}
In this section, we examine the sensitivity of the choice of $k_0$ in DGPs(1,1) and (2,1).  The first and second main columns in Tables \ref{k0dgp1} and \ref{k0dgp2} report the results for the quasi-Bayesian method with $k_{0}=$Number of outliers $+3$ and $k_{0}=$ Number of 
outliers $+4$, respectively. Note we set $sp=5$ and  $k_L'=\min\{\lceil0.10n \hat{p}\rceil, 35\}, k_L=\min\{\lceil0.10n \hat{p}\rceil, 40\}$, and $k_L^{''}=\min\{\lceil0.10n \hat{p}\rceil, 45\}$.

\begin{table}[H]
	\caption{Robustness check of $k_{0}$ for DGP(1,1)}
	\label{k0dgp1}%
	\centering{}\centering { } \begin{adjustbox}{max width=\textwidth}
		\begin{tabular}{r|ccc|ccc}
			\hline \hline 
			\multicolumn{7}{c}{Panel A: $x=1.5$}\\
			\hline
			& \multicolumn{3}{c|}{ $k_{0}=3+\sharp(\text{outliers})$} & \multicolumn{3}{c}{ $k_{0}=4+\sharp(\text{outliers})$}\\
			\hline
			& $k_L'$  & $k_L$  & $k_L^{''}$  & $k_L'$  & $k_L$  & $k_L^{''}$ \\
			\hline
			$n=500$ & 0.9870 & 0.9850 & 0.9870 & 0.9830 & 0.9860 & 0.9830 \\ 
			$n p_0 = 125$ & (0.4944) & (0.4972) & (0.4943) & (0.5786) & (0.580) & (0.5802) \\ 
			$n=1000$ & 0.9880 & 0.9880 & 0.9900 & 0.9910 & 0.9870 & 0.9900 \\ 
			$n p_0 = 250$ & (0.3174) & (0.3168) & (0.3173) & (0.3551) & (0.3526) & (0.3524) \\ 
			$n=2000$ & 0.9830 & 0.9850 & 0.9820 & 0.9910 & 0.9890 & 0.9890 \\ 
			$n p_0 = 500$ & (0.2273) & (0.2267) & (0.2263) & (0.2537) & (0.2504) & (0.2487) \\ 
			$n=4000$ & 0.9730 & 0.9710 & 0.9700 & 0.9850 & 0.9820 & 0.9840 \\ 
			$n p_0 = 1000$ & (0.1578) & (0.1563) & (0.1550) & (0.1756) & (0.1733) & (0.1717) \\ 
			\hline \hline 
			\multicolumn{7}{c}{Panel B: $x=3.0$}\\
			\hline
			& \multicolumn{3}{c|}{ $k_{0}=3+\sharp(\text{outliers})$} & \multicolumn{3}{c}{ $k_{0}=4+\sharp(\text{outliers})$}\\
			\hline
			& $k_L'$  & $k_L$  & $k_L^{''}$  & $k_L'$  & $k_L$  & $k_L^{''}$ \\
			\hline
			$n=500$ & 0.9560 & 0.9550 & 0.9540 & 0.9910 & 0.9930 & 0.9940 \\ 
			$n p_0 = 250$ & (0.5405) & (0.5339) & (0.5354) & (0.5722) & (0.5683) & (0.5675) \\ 
			$n=1000$ & 0.9630 & 0.9670 & 0.9620 & 0.9800 & 0.9810 & 0.9820 \\ 
			$n p_0 = 500$ & (0.3776) & (0.3730) & (0.3697) & (0.4062) & (0.4003) & (0.3950) \\ 
			$n=2000$ & 0.9620 & 0.9700 & 0.9650 & 0.9640 & 0.9710 & 0.9690 \\ 
			$n p_0 = 1000$ & (0.2671) & (0.2613) & (0.2582) & (0.2884) & (0.2817) & (0.2769) \\ 
			$n=4000$ & 0.9590 & 0.9610 & 0.9620 & 0.9670 & 0.9660 & 0.9700 \\ 
			$n p_0 = 2000$ & (0.1802) & (0.1779) & (0.1755) & (0.1994) & (0.1941) & (0.1916) \\ 
			\hline \hline 
			\multicolumn{7}{c}{Panel C: $x=4.5$}\\
			\hline
			& \multicolumn{3}{c|}{ $k_{0}=3+\sharp(\text{outliers})$} & \multicolumn{3}{c}{ $k_{0}=4+\sharp(\text{outliers})$}\\
			\hline
			& $k_L'$  & $k_L$  & $k_L^{''}$  & $k_L'$  & $k_L$  & $k_L^{''}$ \\
			\hline
			$n=500$ & 0.9720 & 0.9710 & 0.9690 & 0.9860 & 0.9870 & 0.9860 \\ 
			$n p_0 = 375$ & (0.5363) & (0.5282) & (0.5271) & (0.5721) & (0.5599) & (0.5577) \\ 
			$n=1000$ & 0.9610 & 0.9600 & 0.9600 & 0.9750 & 0.9780 & 0.9790 \\ 
			$n p_0 = 1000$ & (0.3875) & (0.3797) & (0.3756) & (0.4119) & (0.4011) & (0.3965) \\ 
			$n=2000$ & 0.9500 & 0.9470 & 0.9480 & 0.9650 & 0.9680 & 0.9640 \\ 
			$n p_0 = 1500$ & (0.2744) & (0.2687) & (0.2655) & (0.2910) & (0.2844) & (0.2816) \\ 
			$n=4000$ & 0.9420 & 0.9380 & 0.9410 & 0.9440 & 0.9460 & 0.9440 \\ 
			$n p_0 = 3000$ & (0.1897) & (0.1858) & (0.1830) & (0.2045) & (0.1986) & (0.1947) \\
			\hline
		\end{tabular}
	\end{adjustbox}
	
	{ Notes: The coverage rates and average lengths of the CIs (in parentheses)
		are reported. }
\end{table}

\begin{table}[H]
	\caption{Robustness check of $k_{0}$ for DGP(2,1)}
	\label{k0dgp2}%
	\centering{}\centering { } \begin{adjustbox}{max width=\textwidth}
		\begin{tabular}{r|ccc|ccc}
			\hline \hline 
			\multicolumn{7}{c}{Panel A: $x=1.5$}\\
			\hline
			& \multicolumn{3}{c|}{ $k_{0}=3+\sharp(\text{outliers})$} & \multicolumn{3}{c}{ $k_{0}=4+\sharp(\text{outliers})$}\\
			\hline
			& $k_L'$  & $k_L$  & $k_L^{''}$  & $k_L'$  & $k_L$  & $k_L^{''}$ \\
			\hline
			$n=500$ & 0.9770 & 0.9750 & 0.9730 & 0.9670 & 0.9740 & 0.9770 \\ 
			$n p_0 = 125$ & (0.7416) & (0.7415) & (0.7450) & (0.8529) & (0.8477) & (0.8501) \\ 
			$n=1000$ & 0.9840 & 0.9810 & 0.9820 & 0.9880 & 0.9890 & 0.9890 \\ 
			$n p_0 = 250$ & (0.4906) & (0.4899) & (0.490) & (0.5414) & (0.5373) & (0.5355) \\ 
			$n=2000$ & 0.9830 & 0.9840 & 0.9850 & 0.9940 & 0.9940 & 0.9910 \\ 
			$n p_0 = 500$ & (0.3668) & (0.3615) & (0.3594) & (0.4041) & (0.3977) & (0.3932) \\ 
			$n=4000$ & 0.9710 & 0.9690 & 0.9690 & 0.9830 & 0.9860 & 0.9870 \\ 
			$n p_0 = 1000$ & (0.2644) & (0.2615) & (0.2586) & (0.2916) & (0.2873) & (0.2843) \\ 
			\hline \hline 
			\multicolumn{7}{c}{Panel B: $x=3.0$}\\
			\hline
			& \multicolumn{3}{c|}{ $k_{0}=3+\sharp(\text{outliers})$} & \multicolumn{3}{c}{ $k_{0}=4+\sharp(\text{outliers})$}\\
			\hline
			& $k_L'$  & $k_L$  & $k_L^{''}$  & $k_L'$  & $k_L$  & $k_L^{''}$ \\
			\hline
			$n=500$ & 0.9610 & 0.9710 & 0.9690 & 0.9920 & 0.9870 & 0.9870 \\ 
			$n p_0 = 250$ & (1.1746) & (1.1580) & (1.1509) & (1.2520) & (1.2390) & (1.2286) \\ 
			$n=1000$ & 0.9910 & 0.9870 & 0.9860 & 0.9850 & 0.9910 & 0.9880 \\ 
			$n p_0 = 500$ & (0.8485) & (0.8354) & (0.8230) & (0.920) & (0.8957) & (0.8831) \\ 
			$n=2000$ & 0.9760 & 0.9740 & 0.9730 & 0.9860 & 0.9850 & 0.9880 \\ 
			$n p_0 = 1000$ & (0.6072) & (0.5980) & (0.5888) & (0.6650) & (0.6517) & (0.6399) \\ 
			$n=4000$ & 0.9360 & 0.9380 & 0.9370 & 0.9660 & 0.9630 & 0.9700 \\ 
			$n p_0 = 2000$ & (0.4363) & (0.4293) & (0.4249) & (0.4692) & (0.4594) & (0.4521) \\ 
			\hline \hline 
			\multicolumn{7}{c}{Panel C: $x=4.5$}\\
			\hline
			& \multicolumn{3}{c|}{ $k_{0}=3+\sharp(\text{outliers})$} & \multicolumn{3}{c}{ $k_{0}=4+\sharp(\text{outliers})$}\\
			\hline
			& $k_L'$  & $k_L$  & $k_L^{''}$  & $k_L'$  & $k_L$  & $k_L^{''}$ \\
			\hline
			$n=500$ & 0.9690 & 0.9710 & 0.9630 & 0.9840 & 0.9830 & 0.9800 \\ 
			$n p_0 = 375$ & (1.5038) & (1.4763) & (1.4495) & (1.6091) & (1.5643) & (1.530) \\ 
			$n=1000$ & 0.9580 & 0.9590 & 0.9590 & 0.9700 & 0.9740 & 0.9750 \\ 
			$n p_0 = 1000$ & (1.1084) & (1.0815) & (1.0649) & (1.1840) & (1.1445) & (1.1299) \\ 
			$n=2000$ & 0.9540 & 0.9460 & 0.9500 & 0.9600 & 0.9550 & 0.9570 \\ 
			$n p_0 = 1500$ & (0.7966) & (0.7761) & (0.7646) & (0.8544) & (0.8266) & (0.8133) \\ 
			$n=4000$ & 0.9150 & 0.9150 & 0.9150 & 0.9270 & 0.9220 & 0.9330 \\ 
			$n p_0 = 3000$ & (0.5627) & (0.5470) & (0.5381) & (0.6103) & (0.5880) & (0.5784) \\ 
			\hline
		\end{tabular}
	\end{adjustbox}
	
	{ Notes: The coverage rates and average lengths of the CIs (in parentheses)
		are reported. }
\end{table}

\subsection{Sensitivity to $sp$}
\label{sec:simsp}
In this section, we examine the sensitivity of the choice of $sp$ in  DGPs(1,1) and (2,1). The first and second main columns in  Tables
\ref{spdgp1} and \ref{spdgp2} report the results for $sp= 6$ and $sp=7$, respectively. 
Note we set $k_0=$number of outliers $+2$ and  $k_L'=\min\{\lceil0.10n \hat{p}\rceil, 35\}, k_L=\min\{\lceil0.10n \hat{p}\rceil, 40\}$, and $k_L^{''}=\min\{\lceil0.10n \hat{p}\rceil, 45\}$.
\begin{table}[H]
	\caption{Robustness check of $sp$ for DGP(1,1)}
	\label{spdgp1}
	\centering{}\centering { } \begin{adjustbox}{max
			width=\textwidth} %
		\begin{tabular}{r|ccc|ccc}
			\hline \hline 
			\multicolumn{7}{c}{Panel A: $x=1.5$}\\
			\hline
			& \multicolumn{3}{c|}{Quasi-Bayesian with $sp=6$} & \multicolumn{3}{c}{Quasi-Bayesian with $sp=7$}\\
			\hline
			& $k_L'$  & $k_L$  & $k_L^{''}$  & $k_L'$  & $k_L$  & $k_L^{''}$ \\
			\hline
			$n=500$ & 0.9760 & 0.9800 & 0.9780 & 0.9760 & 0.9800 & 0.9780 \\ 
			$n p_0 = 125$ & (0.4537) & (0.4552) & (0.4559) & (0.4537) & (0.4552) & (0.4559) \\ 
			$n=1000$ & 0.9810 & 0.9750 & 0.9780 & 0.9810 & 0.9750 & 0.9780 \\ 
			$n p_0 = 250$ & (0.2964) & (0.2968) & (0.2964) & (0.2964) & (0.2968) & (0.2964) \\ 
			$n=2000$ & 0.9650 & 0.9650 & 0.9620 & 0.9650 & 0.9650 & 0.9620 \\ 
			$n p_0 = 500$ & (0.2124) & (0.2115) & (0.2106) & (0.2124) & (0.2115) & (0.2106) \\ 
			$n=4000$ & 0.9590 & 0.9620 & 0.9630 & 0.9590 & 0.9620 & 0.9630 \\ 
			$n p_0 = 1000$ & (0.1475) & (0.1453) & (0.1452) & (0.1475) & (0.1453) & (0.1452) \\
			\hline \hline 
			\multicolumn{7}{c}{Panel B: $x=3.0$}\\
			\hline
			& \multicolumn{3}{c|}{Quasi-Bayesian with $sp=6$} & \multicolumn{3}{c}{Quasi-Bayesian with $sp=7$}\\
			\hline
			& $k_L'$  & $k_L$  & $k_L^{''}$  & $k_L'$  & $k_L$  & $k_L^{''}$ \\
			\hline
			$n=500$ & 0.9140 & 0.9140 & 0.9160 & 0.9140 & 0.9140 & 0.9160 \\ 
			$n p_0 = 250$ & (0.5182) & (0.5130) & (0.5138) & (0.5182) & (0.5130) & (0.5138) \\ 
			$n=1000$ & 0.9600 & 0.9520 & 0.9500 & 0.9600 & 0.9520 & 0.9500 \\ 
			$n p_0 = 500$ & (0.3608) & (0.3581) & (0.3551) & (0.3608) & (0.3581) & (0.3551) \\ 
			$n=2000$ & 0.9550 & 0.9650 & 0.9610 & 0.9550 & 0.9650 & 0.9610 \\ 
			$n p_0 = 1000$ & (0.2522) & (0.2479) & (0.2458) & (0.2522) & (0.2479) & (0.2458) \\ 
			$n=4000$ & 0.9480 & 0.9480 & 0.9450 & 0.9480 & 0.9480 & 0.9450 \\ 
			$n p_0 = 2000$ & (0.1708) & (0.1684) & (0.1669) & (0.1708) & (0.1684) & (0.1669) \\ 
			\hline \hline 
			\multicolumn{7}{c}{Panel C: $x=4.5$}\\
			\hline
			& \multicolumn{3}{c|}{Quasi-Bayesian with $sp=6$} & \multicolumn{3}{c}{Quasi-Bayesian with $sp=7$}\\
			\hline
			& $k_L'$  & $k_L$  & $k_L^{''}$  & $k_L'$  & $k_L$  & $k_L^{''}$ \\
			\hline
			$n=500$ & 0.9550 & 0.9500 & 0.9440 & 0.9550 & 0.9500 & 0.9440 \\ 
			$n p_0 = 375$ & (0.5176) & (0.5125) & (0.5114) & (0.5176) & (0.5125) & (0.5114) \\ 
			$n=1000$ & 0.9400 & 0.9430 & 0.9330 & 0.9400 & 0.9430 & 0.9330 \\ 
			$n p_0 = 1000$ & (0.3735) & (0.3656) & (0.3618) & (0.3735) & (0.3656) & (0.3618) \\ 
			$n=2000$ & 0.9370 & 0.9310 & 0.9270 & 0.9370 & 0.9310 & 0.9270 \\ 
			$n p_0 = 1500$ & (0.2625) & (0.2575) & (0.2551) & (0.2625) & (0.2575) & (0.2551) \\ 
			$n=4000$ & 0.9440 & 0.9450 & 0.9380 & 0.9440 & 0.9450 & 0.9380 \\ 
			$n p_0 = 3000$ & (0.1808) & (0.1769) & (0.1747) & (0.1808) & (0.1769) & (0.1747) \\ 
			\hline
	\end{tabular}\end{adjustbox}
	
	{ Notes: The coverage rates and average lengths of the CIs (in parentheses)
		are reported. }
\end{table}

\begin{table}[H]
	\caption{Robustness check of $sp$ for DGP(2,1)}
	\label{spdgp2}\centering{}\centering { } \begin{adjustbox}{max
			width=\textwidth} %
		\begin{tabular}{r|ccc|ccc}
			\hline
			\hline
			\multicolumn{7}{c}{Panel A: $x=1.5$}\\
			\hline
			& \multicolumn{3}{c|}{Quasi-Bayesian with $sp=6$} & \multicolumn{3}{c}{Quasi-Bayesian with $sp=7$}\\
			\hline
			& $k_L'$  & $k_L$  & $k_L^{''}$  & $k_L'$  & $k_L$  & $k_L^{''}$ \\
			\hline
			$n=500$ & 0.9750 & 0.9720 & 0.9660 & 0.9750 & 0.9720 & 0.9660 \\ 
			$n p_0 = 125$ & (0.6902) & (0.6913) & (0.6889) & (0.6902) & (0.6913) & (0.6889) \\ 
			$n=1000$ & 0.9770 & 0.9790 & 0.9740 & 0.9770 & 0.9790 & 0.9740 \\ 
			$n p_0 = 250$ & (0.460) & (0.4586) & (0.4563) & (0.460) & (0.4586) & (0.4563) \\ 
			$n=2000$ & 0.9670 & 0.9710 & 0.9630 & 0.9670 & 0.9710 & 0.9630 \\ 
			$n p_0 = 500$ & (0.3442) & (0.3383) & (0.3368) & (0.3442) & (0.3383) & (0.3368) \\ 
			$n=4000$ & 0.9590 & 0.9610 & 0.9580 & 0.9590 & 0.9610 & 0.9580 \\ 
			$n p_0 = 1000$ & (0.2476) & (0.2448) & (0.2425) & (0.2476) & (0.2448) & (0.2425) \\ 
			\hline \hline 
			\multicolumn{7}{c}{Panel B: $x=3.0$}\\
			\hline
			& \multicolumn{3}{c|}{Quasi-Bayesian with $sp=6$} & \multicolumn{3}{c}{Quasi-Bayesian with $sp=7$}\\
			\hline
			& $k_L'$  & $k_L$  & $k_L^{''}$  & $k_L'$  & $k_L$  & $k_L^{''}$ \\
			\hline
			$n=500$ & 0.9680 & 0.9650 & 0.9640 & 0.9680 & 0.9650 & 0.9640 \\ 
			$n p_0 = 250$ & (1.1332) & (1.1175) & (1.1199) & (1.1332) & (1.1175) & (1.1199) \\ 
			$n=1000$ & 0.9770 & 0.9700 & 0.9740 & 0.9770 & 0.9700 & 0.9740 \\ 
			$n p_0 = 500$ & (0.8101) & (0.7948) & (0.7862) & (0.8101) & (0.7948) & (0.7862) \\ 
			$n=2000$ & 0.9610 & 0.9530 & 0.9570 & 0.9610 & 0.9530 & 0.9570 \\ 
			$n p_0 = 1000$ & (0.5774) & (0.5678) & (0.5616) & (0.5774) & (0.5678) & (0.5616) \\ 
			$n=4000$ & 0.9000 & 0.9030 & 0.9060 & 0.9000 & 0.9030 & 0.9060 \\ 
			$n p_0 = 2000$ & (0.4197) & (0.4134) & (0.4072) & (0.4197) & (0.4134) & (0.4072) \\ 
			\hline \hline 
			\multicolumn{7}{c}{Panel C: $x=4.5$}\\
			\hline
			& \multicolumn{3}{c|}{Quasi-Bayesian with $sp=6$} & \multicolumn{3}{c}{Quasi-Bayesian with $sp=7$}\\
			\hline
			& $k_L'$  & $k_L$  & $k_L^{''}$  & $k_L'$  & $k_L$  & $k_L^{''}$ \\
			\hline
			$n=500$ & 0.9470 & 0.9480 & 0.9400 & 0.9470 & 0.9480 & 0.9400 \\ 
			$n p_0 = 375$ & (1.4608) & (1.4285) & (1.3940) & (1.4608) & (1.4285) & (1.3940) \\ 
			$n=1000$ & 0.9550 & 0.9530 & 0.9520 & 0.9550 & 0.9530 & 0.9520 \\ 
			$n p_0 = 1000$ & (1.0613) & (1.0457) & (1.0291) & (1.0613) & (1.0457) & (1.0291) \\ 
			$n=2000$ & 0.9470 & 0.9510 & 0.9570 & 0.9470 & 0.9510 & 0.9570 \\ 
			$n p_0 = 1500$ & (0.7644) & (0.7449) & (0.7316) & (0.7644) & (0.7449) & (0.7316) \\ 
			$n=4000$ & 0.9470 & 0.9550 & 0.9610 & 0.9470 & 0.9550 & 0.9610 \\ 
			$n p_0 = 3000$ & (0.5162) & (0.5041) & (0.5006) & (0.5162) & (0.5041) & (0.5006) \\ 
			\hline
	\end{tabular}\end{adjustbox}
	
	{ Notes: The coverage rates and average lengths of the CIs (in parentheses)
		are reported.}
\end{table}

\subsection{Additional Simulation Results for Methods 
	\textquotedblleft Mom", \textquotedblleft Momt\_pick", and \textquotedblleft
	Pwm"}
\label{sec:addsim2}

In this section, we report additional results for
\textquotedblleft Mom", \textquotedblleft Momt\_pick", and \textquotedblleft
Pwm" with different EV index estimators in Tables \ref{tablenpbr1}--\ref{tablenpbr15}. Specifically, in those
tables, \textquotedblleft Pickands" and \textquotedblleft Built-in" mean the
EV index is computed by \textbf{rho\_momt\_pick} with the argument
\textbf{method = \textquotedblleft pickands\textquotedblright} and each
estimation method's built-in function as illustrated in Section
\ref{sec:compute-estimators}, respectively. The \textquotedblleft True Value"
method indicates that we simply use the infeasible true value of the EV index
for inference. The results in Tables \ref{tablenpbr1}--\ref{tablenpbr15} show that the three existing estimators mostly either under-cover or over-cover quite a bit even when
the true index is used.


\begin{table}[H]
	\centering{}\centering {\caption{DGP(1,1)}
		\label{tablenpbr1} }\begin{adjustbox}{max width=\textwidth} %
\end{adjustbox}
	
	{ Notes: The coverage rates and average lengths of the CIs (in parentheses)
		are reported.}
\end{table}

\bibliography{PF}

\end{document}